\documentclass[12pt]{article}
\usepackage{amsmath}
\usepackage{graphicx,psfrag,epsf}
\usepackage{enumerate}
\usepackage{natbib}
\usepackage{url} 

\newcommand{\blind}{0}

\addtolength{\oddsidemargin}{-.5in}%
\addtolength{\evensidemargin}{-.5in}%
\addtolength{\textwidth}{1in}%
\addtolength{\textheight}{1in}
\addtolength{\topmargin}{-.8in}%

\RequirePackage{amsthm,amsfonts}
\RequirePackage[colorlinks,citecolor=blue,urlcolor=blue]{hyperref}

\usepackage[noend]{algpseudocode}
\usepackage{algorithm}
\usepackage{mathtools}
\usepackage{colortbl}
\usepackage{comment}
\usepackage{multirow}

\usepackage{tikz}
\usetikzlibrary{patterns.meta}
\usepackage{nicematrix}
\usepackage{booktabs}

\usepackage{bbm}

\usepackage{wrapfig}

\usepackage{placeins}

\usepackage{ulem}

\usepackage{enumitem}

\newtheorem{theorem}{Theorem}
%
%
%

\newtheorem{assumption}{Assumption}
\newtheorem{lemma}{Lemma}

\DeclareMathOperator*{\argmax}{arg\,max}
\DeclareMathOperator*{\argmin}{arg\,min}
\newcommand{\code}[1]{\texttt{#1}}
\newcommand{\1}{\mathbf 1}
\newcommand{\C}{\mathcal C}

\newcommand{\R}{\mathbb R}
\newcommand{\E}{{\bf E}}

\renewcommand{\O}{\mathcal O}
\renewcommand{\S}{\mathbb S}
\renewcommand{\l}{{\ell}}
\newcommand{\M}{{\mathcal M}}
\newcommand{\K}{{\Tilde{K}}}

\renewcommand{\d}{\Tilde{d}}

\renewcommand{\P}{{\boldsymbol P}}
\renewcommand{\H}{\mathbb H}
\newcommand{\I}{\mathbf I}
\newcommand{\U}{\mathcal U}

\newcommand{\logodds}{\mbox{logodds}}

\newcommand{\scg}[1]{{\color{red}[SC:~~ #1]}}

\newcommand{\remind}[1]{{\color{red}[To do:~~ #1]}}

\newcommand{\ping}[1]{{\color{magenta}[addressed]}}
\newcommand{\newpart}[1]{#1}

\definecolor{lightblue}{RGB}{232, 244, 255}
\definecolor{lightgray}{RGB}{245, 246, 250}

\allowdisplaybreaks

\begin{document}

\def\spacingset#1{\renewcommand{\baselinestretch}%
{#1}\small\normalsize} \spacingset{1}


\if0\blind
{
  \title{\bf Network Cross-Validation and Model Selection via Subsampling}
  \author{Sayan Chakrabarty\\
    Department of Statistics, University of Michigan, USA\\
    Srijan Sengupta\\
    Department of Statistics, North Carolina State University, USA\\
    Yuguo Chen \\
    Department of Statistics, University of Illinois Urbana-Champaign, USA}
    \date{}
  \maketitle
} \fi

\if1\blind
{
  \bigskip
  \bigskip
  \bigskip
  \begin{center}
    {\LARGE\bf Network Cross-Validation and Model Selection via Subsampling}
\end{center}
  \medskip
} \fi

\bigskip
\begin{abstract}
Complex and larger networks are becoming increasingly prevalent in scientific applications in various domains. Although a number of models and methods exist for such networks, cross-validation on networks remains challenging due to the unique structure of network data.
In this paper, we propose a general cross-validation procedure called \code{NETCROP} (\textbf{NET}work \textbf{CR}oss-Validation using \textbf{O}verlapping \textbf{P}artitions --- \href{https://github.com/sayan-ch/NETCROP.git}{GitHub Repository}). The key idea is to divide the original network into multiple subnetworks with a shared overlap part, producing training sets consisting of the subnetworks and a test set with the node pairs between the subnetworks. This train-test split provides the basis for a network cross-validation procedure that can be applied on a wide range of model selection and parameter tuning problems for networks. The method is computationally efficient for large networks as it uses smaller subnetworks for the training step. We provide methodological details and theoretical guarantees for several model selection and parameter tuning tasks using \code{NETCROP}.
Numerical results demonstrate that \code{NETCROP} performs accurate cross-validation on a diverse set of network model selection and parameter tuning problems.
The results also indicate that \code{NETCROP} is computationally much faster while being
often more accurate than the existing methods for network cross-validation.
\end{abstract}

\noindent\textbf{Keywords:} Blockmodels, Large networks, Model selection, Network cross-validation, Network subsampling, Overlapping partitions, Random dot product graph

\spacingset{1.45} 

\section{Introduction}\label{sec:intro}
Most modern systems involve interactions between multiple agents,
often modelled using networks. 
Network data arise out of various scientific fields such as social networks \citep{fb-network},
biomedical networks 
\citep{komolafe2022scalable}, 
epidemiological networks
\citep{dasgupta2022scalable}, 
etc. 
Numerous network models, such as stochastic blockmodels \newpart{(SBMs, \citet{HOLLAND1983109})},
degree corrected blockmodels \newpart{(DCBMs, \citet{KarrerDCBM})}, 
random dot product graphs \newpart{(RDPGs, \citet{RDPG_survey})} 
and latent space models \citep{Hoff2002LatentSA} among others, are utilized in the analysis of network data. SBM and DCBM categorize nodes into communities, whereas RDPG and latent space models associate nodes with their positions in a latent space. 
\newpart{Estimation} techniques for network models encompass a spectrum of approaches, including spectral methods like spectral clustering and adjacency spectral embedding \citep{rohe, lei2015, bickel_sc, sengupta2015spectral}, 
likelihood-based methods \citep{likeli1, likelihood,senguptapabm},
and Markov chain Monte Carlo methods 
\citep{Hoff2002LatentSA, Handcock2007}, which are applied for parameter estimation in these models.

{
Given an observed network, selecting an appropriate network model and the corresponding methodology is an indispensable task.
Cross-validation is one of the most commonly used statistical tools for such model selection and parameter tuning problems for 
traditional
(non-network) data.} 
{In a standard cross-validation, the 
data is randomly split into a training set  and a test set. The candidate models (or tuning parameter values) are trained on the training set and evaluated on the test set to select the most appropriate one.}
{The effectiveness of standard cross-validation follows from its simple and flexible design that covers a variety of settings. For a modern take on cross-validation, and its advantages and disadvantages, see \citet{standard-cv} and the citations therein. However, cross-validation for model selection and parameter tuning in network data is very little explored. The major challenge for a network cross-validation method stems from the complex nature of the data. Usually, only one instance of the network is observed, which makes defining data points for cross-validation difficult. 
\newpart{\cite{cv_comm} developed an algorithm called Network Cross-Validation (\code{NCV}), where the nodes are split in multiple folds. For each fold, the rectangular submatrix of the network adjacency matrix corresponding to the nodes outside the fold is used as the training set, and the square submatrix corresponding to the nodes in the fold is used as the test set.} \code{NCV} is designed specifically for selecting the number of communities in blockmodels with consistency results only for SBM.
\cite{cv_edge} considered the node pairs as the data points in their 
network cross-validation algorithm,
called
Edge Cross-Validation (\code{ECV}), which was shown to consistently recover the number of communities in SBM and the latent space dimension in RDPG along with some additional numerical examples.


Both \code{NCV} and \code{ECV} suffer from certain drawbacks, leaving a gap in the network cross-validation literature. \code{NCV} takes each row of the adjacency matrix as a sampling unit and subsamples a proportion of the rows in each fold. 
For a network with $n$ nodes,
it still requires working on rectangular matrices of order $np \times n$ for some fraction $p$, making the training sets very large and the training steps very slow. It is also specific to estimating the number of communities in blockmodels.}
{In contrast, 
{as long as the network model satisfies a low rank mean assumption, \code{ECV} is applicable to a wider class of network cross-validation problems by} 
employing node pair sampling to extract a subset of the adjacency matrix as the training data and using matrix
completion algorithms to impute the missing entries. The candidate models are then estimated from the imputed matrix, and the held-out entries of the adjacency matrix serve as test data. However, if the original network is binary, the imputed values in \code{ECV} are very likely to be non-binary, making the method unsuitable for estimators that require a binary adjacency matrix as input, such as Bernoulli likelihood-based methods \citep{likeli2, Hoff2002LatentSA}. Instead, it is primarily applicable to methods that allow non-binary adjacency matrices, such as those based on spectral decomposition \citep{rohe, ms1}.
Furthermore, for the adjacency matrix estimation step to be accurate, it requires a large subsample of node pairs, usually around $90\%$ of all the node pairs in the network. }
{This can potentially cause overfitting of the network and slow computations 
for large networks. The matrix completion algorithm used in \code{ECV} affects its theoretical and computational performance as well. 
The authors of both algorithms recommended repeating their methods for $20$ times and choosing the most frequently occurring outcome model to stabilize the 
results. This stabilization slows down the algorithms even further.

In this paper, we develop a computationally efficient subsampling based network cross-validation procedure. The method is called \textbf{NET}work \textbf{CR}oss-Validation using \textbf{O}verlapping \textbf{P}artitions
(\code{NETCROP}). Given an observed network and a candidate set of models, \code{NETCROP} splits the network into suitably chosen training and test sets, fits all the candidate models on the training sets, and tests them on the test set using a suitable loss function. \code{NETCROP} first randomly subsamples a set of \textit{overlap nodes} and partitions the remaining nodes into $s$ \textit{non-overlap partitions}. Then it forms $s$ subsets of nodes, where each subset consists of the overlap nodes and the nodes from the corresponding non-overlap part. The $s$ subnetworks induced by the nodes in each of the $s$ subsets of nodes are the training subnetworks.
These subsamples are much smaller than the original samples, which makes the method scalable \citep{politis1999subsampling,sengupta2016subsampled,ganguly2023scalable}.
Every candidate model is fitted to each subnetwork to estimate the corresponding model parameters. The parameters are then matched using the overlap nodes.
For example, the community labels in a blockmodel are identifiable only up to the permutations of the labels, and the latent positions in RDPG or latent space models are identifiable up to their orthogonal rotations and translations. After matching the parameter estimates using the overlap part, we combine the estimates from the subnetworks. Then, the estimated parameters are used to predict the edge probabilities for the node pairs in test set. This test set consists of the node pairs between the non-overlap parts. Finally, for each candidate model, the test set prediction error is computed using a suitable loss function, and the model with the lowest test set loss is chosen as the outcome.}

{Since \code{NETCROP} splits the observed network into a training set and a test set, it inherits the flexible nature of standard cross-validation, making it adaptable to a wide range of network cross-validation problems including model selection and parameter tuning.}
{The unique design of \code{NETCROP} stabilizes the estimates of the trained model parameters by aggregating estimates from multiple subnetworks \citep{breiman1996bagging, Soloffbagging2024}, thus requiring much less number of repetitions than \code{NCV} and \code{ECV} for stability. 
\newpart{Recall from the last paragraph that some models can have unidentifiable parameters. The inclusion of the overlap nodes makes it possible to match the estimates of such unidentifiable parameters.}
These factors, combined, help \code{NETCROP} avoid the problem of overfitting, while making it significantly faster and more accurate than \code{NCV} and \code{ECV}. \code{NETCROP} is also naturally parallelizable as the computations on different subnetworks can be done on different processors in a multiprocessor system.}

{This paper proposes \code{NETCROP} as a versatile network cross-validation tool for model selection and parameter tuning problems. We establish the theoretical consistency of \code{NETCROP} for selecting the number of communities in
SBM, achieving slightly stronger bounds under milder assumptions compared to \code{NCV}. Additionally, we provide, to the best of our knowledge, the first theoretical consistency result for cross-validation in 
DCBM. Furthermore, we prove the theoretical consistency of \code{NETCROP} for selecting the latent space dimension in 
RDPG.
}
{Numerical examples include selecting the degree heterogeneity and the number of communities in blockmodels, the latent space dimensions in RDPG
and
latent space models, and the tuning parameter for regularized spectral clustering. Although several \newpart{model-based} methods exist for estimating the number of communities and assessing the goodness of fit for SBM and DCBM 
\citep{lei2016goodness, ma2021determining, cerqueira2023consistent, deng2023subsamplingbased},
there is a scarcity of general and computationally efficient cross-validation technique for network model selection and parameter tuning in the current literature. With this in consideration, we compare \code{NETCROP} only with the other network cross-validation algorithms (\code{NCV} and \code{ECV}) throughout the paper. 
All codes for \code{NETCROP} along with the scripts to replicate the numerical experiments in this paper are publicly available in this GitHub Repository: \url{https://github.com/sayan-ch/NETCROP.git}.
}

The paper is organized as follows. Section \ref{sec:method} contains an overview of \code{NETCROP}. In Section \ref{sec:example}, we discuss 
applications 
of \code{NETCROP} for four model selection and tuning parameter problems.
We present theoretical results for the SBM, DCBM, and RDPG cases in Section \ref{sec:theory}. The numerical performance of \code{NETCROP} in comparison with other network cross-validation methods for simulated and real networks
are in Section \ref{sec:numerical}. 
Section \ref{sec:discussion} contains discussion and concluding remarks on this paper.

\raggedbottom

\section{Methodology}\label{sec:method}
First, we introduce some notations that are used throughout the article. 
The term \textit{simple network} indicates an undirected, unweighted graph without any self-loops or multiple edges between nodes. Although \code{NETCROP} is described and the theoretical results are derived for simple networks, they can be extended mutatis mutandis for directed or weighted networks. Unless otherwise specified, the word \textit{network} refers to \newpart{a simple network} in this paper. 

Define $[n] = \{1, \ldots, n\}$ and $[n_1 : n_2] = \{n_1, \ldots, n_2\}$ for any natural numbers $n$ and $n_1 \leq n_2$. $I_n$, $\mathbf{1}_n$, and $J_n$ denote the $n \times n$ identity matrix, the $n \times 1$ vector of all $1$, and the $n \times n$ matrix of all $1$, respectively. The $n$ in the subscript is dropped whenever it is evident from the context. $\I(\cdot)$ denotes the indicator function of the event inside it. For $a \in \R^d$, $\lVert a \rVert$ is defined as the $\ell_2$ norm. For a
matrix 
\newpart{$M \in \R^{n \times d}$},
the spectral norm, the Frobenius norm, matrix $(2,1)$ and $(2,\infty)$ norms are respectively denoted by $\lVert M \rVert$, $\lVert M \rVert_F$, $\lVert M \rVert_{2,1}$ and $\lVert M \rVert_{2, \infty}$.
For any countable set $S$, $\lvert S \rvert$ denotes its cardinality.
For any two subsets $I \subset [n]$ and $J \subset [d]$, $M_{IJ}, M_{I\cdot}$ and $M_{\cdot J}$ denote the submatrices of $M$ with the rows in $I$ and the columns in $J$, the rows in $I$ and all the columns, and the columns in $J$ and all the rows, respectively. 
$M_{i \cdot}$ indicates the $i$th row of $M$, represented as a $d \times 1$ vector. 
We define the squared error loss as $\ell_2(a, p) = (a-p)^2$ for $a, p \in \R$.
The space of all permutation matrices of order $k \times k$ is denoted by $\H_k$. The space of all orthogonal matrices of order $d\times d$ is denoted by $\O_d$. 
For any two real sequences $\{a_n\}$ and $\{b_n\}$, $a_n = O(b_n)$ \newpart{if $\limsup\limits_{n \to \infty} \lvert a_n/b_n \rvert < \infty$} \newpart{and $a_n = \Omega(b_n)$ if $\liminf\limits_{n \to \infty} \lvert a_n/b_n \rvert > 0$}. 
\newpart{Define $b_n = \omega(a_n)$ as $\limsup\limits_{n \to \infty} \lvert a_n/b_n \rvert = 0$} and $a_n \asymp b_n$ as $a_n = O(b_n)$ and $b_n = O(a_n)$. 
For two sequences of random variables \newpart{$\{X_n\}$ and $\{Y_n\}$, $X_n = O_p(Y_n)$ or $Y_n = \Omega_p(X_n)$ if for every $\epsilon > 0$, there exist $M_{\epsilon}$ and $N_{\epsilon}$ such that $P\left( \lvert X_n/Y_n \rvert \leq M_{\epsilon} \right) > 1 - \epsilon$ for $n \geq N_{\epsilon}$.}

Given a network of size $n$, $A$ represents the corresponding $n \times n$ adjacency matrix. For any $(i,j) \in [n] \times [n]$, $A_{ij} = 1$ if there is an edge between nodes $i$ and $j$ and $0$ otherwise. The set of nodes in the network is denoted by $S = [n]$. A subnetwork induced by a subset of nodes $S_0 \subset [n]$ is represented by the subadjacency matrix $A_{S_0} \coloneqq A_{S_0S_0}$. Throughout the paper, we assume that $A_{ij}$'s are independent \newpart{(conditional on relevant model parameters)} Bernoulli random variables with the probability of observing an edge between two nodes $i$ and $j$
\newpart{being}
$E(A_{ij}) = P_{ij}$. The probability matrix, also called the expected adjacency matrix, is represented by $P = (P_{ij})_{(i,j) \in [n] \times [n]} = E(A)$. 
\newpart{We assume that the link probability matrix $P \coloneqq P_n = \rho_n P^0 \in [0, 1]^{n \times n}$ for some sparsity parameter $\rho_n > 0$ and a sequence of matrices $P^0 \coloneqq P^0_n \in [0, \infty)^{n \times n}$ with fixed entries and growing dimensions. Thus, the elements of $P = {P_n}$, a sequence of matrices with growing dimensions, depend on $n$ only through the sparsity parameter $\rho_n$. For all the theoretical results and derivations, the number of communities $K$ in SBM and DCBM, and the latent space dimension $d$ in RDPG are assumed to be fixed.} 

\newpart{For the proposed method \code{NETCROP} (to be defined in the following subsection), $s$ denotes the number of subnetworks, $o$ denotes the overlap size and $m$ denotes the size of a non-overlap partition throughout the paper. The three parameters are connected by the relation $m = (n-o)/s$.}



\raggedbottom
\subsection{\code{NETCROP} for Cross-Validation}\label{sec:method:gen_algo}
\code{NETCROP} 
\newpart{(outlined in Algorithm \ref{algo:gen})}
aims to select the best model that fits the network or the best tuning parameter for model fitting from a set of candidate models or parameters.
In this section, we describe \code{NETCROP} in the context of model selection as the same ideas can easily be modified for a parameter tuning problem.
With a set of $\kappa$ candidate models $\M = \{\M_1, \ldots, \M_{\kappa}\}$ and a loss function $\ell$ as inputs, \code{NETCROP} 
has four major steps - \textit{division}, \textit{model fitting}, \textit{stitching}, and \textit{loss computation}. 
{The first three steps constitute the model training phase of cross-validation, while the fourth step corresponds to the model testing phase.}
The division step divides the network into $s$ subnetworks that share a set of overlap nodes. 
The model fitting step fits each candidate model on each subnetwork. The stitching step combines the model parameter estimates from the subnetworks using the overlap part. Finally, for each candidate model, the stitched model parameter estimates are used to predict the edge probabilities of the node pairs in the test set, and the prediction error is computed as the sum of losses between the predicted edge probability and the observed adjacency matrix. The steps are described in the following paragraphs.


\begin{figure}[!ht]
    \centering
\tikzset{every picture/.style={line width=0.75pt}} 
\resizebox{!}{6cm}{
\begin{tikzpicture}[x=0.75pt,y=0.75pt,yscale=-1,xscale=1]
\tikzstyle{every node}=[font=\Large]

\draw  [color={rgb, 255:red, 255; green, 255; blue, 255 }  ,draw opacity=1 ][fill={rgb, 255:red, 74; green, 144; blue, 226 }  ,fill opacity=0.22 ][line width=0.75]  (65.04,43.55) -- (338.3,43.55) -- (338.3,316.81) -- (65.04,316.81) -- cycle ;
\draw  [color={rgb, 255:red, 255; green, 255; blue, 255 }  ,draw opacity=1 ][fill={rgb, 255:red, 74; green, 144; blue, 226 }  ,fill opacity=0.22 ][line width=0.75]  (185.13,164.55) -- (458.39,164.55) -- (458.39,437.81) -- (185.13,437.81) -- cycle ;
\draw  [color={rgb, 255:red, 255; green, 255; blue, 255 }  ,draw opacity=1 ][fill={rgb, 255:red, 232; green, 233; blue, 134 }  ,fill opacity=0.58 ][line width=1.5]  (185.13,164.55) -- (337.39,164.55) -- (337.39,316.81) -- (185.13,316.81) -- cycle ;
\draw  [color={rgb, 255:red, 255; green, 255; blue, 255 }  ,draw opacity=1 ][fill={rgb, 255:red, 80; green, 227; blue, 194 }  ,fill opacity=0.33 ][line width=1.5]  (337.39,43.55) -- (458.39,43.55) -- (458.39,164.55) -- (337.39,164.55) -- cycle ;
\draw  [color={rgb, 255:red, 255; green, 255; blue, 255 }  ,draw opacity=1 ][fill={rgb, 255:red, 80; green, 227; blue, 194 }  ,fill opacity=0.33 ][line width=1.5]  (64.13,316.81) -- (185.13,316.81) -- (185.13,437.81) -- (64.13,437.81) -- cycle ;
\draw   (36.49,48.99) .. controls (31.82,48.99) and (29.49,51.32) .. (29.49,55.99) -- (29.49,170.86) .. controls (29.49,177.53) and (27.16,180.86) .. (22.49,180.86) .. controls (27.16,180.86) and (29.49,184.19) .. (29.49,190.86)(29.49,187.86) -- (29.49,305.74) .. controls (29.49,310.41) and (31.82,312.74) .. (36.49,312.74) ;
\draw   (336.49,41.74) .. controls (336.49,37.07) and (334.16,34.74) .. (329.49,34.74) -- (211.44,34.74) .. controls (204.77,34.74) and (201.44,32.41) .. (201.44,27.74) .. controls (201.44,32.41) and (198.11,34.74) .. (191.44,34.74)(194.44,34.74) -- (73.39,34.74) .. controls (68.72,34.74) and (66.39,37.07) .. (66.39,41.74) ;
\draw [color={rgb, 255:red, 255; green, 255; blue, 255 }  ,draw opacity=1 ]   (63.68,164.55) -- (185.13,164.55) ;
\draw [color={rgb, 255:red, 255; green, 255; blue, 255 }  ,draw opacity=1 ]   (337.39,316.81) -- (458.84,316.81) ;
\draw [color={rgb, 255:red, 255; green, 255; blue, 255 }  ,draw opacity=1 ]   (337.39,316.81) -- (337.39,437.81) ;
\draw [color={rgb, 255:red, 255; green, 255; blue, 255 }  ,draw opacity=1 ]   (185.13,43.55) -- (185.13,164.55) ;
\draw   (459.75,433.28) .. controls (464.42,433.28) and (466.75,430.95) .. (466.75,426.28) -- (466.75,310.05) .. controls (466.75,303.38) and (469.08,300.05) .. (473.75,300.05) .. controls (469.08,300.05) and (466.75,296.72) .. (466.75,290.05)(466.75,293.05) -- (466.75,173.81) .. controls (466.75,169.14) and (464.42,166.81) .. (459.75,166.81) ;
\draw   (190.56,472.28) .. controls (190.56,476.95) and (192.89,479.28) .. (197.56,479.28) -- (313.34,479.28) .. controls (320.01,479.28) and (323.34,481.61) .. (323.34,486.28) .. controls (323.34,481.61) and (326.67,479.28) .. (333.34,479.28)(330.34,479.28) -- (449.12,479.28) .. controls (453.79,479.28) and (456.12,476.95) .. (456.12,472.28) ;
\draw   (65.49,46.27) .. controls (60.82,46.27) and (58.49,48.6) .. (58.49,53.27) -- (58.49,94.28) .. controls (58.49,100.95) and (56.16,104.28) .. (51.49,104.28) .. controls (56.16,104.28) and (58.49,107.61) .. (58.49,114.28)(58.49,111.28) -- (58.49,155.28) .. controls (58.49,159.95) and (60.82,162.28) .. (65.49,162.28) ;
\draw   (64.58,165.91) .. controls (59.91,165.91) and (57.58,168.24) .. (57.58,172.91) -- (57.58,229.32) .. controls (57.58,235.99) and (55.25,239.32) .. (50.58,239.32) .. controls (55.25,239.32) and (57.58,242.65) .. (57.58,249.32)(57.58,246.32) -- (57.58,305.74) .. controls (57.58,310.41) and (59.91,312.74) .. (64.58,312.74) ;
\draw   (189.66,440.53) .. controls (189.66,445.2) and (191.99,447.53) .. (196.66,447.53) -- (253.07,447.53) .. controls (259.74,447.53) and (263.07,449.86) .. (263.07,454.53) .. controls (263.07,449.86) and (266.4,447.53) .. (273.07,447.53)(270.07,447.53) -- (329.49,447.53) .. controls (334.16,447.53) and (336.49,445.2) .. (336.49,440.53) ;
\draw   (339.21,440.53) .. controls (339.21,445.2) and (341.54,447.53) .. (346.21,447.53) -- (387.67,447.53) .. controls (394.34,447.53) and (397.67,449.86) .. (397.67,454.53) .. controls (397.67,449.86) and (401,447.53) .. (407.67,447.53)(404.67,447.53) -- (449.12,447.53) .. controls (453.79,447.53) and (456.12,445.2) .. (456.12,440.53) ;
\draw  [color={rgb, 255:red, 255; green, 255; blue, 255 }  ,draw opacity=1 ][fill={rgb, 255:red, 74; green, 144; blue, 226 }  ,fill opacity=0.22 ][line width=0.75]  (499.59,111) -- (543.09,111) -- (543.09,154.5) -- (499.59,154.5) -- cycle ;
\draw  [color={rgb, 255:red, 255; green, 255; blue, 255 }  ,draw opacity=1 ][fill={rgb, 255:red, 232; green, 233; blue, 134 }  ,fill opacity=0.58 ][line width=1.5]  (518.7,130.26) -- (542.94,130.26) -- (542.94,154.5) -- (518.7,154.5) -- cycle ;

\draw  [color={rgb, 255:red, 255; green, 255; blue, 255 }  ,draw opacity=1 ][fill={rgb, 255:red, 74; green, 144; blue, 226 }  ,fill opacity=0.22 ][line width=0.75]  (499,177.85) -- (543.8,177.85) -- (543.8,220.5) -- (499,220.5) -- cycle ;
\draw  [color={rgb, 255:red, 255; green, 255; blue, 255 }  ,draw opacity=1 ][fill={rgb, 255:red, 232; green, 233; blue, 134 }  ,fill opacity=0.58 ][line width=1.5]  (499,177.85) -- (523.96,177.85) -- (523.96,201.62) -- (499,201.62) -- cycle ;

\draw  [color={rgb, 255:red, 255; green, 255; blue, 255 }  ,draw opacity=1 ][fill={rgb, 255:red, 232; green, 233; blue, 134 }  ,fill opacity=0.58 ][line width=1.5]  (500,40.04) -- (546,40.04) -- (546,86.04) -- (500,86.04) -- cycle ;
\draw  [color={rgb, 255:red, 255; green, 255; blue, 255 }  ,draw opacity=1 ][fill={rgb, 255:red, 80; green, 227; blue, 194 }  ,fill opacity=0.33 ][line width=1.5]  (500,249) -- (545,249) -- (545,293.04) -- (500,293.04) -- cycle ;

\draw (545,178.4) node [anchor=north west][inner sep=0.75pt]    {$ \begin{array}{l}
\text{Training subnetwork} \ \\
S_{02} \times S_{02}
\end{array}$};
\draw (95.59,94.32) node [anchor=north west][inner sep=0.75pt]    {$S_{1} \times S_{1}$};
\draw (95.59,229.36) node [anchor=north west][inner sep=0.75pt]    {$S_{0} \times S_{1}$};
\draw (372.93,369.85) node [anchor=north west][inner sep=0.75pt]    {$S_{2} \times S_{2}$};
\draw (-10,170.45) node [anchor=north west][inner sep=0.75pt]    {$S_{01}$};
\draw (191.21,0) node [anchor=north west][inner sep=0.75pt]    {$S_{01}$};
\draw (472.17,290.99) node [anchor=north west][inner sep=0.75pt]    {$S_{02}$};
\draw (311.75,490.26) node [anchor=north west][inner sep=0.75pt]    {$S_{02}$};
\draw (235.17,93.41) node [anchor=north west][inner sep=0.75pt]    {$S_{1} \times S_{0}$};
\draw (228.82,229.36) node [anchor=north west][inner sep=0.75pt]    {$S_{0} \times S_{0}$};
\draw (372.02,230.27) node [anchor=north west][inner sep=0.75pt]    {$S_{0} \times S_{2}$};
\draw (236.98,371.66) node [anchor=north west][inner sep=0.75pt]    {$S_{2} \times S_{0}$};
\draw (30,93.41) node [anchor=north west][inner sep=0.75pt]    {$S_{1}$};
\draw (30,229.36) node [anchor=north west][inner sep=0.75pt]    {$S_{0}$};
\draw (30,372.57) node [anchor=north west][inner sep=0.75pt]    {$S_{2}$};
\draw (390.88,15.43) node [anchor=north west][inner sep=0.75pt]    {$S_{2}$};
\draw (369.31,91.6) node [anchor=north west][inner sep=0.75pt]    {$S_{1} \times S_{2}$};
\draw (97.4,371.66) node [anchor=north west][inner sep=0.75pt]    {$S_{2} \times S_{1}$};
\draw (251.31,455.51) node [anchor=north west][inner sep=0.75pt]    {$S_{0}$};
\draw (386.35,455.51) node [anchor=north west][inner sep=0.75pt]    {$S_{2}$};
\draw (457.05,91.6) node [anchor=north west][inner sep=0.75pt]    {$S_{1}$};
\draw (113.54,436.92) node [anchor=north west][inner sep=0.75pt]    {$S_{1}$};
\draw (545,111.4) node [anchor=north west][inner sep=0.75pt]    {$ \begin{array}{l}
\text{Training subnetwork} \ \\
S_{01} \times S_{01}
\end{array}$};
\draw (547,44.66) node [anchor=north west][inner sep=0.75pt]    {$ \begin{array}{l}
\text{Overlap part}\\
S_{0} \times S_{0}
\end{array}$};
\draw (547,251.4) node [anchor=north west][inner sep=0.75pt]    {$ \begin{array}{l}
\text{Test set} \ \\
S_{1} \times S_{2}
\end{array}$};

\end{tikzpicture}
}
    \caption{Training subnetworks ($S_{01}$ and $S_{02}$) and testing set ($\S^c = S_1 \times S_2$) of \code{NETCROP} with overlap part $S_0$
    and $s = 2$ non-overlap parts $S_1$ and $S_2$
    ($s = 2$ subnetworks are used for illustration, while \code{NETCROP} can be used with $s \geq 2$).}
    \label{fig:sonnet_descr}
\end{figure}

\raggedbottom
In the division step, the overlap part $S_0$ is formed by randomly subsampling $o$ nodes from the set of nodes $S$. Then, the remaining nodes in $S \setminus S_0$ are partitioned into $s$ equally sized non-overlap parts $S_1, \ldots, S_s$. The subsets of nodes are formed as $S_{01} = S_0 \cup S_1, \ldots, S_{0s} = S_0 \cup S_s$. Note that the subnetworks are of size $(o+m)$, where $m = (n-o)/s$ is the size of each non-overlap part. \code{NETCROP} uses the subnetworks given by the sub-adjacency matrices $A_{S_{01}}, \ldots, A_{S_{0s}}$ as the training sets and the node pairs between the non-overlap parts as the test set. They are defined as
\begin{align*}
    \text{Train:} \ \S \coloneqq \bigcup\limits_{q =1}^s (S_{0q} \times S_{0q}),\ \quad \text{Test:} \ \S^c = (S \times S) \setminus \S = \mathop{\bigcup}\limits_{1 \leq p < q \leq s} (S_p \times S_q).
\end{align*}
{The training and test set splitting of \code{NETCROP} with $s=2$ subnetworks is presented in Figure \ref{fig:sonnet_descr}.} 
\newpart{The entire square in Figure \ref{fig:sonnet_descr} represents the $n \times n$ network adjacency matrix, where the nodes are rearranged so that the overlapping nodes in $S_0$ are in the middle of the rows and the columns, and the nodes in the non-overlap partitions $S_1$ and $S_2$ are on either side of $S_0$. Then, the two training subnetworks are in the two big squared blocks, indicated by $S_{01} \times S_{01} = (S_{0} \cup S_{1}) \times (S_{0} \cup S_{1})$ (the upper left blue block with green patch in its bottom right corner) and $S_{02} \times S_{02} = (S_{0} \cup S_{2}) \times (S_{0} \cup S_{2})$ (the bottom right blue block with green patch in its top left corner). Note that the two blocks formed by the training subnetworks overlap on the central squared block indicated by $S_0 \times S_0$ (the green square). The cyan square, outside of the training subnetworks and in the top-right corner of the adjacency matrix, contains the test set $S_1 \times S_2$ (for directed networks, $S_2 \times S_1$ in the bottom-left corner is also included in the test set).}

In the model fitting step, each candidate model is fitted on each subnetwork $A_{S_{01}}, \ldots, A_{S_{0s}}$ to obtain their corresponding parameter estimates. 
The estimates of the unidentifiable parameters are matched using the overlap nodes by a suitable method. This eliminates any issues due to the unidentifiability of the parameters and ensures that the parameter estimates from the subnetworks are at par.
The matching method depends on the model structure. For example, permutation based label matching is needed to match the estimated community labels in a blockmodel. After the parameter matching step, the parameter estimates from each subnetwork are combined for each candidate model. The matching and the combining of the parameter estimates are called stitching. Let the stitched model estimates be denoted by $\hat{\M}_1, \ldots, \hat{\M}_{\kappa}$.



In the final loss computation step, the estimated probabilities for each node pair in the test set are computed for each candidate model. 
For $(i,j) \in \S^c$,
let $\hat{P}_{ij}^{(k)}$ be the predicted edge probability between nodes $i$ and $j$ from the stitched $k$-th candidate model $\hat{\M}_k$, $k \in [\kappa]$. Let $\hat{P}^{(k)}_{\S^c} = \left( \hat{P}_{ij}^{(k)} \right)_{(i,j) \in \S^c}$. Then the computed losses are obtained as the sum of losses between the entries of the observed adjacency matrix and the predicted edge probabilities from each candidate model over the node pairs in the test set as
\begin{equation}
    \hat{L}^{(k)} = L\left( A_{\S^c}, \hat{P}^{(k)}_{\S^c} \right) \coloneqq \sum\limits_{(i,j) \in \S^c} \l\left( A_{ij}, \hat{P}^{(k)}_{ij} \right) = \sum\limits_{1 \leq p < q \leq s} \ \sum\limits_{(i,j) \in S_p \times S_q} \l\left( A_{ij}, \hat{P}^{(k)}_{ij} \right),\  k \in [\kappa].
\end{equation}
The outcome model $\hat{\M}$ is the model that has the least computed loss on the test set, i.e. $\hat{k} = \argmin_{k \in \kappa} \hat{L}^{(k)}$ and $\hat{\M} = \M_{\hat{k}}$.


\code{NETCROP} can be repeated $R$ times with different randomly subsampled overlap nodes and partitions.
Majority voting on the $R$ outcome models from the repetitions produce the final output. 
This optional step is called the \textit{repetition step}.
Repetitions are expected to increase the stability of the model selection procedure by eliminating random noise as new random splits of the data are used for each repetition. A similar use of repetition can be seen in stability selection procedure by \citet{stability, Soloffbagging2024}. 
Algorithm \ref{algo:gen} outlines \code{NETCROP} for general model selection.

\begin{algorithm}[!ht]
    \caption{\code{NETCROP} for Model Selection}
    \label{algo:gen}
    \begin{algorithmic}
      \State \textbf{Input} An $n \times n$ network adjacency matrix $A$ with nodes indexed by $S = [n]$, a set of candidate models $\M = \{\M_1, \ldots, \M_\kappa\}$, number of subnetworks $s$, overlap size $o$ \newpart{(non-overlap size $m = (n-o)/s$)}, number of repetitions $R$, and a loss function $\l: \R^2 \to [0, \infty)$.
    \Procedure{\code{NETCROP}}{$A, \M, s, o, R, \l$}
    \State \textbf{1.} \textbf{for}{ ($r$ from $1$ to $R$)}\ \{ 
        \State \textbf{1.1.} Randomly select $o$ nodes from $S$ to form the overlap part $S_0$.
        \State \textbf{1.2.} Split $S\setminus S_0$ randomly into $s$ exhaustive parts of equal size $m = (n-o)/s$ as $S_{1}, \ldots, S_{s}$.
        \State \textbf{1.3.} Form the subset of nodes $S_{0q} = S_0 \cup S_q$, $q \in [s]$.
        \State \textbf{1.4.} For each $k \in [\kappa]$ and $q \in [s]$, fit $\M_k$ on $A_{S_{0q}}$ to obtain the model estimates $\hat{\M}_k^{(q)}$.
        \State \textbf{1.5.} For each $k \in [\kappa]$, match the model estimates $\hat{\M}_k^{(2)}, \ldots, \hat{\M}_k^{(s)}$ with $\hat{\M}_k^{(1)}$ using the overlap nodes and update the model estimates.
        \State \textbf{1.6.} Combine the matched model estimates $\hat{\M}_k^{(1)}, \ldots, \hat{\M}_k^{(s)}$ to obtain the final estimate $\hat{\M}_k$ for each $k \in [\kappa]$.
        \State \textbf{1.7.} For each $k \in [\kappa]$, use the stitched model estimate $\hat{\M}_{k}$ to predict the edge probabilities of node pairs in the test set as $\hat{P}_{ij}^{(k)}$, for each $i \in S_p$ and $j \in S_q$, $1 \leq p < q \leq s$.
        \State \textbf{1.8.} Compute
        $\hat{L}^{(k)}_r = L\left(A_{\S^c}, \hat{P}^{(k)}_{\S^c}\right) = \sum\limits_{1 \leq p < q \leq s} \ \sum\limits_{(i,j) \in S_p \times S_q} \l\left( A_{ij}, \hat{P}^{(k)}_{ij} \right)$ for each $k \in [\kappa]$.\ \}
    \State \textbf{2.} Compute $\hat{k}_r = \argmin\limits_{k \in [\kappa]} \hat{L}^{(k)}_r$, for $r \in [R]$. Select $\hat{k}$ as the mode of $\hat{k}_1, \ldots, \hat{k}_R$.
    \State \textbf{3.} Return $\M_{\hat{k}}$ as the outcome model.
    \EndProcedure
    \end{algorithmic}
\end{algorithm}

\raggedbottom
\subsection{Remarks on the Design of \code{NETCROP}}

Each training subnetwork in \code{NETCROP}, being a subsample from the entire network, can be considered as an observed network of size $o + m = o + (n-o)/s$ from the true model. Thus, any consistency results as $n \to \infty$, which are applicable on the model estimates from the entire network, should be valid for the estimates from each training subnetwork as long as $o+m \to \infty$. Fixing the number of subnetworks $s$ with respect to $n$ and assuming the overlap size $o \to \infty$ ,
we get $ m = s^{-1} (n - o) \asymp n$ and $o+m \asymp n$. 
These choices of $o$ and $m$ lead to a balance between the training and test subsets, which are both of size $\asymp n^2$.
This ensures that the parameter estimates for the true model, obtained from the training subnetworks, are consistent, which in turn lowers the computed loss for the true model. 
\newpart{Additionally, a large enough test set is usually a better representative of the network and thus helps in distinguishing the wrong models from the correct one.}

\code{NETCROP} requires a choice of the loss function $\l$. From our observation, squared error loss performs well with \code{NETCROP}, producing reliable and accurate results in most situations. It is also very fast and easy to compute. We derived all the theoretical results using this loss function due to its algebraic tractability.
For unweighted networks with binary adjacency matrices, binomial deviance loss or negative of the area under the receiver operating characteristic curve (AUROC or AUC) can also be used with \code{NETCROP}. We noticed that, numerically, the binomial deviance loss produces similar results as the squared error loss, albeit it is a bit slower than the squared error loss. Computation of AUC is very slow for large networks and may offset the computation benefits of \code{NETCROP}. 

\newpart{
\subsection{Parameter Selection}\label{sec:method:partune}
To implement \code{NETCROP}, a user needs to choose its parameters --- the number of subgraphs $s$ and the overlap size $o$.
A simple way to choose the parameters is by setting the test set proportion to a user-specified fraction $p_{test}$.

Let $p_o = o/n$ be the overlap proportion. Then the proportion of node pairs in the test set is given by
$s(s-1)m^2 / n^2 = s(s-1) {(n-o)^2}/{s^2 n^2} = \left(1 - s^{-1}\right) (1 - p_o)^2.$
Setting this proportion as $p_{test}$, we get
\begin{align}
    s = \frac{(1 - p_o)^2}{(1-p_o)^2 - p_{test}}. \label{eq:param-select-s}    
\end{align}

The solution in \eqref{eq:param-select-s} is valid only when $p_o < 1 - \sqrt{p_{test}}$. Additionally, the solution must also satisfy the condition $s \geq 2$, which reduces to $p_o \geq 1 - \sqrt{2 p_{test}}$. 
This parameter selection strategy provides an interval $[1 - \sqrt{2p_{test}},\ 1 - \sqrt{p_{test}})$ for the overlap proportion $p_o$.
Based on our observation, for a given $p_{test}$, the value of $p_o$ in the range does not affect the performance of \code{NETCROP} much as long as it is not too close to the upper bound.
Empirically, $p_{test} \leq 0.1$, $p_o \approx 1 - \sqrt{2p_{test}}$ with $s$ as the smallest integer greater than the right hand side of \eqref{eq:param-select-s}
produce good choices of \code{NETCROP} parameters that keep the computation burden small for most cross-validation problems while maintaining accuracy. 
In this paper, we used $p_{test} = 0.02$ for larger networks with $n \geq 1000$ and $p_{test} = 0.1$ for smaller networks.
For a given $p_{test}$, we used the overlap size $o$ as the smallest integer greater than $n(1 - \sqrt{2p_{test}})$ that makes $n-o$ divisible by its corresponding $s$.


} 



\raggedbottom
\subsection{Computation Complexity}\label{sec:method:gen_comp}
The computation complexity of \code{NETCROP} for model selection depends on the sparsity of the network and the complexities of the model fitting, stitching and the loss computation steps. For a probability matrix $P = P_n = \rho_n P^0$ for some sparsity parameter $\rho_n$ \newpart{and an $n \times n$ matrix $P^0$ with fixed positive entries}, the complexities of most model fitting procedures are usually polynomials in terms of $n, \log{n}$ and $\rho_n$. For the sake of simplicity, we assume that the complexity of fitting the candidate model with the slowest estimation procedure is $O(n^\theta)$ for a network of size $n$ for some $\theta > 0$.
Then the complexity of fitting $\kappa$ candidate models on $s$ training subnetworks, each of size $(o+m)$, is $O\left(\kappa s (o+m)^\theta\right)$, \newpart{where $m = (n-o)/s$ is the size of each non-overlap part}. 

In the stitching step, the model parameters are matched only using the overlap nodes. Thus, complexity of parameter matching will depend on the overlap size $o$, the number of matchings $(s-1)$, the size $\kappa$ of the candidate set and the dimension of the parameter space. If the complexity of parameter matching based on the overlap nodes for the candidate models is $O(o^\beta)$ for some $\beta \geq 0$, then the overall complexity of the stitching step is $O\left(\kappa (s-1) o^\beta\right)$.

The loss computation step requires as many computations of the loss function as the size of the test set. The test set consists of the node pairs $(i,j) \in (S_p \times S_q)$ for $1 \leq p < q \leq s$. There are $s(s-1)/2$ pairs of non-overlap parts, where each pair of non-overlap parts contains $m^2$ node pairs. Thus, the size of the test set is $\lvert \S^c \rvert = s(s-1)m^2/2$. The complexity of this step is $O(\kappa s(s-1)m^2)$.

The complexity of \code{NETCROP} is proportional to the number of repetitions $R$.
Combining the complexities of the individual steps, the complexity of \code{NETCROP} with overlap size $o$, $s$ subnetworks and $R$ repetitions is $O\left(R \kappa s(o+m)^\theta \right) + O\left(R \kappa (s-1)o^\beta \right) + O\left(R \kappa s(s-1) m^2 \right)$.

In most cases, the model fitting step dominates the parameter matching step i.e. $\theta > \beta$. The model fitting, stitching and loss computation steps can be parallelized on a multi-processor computer. Barring some overhead due to parallelization, complexity of \code{NETCROP} is approximately inversely proportional to the number of processors. If \code{NETCROP} is applied in parallel over $\zeta$ processors, the complexity of \code{NETCROP} is $O\left( \zeta^{-1} R\kappa \left(s(o+m)^\theta + s^2m^2\right)  \right) \approx O\left( \zeta^{-1} R\kappa s^2 (o+m)^\theta \right)$,
where the last approximation is based on the fact that $\theta \geq 2$ for most network model fitting tasks.

With a similar setting, the complexity of \code{NCV} with $V$ folds and $R$ repetitions is $O\left(\zeta^{-1} R \kappa V n^\theta  \right)$ as the most computationally dominating step involves $V$ community detections on rectangular submatrices of order $n (1 - 1/V) \times n$, which is usually of order $O(n^\theta)$,
depending on the algorithm used for the singular value decomposition within community detection.  The complexity of \code{ECV} with $V$ cross validation folds under the same setting is $O\left(\zeta^{-1} V(n^{\theta_0} + \kappa n^\theta)\right)$, where $n^{\theta_0}$ is the complexity of matrix completion using singular value decomposition (SVD) on a matrix of size $n$. Depending on the SVD algorithm, usually $\theta_0 \in [2, 3]$. 

\newpart{
\subsection{Memory Complexity}\label{sec:method:memory}
\code{NETCROP} is very memory efficient as it only uses a part of the network at a time, thus needing to load only a small part of the network to the volatile memory.
Deriving the exact memory usage of the computations is beyond the scope of this paper as it depends on many variables, including the model selection problem, the fitting algorithms, the programming language, the operating system and the type of computer hardware. Here, we provide a comparative analysis of storage memory use of \code{NETCROP} and the competing algorithms \code{NCV} and \code{ECV}.

The amount of memory required to store a network of size $n$ and sparsity (or edge density) regulated by $\rho_n$ is in the order of $n^2 \rho_n$. Since the training step of \code{NETCROP} splits the nodes into $s$ overlapping partitions, each of size $o+m$, per repetition, and then accesses the subnetworks corresponding to those partitions 
to fit each candidate model, the total storage cost
of the training step of \code{NETCROP} boils down to $R s (o+m)^2 \rho_n$. However, the processors only need to access one training block at a time for the computations, thus reducing the storage memory required for the training step by one processor to $(o+m)^2 \rho_n$. 
The testing step of \code{NETCROP} performs link predictions and loss computations on $s(s-1)$ submatrices each of size $m^2$. The total storage for the test step is $s(s-1)m^2$, while the storage requirement per processor is just $m^2$. 

On the other hand, the training step of \code{NCV} fits the candidate models on $V$ submatrices, each of size $n (1 - 1/V) \times n$ (with sparsity $\rho_n$), while testing on $V$ blocks each of size $n/V \times n/V$. Hence, the total and per-processor storage memory requirements of \code{NCV} are $V n^2 (1 - 1/V) \rho_n$ and $n^2(1 - 1/V)\rho_n$ for training, and $n^2/V$ and $n^2/V^2$ for testing, respectively. \code{ECV} has a higher memory requirement than both \code{NETCROP} and \code{NCV}. \code{ECV} with $V$ folds and training proportion $(1-p)$ needs to store $Vn^2(1-p)\rho_n$ node pairs for matrix completion, $Vn^2 \rho_n$ node pairs for fitting the models and $Vn^2(1-p)$ node pairs for testing. The per-processor storage memory costs are then $n^2(1-p)\rho_n$, $n^2 \rho_n$ and $n^2(1-p)$ for matrix completion, fitting and testing, respectively.

In Appendix~\ref{supp:numerical:small}, we have included a numerical example comparing the memory usage of \code{NETCROP} with \code{NCV} and \code{ECV}. We observed that \code{NETCROP} uses significantly less memory than both \code{NCV} and \code{ECV} and the difference in memory use grows with the network size.  

}

\raggedbottom

\section{{\code{NETCROP} for Model Selection and Parameter Tuning}}\label{sec:example}
In this section, we present the details for applying \code{NETCROP} to four cross-validation tasks:
(1) detecting the number of communities and degree heterogeneity in blockmodels, 
(2) estimating the latent space dimension in RDPG models,
(3) estimating the latent space dimension in latent space models, and 
(4) tuning the regularizing parameter for regularized spectral clustering.

\subsection{Model Selection for Blockmodels}\label{sec:example:SBM_DCBM_algo} {Here, we discuss how \code{NETCROP} (Algorithm \ref{algo:gen}) is used for selecting the number of communities and choosing the correct model simultaneously from a candidate set consisting of SBM and DCBM.} 
An SBM with $n$ nodes and $K$ communities is parameterized by \newpart{$(g, K, B)$},
where $g \in [K]^n$
and $B \in [0,1]^{K \times K}$.
An edge between nodes $i$ and $j$ is given by $A_{ij} \sim Bernoulli(P_{ij} = B_{g_i g_j})$ independently for all $i < j$, given $g$. Define $G_k = \{i \in [n] : g_i = k \}$, $n_k = \lvert G_k \rvert$ for $k \in [K]$ and $C \in [0,1]^{n \times K}$ such that $C_{ig_i} = 1$ and $0$ otherwise. Then $P = CBC^\top$. \newpart{We denote this model as $A \sim SBM(n, K, B)$}.
A DCBM adds a layer of complexity to SBM with a degree heterogeneity parameter $\psi_i > 0$ for the $i$th node. 
Define $\Psi \in \R^{n \times K}$ satisfying $\Psi_{i g_i} = \psi_i $ and $0$ otherwise. Given the parameters $(g, K, B, \Psi)$, $P = \Psi B \Psi^\top$ and $P_{ij} = B_{g_i g_j} \psi_i \psi_j$. \newpart{We denote this model as $A \sim DCBM(n, K, B, \Psi)$}.

Let $\M =\{(\K, \delta) \in [K_{max}] \times \{0, 1\} \}$ be the set of candidate models, where $K_{max}$ is the largest number of communities considered and $\delta$ is the degree correction parameter. Here $\delta = 0$ means SBM and $\delta = 1$ DCBM. To adapt \code{NETCROP} to select the best $(\hat{K}, \hat{\delta}) \in [K_{max}] \times \{0, 1\}$, the stitching and the test set prediction step need to be adjusted. The method is illustrated using spectral clustering (\code{SC}, Algorithm \ref{algo:spec_clus} in Appendix~\ref{app:adtl_algo}) for SBM and spherical spectral clustering (\code{SSC}, Algorithm \ref{algo:spher_spec_clus} in Appendix~\ref{app:adtl_algo}) for DCBM, but any suitable and reasonably accurate community detection algorithms for SBM and DCBM can be used with \code{NETCROP}.  
We note that this approach was introduced by \citet{sonnet_comm} for scalable community detection. 

Given the subnetworks $S_{01}, \ldots, S_{0s}$ from the division step, both \code{SC} and \code{SSC} are applied on each subnetwork with $\K$ communities for each $\K \in [K_{max}]$. Let the $(o+m) \times \K$ output community membership matrices corresponding to the $q$-th subnetwork be $\hat{C}^{(0 \K)}_{q}$ and $\hat{C}^{(1\K)}_{q}$ from \code{SC} and \code{SSC}, respectively, with $\K$ communities, $q \in [s]$.

Since the community labels are identifiable only up to their permutations, for each $\K \in [K_{max}]$,
the columns of the $s$ estimated community membership matrices need to be permuted to match with a standard set of labels. 
{We describe two suitable label matching algorithms, \code{MatchBF} and \code{MatchGreedy}, in Appendix~\ref{app:adtl_algo}.}  
The overlap nodes are used for the label matching in the stitching step. There are $s$ sets of estimated community labels with shared overlap nodes. We set the estimated labels for the overlap part from the first subnetwork $\hat{C}^{(\delta \K)}_{1, S_0\cdot}$ as the standard labels, and match the labels from $\hat{C}^{(\delta \K)}_{2, S_0\cdot}, \ldots, \hat{C}^{(\delta \K)}_{s, S_0\cdot}$ to obtain the best matching permutation matrices $H^{(\delta \K)}_2, \ldots, H^{(\delta \K)}_s \in \H_{\K}$ for $\K \in [K_{max}]$ and $\delta \in \{0, 1\}$. The choice of standard labels is arbitrary and any one of the estimated labels can be used as the standard for label matching. Then all the labels in each of the subnetworks are updated using their corresponding permutation matrices as $\hat{C}^{(\delta \K)}_{q} \leftarrow \hat{C}^{(\delta \K)}_{q} H^{(\delta \K)}_q$, $q \in [2:s]$, $\K \in [K_{max}]$, $\delta \in \{0, 1\}$.

For the loss computation step, the model parameters $(g, B)$ for SBM and $(g, B, \Psi)$ for DCBM need to be estimated. The estimated vectors of community labels from the subnetworks are directly obtained from the estimated community membership matrices as $\hat{g}^{(\delta \K)}_{q} = \hat{C}^{(\delta \K)}_{q} \1_{\K}$. Define $\hat{G}^{(\delta \K)}_{q, k} \coloneqq \{i \in S_{0q} : \hat{g}^{(\delta \K)}_{q, i} = k\}$ as the set of nodes in the $q$-th subnetwork that are predicted to be in the $k$-th community for $q \in [s]$, $k \in [\K]$, $\K \in [K_{max}]$, $\delta \in \{0, 1\}$.

For SBM, the plug-in estimator of the 
community connectivity matrix $B$ for the $q$-th subnetwork and $\K$ communities is computed as

{
\begin{align}
    \hat{B}^{(\K)}_{q, k k^\prime} = \frac{\sum\limits_{(i,j) \in (S_{0q} \times S_{0q})} A_{ij} \I\left( (i,j) \in \hat{\eta}^{(0\K)}_{q, k k^\prime} \right)}{\left\lvert \hat{\eta}^{(0\K)}_{q, k k^\prime} \right\rvert}, \label{eqn:sbm.b.hat}
\end{align}
where
\begin{align}
    \hat{\eta}^{(\delta\K)}_{q, k k^\prime} = 
        \left\{
	      \begin{array}{ll}
	   	   \left\{ (i, j) \in (S_{0q} \times S_{0q}) : \hat{g}^{(\delta \K)}_{q, i} = k,\ \hat{g}^{(\delta \K)}_{q, j} = k^\prime \right\},  & \ k \neq k^\prime\\
       \\
		      \left\{ (i, j) \in (S_{0q} \times S_{0q}) : i < j, \ \hat{g}^{(\delta \K)}_{q, i} = \hat{g}^{(\delta \K)}_{q, j} = k \right\}, & \ k = k^\prime
            \end{array}
    \right.,\ \delta \in \{0, 1\},
\end{align}}
for $q \in [s],\ k, k^\prime \in [\K],\ \K \in [K_{max}]$. Now the estimates of $B$ in (\ref{eqn:sbm.b.hat}) are averaged over the subnetworks to obtain the final estimates of community connectivity matrix and the test set edge probabilities as
\begin{align}
    \hat{B}^{(\K)} = \frac{1}{s} \sum\limits_{q=1}^s \hat{B}^{(\K)}_{q}\ \text{and}\ 
    \hat{P}^{(0\K)}_{ij} = \hat{B}^{(\K)}_{\hat{g}^{(0\K)}_{p, i} \hat{g}^{(0\K)}_{q, j}}\label{eqn:sbm.est}
\end{align}
for $(i,j) \in S_p \times S_q$, $1 \leq p < q \leq s$, $\K \in [K_{max}]$.

DCBM has an additional degree heterogeneity parameter $\psi_i$ corresponding to each node $i$ that needs to be estimated along with the community connectivity matrix. The DCBM parameters $B$ and $\psi$ are only identifiable up to some factor. \citet{cv_comm} used the top $K$ right singular vectors to estimate a community-normalized version of the DCBM parameters as
\begin{align}
    \psi^\prime_i = \frac{\psi_i}{\sqrt{\sum\limits_{j : g_j = g_i} \psi_j^2}},\ \quad \mbox{and}\ \quad B^\prime_{k k^\prime} = B_{k k ^\prime} \sqrt{\sum\limits_{i : g_i = k} \psi_i^2 \sum\limits_{j : g_j = k^\prime} \psi_j^2}.
\end{align}
Note that $P_{ij} = B_{g_i g_j} \psi_i \psi_j = B^\prime_{g_i g_j} \psi^\prime_i \psi^\prime_j$. Let $\hat{U}^{(\K)}_{q}$ be the $(o+m) \times \K$ matrix of eigenvectors of $A_{S_{0q}}$. Then $\psi^\prime_{i}$ corresponding to the $q$-th subnetwork is estimated as the $\ell_2$ norm of the $i$th row of $\hat{U}^{(\K)}_{q}$, and it is used to compute the corresponding plug-in estimate of $B^\prime$. The estimates are 
\begin{align}
    \hat{\psi}^{\prime (\K)}_i = \lVert \hat{U}^{(\K)}_{q, i\cdot} \rVert,\ \quad \mbox{and}\ \quad \hat{B}^{\prime(\K)}_{q, k k^\prime} = \frac{\sum\limits_{(i,j) \in (S_{0q} \times S_{0q})} A_{ij} \I\left( (i,j) \in \hat{\eta}^{(1\K)}_{q, k k^\prime} \right) }{ \sum\limits_{(i,j) \in (S_{0q} \times S_{0q})} \hat{\psi}^{\prime (\K)}_i \hat{\psi}^{\prime (\K)}_j \I\left( (i,j) \in \hat{\eta}^{(1\K)}_{q, k k^\prime} \right) }
\end{align}
for $i \in S_{0q}$, $q \in [s],\ k, k^\prime \in [\K],\ \K \in [K_{max}]$. Finally, the mean of the estimated scaled community connectivity matrices is used along with the estimated scaled degree heterogeneity parameters to predict the edge probabilities of nodes in the test set as
\begin{align}
    \hat{B}^{\prime(\K)} = \frac{1}{s} \sum\limits_{q=1}^{s} \hat{B}^{\prime(\K)}_q,\ \quad \mbox{and}\ \quad \hat{P}^{\prime(1\K)}_{ij} = \hat{B}^{\prime(\K)}_{\hat{g}^{(1\K)}_{p, i} \hat{g}^{(1\K)}_{q, j}} \hat{\psi}^\prime_i \hat{\psi}^\prime_j \label{eqn:dc.est.eigen}
\end{align}
for $(i,j) \in S_p \times S_q$, $1 \leq p < q \leq s$, $\K \in [K_{max}]$.
An alternate estimator of the probability matrix under DCBM that uses Poisson approximation \citep{KarrerDCBM} is given as
\begin{align}
    &\hat{O}_{q, k k^\prime}^{(\K)} = \sum\limits_{(i,j) \in (S_{0q} \times S_{0q})} A_{ij} \I\left( (i,j) \in \hat{\eta}^{(1\K)}_{q, k k^\prime} \right), \ \quad \hat{\psi}_i = \left(\sum\limits_{j \in S_{0q}} A_{ij}\right) \bigg / \left( \sum\limits_{k=1}^{\K} \hat{O}_{q, \hat{g}^{(1\K)}_{q,i} k}^{(\K)} \right),\nonumber\\
    &\hat{O}^{(\K)} = \frac{1}{s} \hat{O}_{q}^{(\K)}, \ \quad \mbox{and}\ \quad \hat{P}_{ij}^{(1\K)} = \hat{O}^{(\K)}_{\hat{g}_{1,i} \hat{g}_{1,j}} \hat{\psi}_i^{\K} \hat{\psi}_j^{\K} \label{eqn:dc.est.pois}
\end{align}
for $(i,j) \in S_p \times S_q$, $1 \leq p < q \leq s$, $\K \in [K_{max}]$.

Empirically, the performance of \code{NETCROP} was observed to not depend on the choice of the DCBM parameter 
\newpart{estimator.}
Both estimators in (\ref{eqn:dc.est.eigen}) and (\ref{eqn:dc.est.pois}) provide sufficiently accurate predicted test set edge probabilities. The theoretical derivations for DCBM are done with the \citet{cv_comm} estimators in (\ref{eqn:dc.est.eigen}) as the properties of the network eigenvalues can be exploited to connect the estimators with the scaled versions of the true parameters. However, unless otherwise specified, we used the \citet{KarrerDCBM} estimate in (\ref{eqn:dc.est.pois}) for all the numerical examples as it is easier and more intuitive to implement. 

With the predicted edge probabilities for SBM and DCBM, 
the number of communities $\hat{K}$ and the degree correction indicator $\hat{\delta}$ are selected based on the computed losses. In case of multiple repetitions, the best outcome is chosen by majority voting on the outcomes from the repetitions. Algorithm \ref{algo:SBM_DCBM} outlines \code{NETCROP} for selecting $K$ and degree correction indicator in a blockmodel. 

\begin{algorithm}[!ht]
    \caption{\code{NETCROP} for Determining the Number of Communities and Choosing between SBM and DCBM} 
    \label{algo:SBM_DCBM}
    \begin{algorithmic}
      \State \textbf{Input} An $n \times n$ network adjacency matrix $A$ with node set $S$, maximum number of communities $K_{max}$ to consider, number of subnetworks $s$, overlap size $o$ \newpart{(non-overlap size $m = (n-o)/s$)}, number of repetitions $R$, and a loss function $\l$.
    \Procedure{\code{NETCROP}}{$A, K_{max}, s, o, R, \l$}
    \State \textbf{1.} \textbf{for}{ ($r$ from $1$ to $R$)}\ \{
        \State \textbf{1.1.} Randomly select $o$ nodes from $S$ to form the overlap part $S_0$.
        \State \textbf{1.2.} Split $S\setminus S_0$ randomly into $s$ exhaustive parts of equal size $m = (n-o)/s$ as $S_{1}, \ldots, S_{s}$.
        \State \textbf{1.3.} Form the subset of nodes $S_{0q} = S_0 \cup S_q$, $q \in [s]$.
        \State \textbf{1.4.} For each $q \in [s]$, compute $\hat{U}_q$ as the $(o+m) \times K_{max}$ matrix with the eigenvectors corresponding to the largest $K_{max}$ eigenvalues of $A_{S_{0q}}$. 
        \State \textbf{1.5.} \textbf{for}{ ($\K$ from $1$ to $K_{max}$)} \ \{
            \State \textbf{1.5.1.} For each $q \in [s]$, compute $\hat{C}^{(0 \K)}_{q}$ using $k$-means clustering on $\hat{U}_{q, [\K]\cdot}$ and $\hat{C}^{(1 \K)}_{q}$ using $k$-median spherical clustering on the row normalized version of $\hat{U}_{q, [\K]\cdot}$.
            \State \textbf{1.5.2.} Compute the best matching permutation matrices as $H^{(\delta \K)}_q = \argmin\limits_{H \in \H_{\K}}\left\lVert \hat{C}^{(\delta \K)}_{q, S_0\cdot} H -  \hat{C}^{(\delta \K)}_{1, S_0\cdot}\right\rVert_0$,\ $q \in [2:s],\ \delta \in \{0,1\}$ and update $\hat{C}^{(\delta \K)}_q \leftarrow \hat{C}^{(\delta \K)}_q H^{(\delta \K)}_q$, $q\in[2:s]$, $\delta\in \{0,1\}$.
            \State \textbf{1.5.3.} For each $i \in S_p$, $j \in S_q$, $1\leq p < q \leq s$, predict the edge probabilities for SBM as $\hat{P}_{ij}^{(0\K)}$ in (\ref{eqn:sbm.est}) and for DCBM as either $\hat{P}_{ij}^{\prime(1\K)}$ in (\ref{eqn:dc.est.eigen}) or $\hat{P}_{ij}^{(1\K)}$ in (\ref{eqn:dc.est.pois}).
            \State \textbf{1.5.4.} Compute $\hat{L}_{r}^{(\delta\K)} = L\left(A_{\S^c}, \hat{P}_{\S^c}^{(\delta\K)}\right) = \sum\limits_{1 \leq p < q \leq s} \ \mathop{\sum\limits_{i \in S_p}}\limits_{j \in S_q} \l\left( A_{ij}, \hat{P}^{(\delta\K)}_{ij} \right)$, $\delta \in \{0, 1\}$.\ \} \}
    \State \textbf{2.} Compute $(\hat{K}_r, \hat{\delta}_r) = \argmin\limits_{(\K, \delta) \in [K_{max}] \times \{0,1\}} \hat{L}_{r}^{(\delta\K)}$, $r \in [R]$.
    \State \textbf{3.} Return $(\hat{K}, \hat{\delta})$ as the most occurring outcome from $\{(\hat{K}_r, \hat{\delta}_r) : r \in [R]\}$.
    \EndProcedure
    \end{algorithmic}
\end{algorithm}

\subsection{Estimating Latent Space Dimension of RDPG}\label{sec:method:RDPG_algo} Here we describe \code{NETCROP} for selecting the latent space dimension of RDPG. An RDPG of size $n$ with latent space dimension $d$ is characterized by an $n \times d$ matrix $X$, where each row of $X$ is the latent position of the corresponding node in a $d$-dimensional Euclidean space. The probability matrix is given by $P = X X^\top$ and $A_{ij} \sim Bernoulli(P_{ij} = X_{i\cdot}^\top X_{j \cdot} )$ independently for all $i < j$, where the rows $X_{i \cdot}$ of the latent position matrix $X$ are treated as $d \times 1$ vectors. We also use the notation \newpart{$A \sim RDPG(n, d, X)$ or} $A \sim RDPG(X)$ to denote this model.

A major inference task for an observed network from RDPG is recovering the latent positions and predicting the edge probabilities. Adjacency spectral embedding, or \code{ASE} \citep{RDPG_survey}, is a spectral method used to estimate the latent position matrix using the eigenvalues and the eigenvectors of the network adjacency matrix. Given a latent space dimension $d$, \code{ASE} uses the top $d$ eigenvectors corresponding to the largest $d$ eigenvalues of $A$ to estimate the latent position matrix $X$. \code{ASE} is described in Algorithm \ref{algo:ASE} in Appendix~\ref{app:adtl_algo}.

\code{NETCROP} (Algorithm \ref{algo:gen}) can be modified for the problem of selecting the latent space dimension $d$ from a set of candidate dimensions $\{1, \ldots, d_{max}\}$. The division step splits the set of nodes $S$ into $s$ subnetworks $S_{01}, \ldots, S_{0s}$, where the subnetworks share a common overlap part $S_0$ of size $o$. 
\newpart{In the first step, the latent position matrix for each subnetwork is estimated using \code{ASE} with dimension $d_{max}$ --- let it be denoted by $\hat{X}_{q}$ for the $q$-th subnetwork, $q \in [s]$. Next, the stitching step matches $\hat{X}_{q}$ with $\hat{X}_1$ for $q \in [2:s]$. Then the test set edge probabilities with candidate dimension $\d \ (\leq d_{max})$ are predicted using the first $\d$ columns of the relevant estimated latent position matrices. This is slightly different from the detection and the stitching steps used in \code{NETCROP} for SBM and DCBM (Algorithm \ref{algo:SBM_DCBM}), as they are done only once for $d_{max}$ instead of for each $\d \in [d_{max}]$.} 

Since the latent positions are identifiable only up to an orthogonal rotation in the $d$-dimensional Euclidean space, the estimated latent positions from the subnetworks may not be in the same rotation in $\R^d$. For example, for any $d$-dimensional orthogonal matrix $W \in \O_d$, $X$ and $\Tilde{X} = XW$ yield the same probability matrix $P$. To ensure that the estimated latent positions from all the subnetworks are rotated similarly, one of the subnetworks is randomly chosen as the standard and the estimated latent positions of the overlap nodes of the other subnetworks are rotated to match with the latent position of the overlap nodes of the standard subnetwork. Since the choice of the standard subnetwork is arbitrary, we treat the first subnetwork as the standard throughout this paper. We use the standard Procrustes transformation \citep{og-procrustes} to match the latent positions. For any two $n \times d$ matrices $M_1$ and $M_2$, the Procrustes transformation finds the best orthogonal matrix $W \in \O_d$ that minimizes $\lVert M_1 W - M_2 \rVert_F$. If the singular value decomposition of $M_2^\top M_1 = U \Sigma V^\top$, the minimizer is $W = U V^\top$.
Let $\hat{W}_q \in \O_{d_{max}}$ be the Procrustes transformation that matches $\hat{X}_{q, S_0 \cdot}$ with $\hat{X}_{1, S_0 \cdot}$ for $q \in [2:s]$. Then the 
estimated latent positions from the subnetworks are updated as $\hat{X}_{q} \leftarrow \hat{X}_{q} \hat{W}_{q}$, $q \in [2:s]$.

In the loss computation step, the edge probabilities are predicted for each candidate dimension $\d \in [d_{max}]$ by only using the first $\d$ columns of the estimated latent position matrices from the subnetworks. For node pairs between the $p$-th non-overlap part $S_p$ and the $q$-th non-overlap part $S_q$, the predicted latent positions and the edge probabilities for candidate dimension $\d$ are
\vspace{-1em}
\begin{align}
    \hat{X}_{q}^{(\d)} = \hat{X}_{q, \cdot [\d]}, \ q\in[s],\ \quad \mbox{and}\ \quad \hat{P}^{(\d)}_{S_p S_q} = \hat{X}_{p}^{(\d)} \left( \hat{X}^{(\d)}_q \right)^\top 
\end{align}
for $1 \leq p < q \leq s$ and $\d \in [d_{max}]$.
Since the test set consists of node pairs from all possible pairs of non-overlap parts, the final losses are computed as the sum of losses between $A_{S_p S_q}$ and $\hat{P}^{(\d)}_{S_p S_q}$ over $1 \leq p < q \leq s$ for each $\d \in [d_{max}]$. Finally the dimension $\hat{d}$ with the lowest computed loss is returned as the estimated latent space dimension. The procedure is repeated $R$ times for stability and the most frequent value of $\hat{d}$ is the final outcome. Algorithm \ref{algo:RDPG} outlines the steps.

\begin{algorithm}[!ht]
    \caption{\code{NETCROP} for Determining the Dimension for RDPGs} 
    \label{algo:RDPG}
    \begin{algorithmic}
      \State \textbf{Input} An $n \times n$ network adjacency matrix $A$ with node set $S$, maximum latent space dimension $d_{max}$ to consider, number of subnetworks $s$, overlap size $o$ \newpart{(non-overlap size $m = (n-o)/s$)}, number of repetitions $R$, and a loss function $\l$.
    \Procedure{\code{NETCROP}}{$A, d_{max}, s, o, R, \l$}
    \State \textbf{1.} \textbf{for}{ ($r$ from $1$ to $R$)}\ \{
        \State \textbf{1.1.} Randomly select $o$ nodes from $S$ to form the overlap part $S_0$.
        \State \textbf{1.2.} Split $S\setminus S_0$ randomly into $s$ exhaustive parts of equal size: $S_{1}, \ldots, S_{s}$.
        \State \textbf{1.3.} Form the subset of nodes $S_{0q} = S_0 \cup S_q$, $q \in [s]$.
        \State \textbf{1.4.} For each $q \in [s]$, compute $\hat{\Lambda}_{q} = diag\left(\hat\lambda_{q, 1},  \ldots, \hat\lambda_{q, d_{max}}\right)$ as the diagonal matrix containing the $d_{max}$ largest eigenvalues  $\hat\lambda_{q, 1}\geq \cdots \geq \hat\lambda_{q, d_{max}}$ and $\hat{U}_{q}$ as the $(o+m) \times d_{max}$ matrix of the corresponding eigenvectors of $A_{S_{0q}}$.
        \State \textbf{1.5.} Compute $\hat{X}_{q} = \hat{U}_{q} \hat{\Lambda}_{q}^{\frac{1}{2}}$, $q \in [s]$.
        \State \textbf{1.6.} For each $q \in [2:s]$, $\hat{W}_q = \argmin\limits_{W \in \O_{d_{max}}} \left\lVert \hat{X}_{q, S_0 \cdot}\hat{W} - \hat{X}_{1, S_0 \cdot} \right\rVert_F$ and update $\hat{X}_{q} \leftarrow \hat{X}_q W_q$.
        \State \textbf{1.7.} \textbf{for}{ ($\d$ from $1$ to $d_{max}$)}\ \{
            \State \textbf{1.7.1.} Set $\hat{X}_{q}^{(\d)} = \hat{X}_{q, \cdot [\d]}$ for $q \in [s]$.
            \State \textbf{1.7.2.} Estimate $\hat{P}^{(\d)}_{S_p S_q} = \hat{X}^{(\d)}_p \mathop{\hat{X}^{(\d)}_q}^\top$ for $1 \leq p < q \leq s$.
            \State \textbf{1.7.3.} Compute the loss function 
            $\hat L^{(\d)}_r = L\left(A_{\S^c}, \hat{P}^{(\d)}_{\S^c} \right) = \sum\limits_{1\leq p < q \leq s} \l\left( A_{S_p S_q}, \hat{P}^{(\d)}_{S_p S_q}\right)$.\ \}\}
    \State \textbf{2.} Compute $\hat{d}_r = \argmin\limits_{\d \in [d_{max}]} \hat L^{(\d)}_r$ for $r \in [R]$.
    \State \textbf{3.} Return $\hat d$ as the modal value of $\hat{d}_1, \ldots, \hat{d}_R$.
    \EndProcedure
    \end{algorithmic}
\end{algorithm}

\vspace{-1em}
\subsection{Estimating the Dimension of Latent Space Model}
{
The latent space model \citep{Hoff2002LatentSA} assumes that the edge probabilities are inversely related to the distance between the assumed latent positions of the nodes in a low dimensional ($d << n$) Euclidean space. Any inference on networks from this model, including estimating the latent positions, requires the knowledge of the latent space dimension $d$. \code{NETCROP} for RDPG (Algorithm \ref{algo:RDPG}) can be adapted for
\newpart{selecting}
the latent space dimension for a network from latent space model by replacing \code{ASE} for estimating latent positions in RDPG with the maximum likelihood estimator for the latent space model, and using generalized Procrustes transformation for the matching step as the latent positions are identifiable only up to translations and orthogonal rotations. The outcome dimension is chosen from the candidate set $[d_{max}]$ that has the lowest test set loss. The latent space model along with the specific changes to \code{NETCROP} are presented in Appendix~\ref{app:adtl_algo}.
}

\subsection{Tuning the Regularization Parameter in Regularized Spectral Clustering}
Regularized spectral clustering or \code{RSC} \citep{qin_reg_spec} is commonly used for community detection in DCBMs. Classical spectral clustering (Algorithm \ref{algo:spec_clus}) performs poorly in the presence of degree heterogeneity and in sparse networks. \code{RSC} addresses this issue by inflating the node degrees by $\tau$ times the edge density of the network for a tuning parameter $\tau \geq 0$ while computing the graph Laplacian and applying spectral clustering with row normalization on the modified Laplacian. Cross-validation on networks can be used to perform a grid search on a range of candidate values of $\tau$ to select the best tuning parameter for \code{RSC}. \code{NETCROP} for selecting the tuning parameter is similar to Algorithm \ref{algo:SBM_DCBM}, except that \code{RSC} is applied on each subnetwork with each candidate value of $\tau$. The stitching or label matching step and the testing step remain the same as Algorithm \ref{algo:SBM_DCBM}. Finally, the value of $\tau$ that produces the lowest prediction error on the test set is chosen as the best tuning parameter. Without additional computations, the community labels can be predicted using the matched community labels from the subnetworks with the selected value of $\tau$ from \code{NETCROP}.

\vspace{-1em}

\section{Theoretical Results}\label{sec:theory}
In this section, we derive theoretical results for accuracy of \code{NETCROP} for model selection under some suitable assumptions. In Sections \ref{sec:theory:SBM_DCBM} and \ref{sec:theory:dcbm}, we show that the probability of underestimating the number of communities for SBM and DCBM \newpart{tends} to zero as the network size increases. 
{Although \code{NETCROP} for blockmodels is designed to detect degree heterogeneity and the number of communities simultaneously and the numerical examples are for such cases, the theoretical results for SBM and DCBM are limited to detecting the number of communities in a candidate set of only SBMs and only DCBMs, respectively.} 
In Section \ref{sec:theory_RDPG}, we show the same for the latent space dimension of RDPG. The general idea for both cases is that the computed loss between the observed network (adjacency matrix) and the predicted edge probabilities on the test set is much larger for the wrong candidate models, but smaller for the true model. We compare the computed loss with the oracle loss to show this. The oracle loss is defined as the sum of losses between the adjacency matrix and the true edge probabilities over the test set, which can be written as
\vspace{-1em}
\begin{align*}
    L(A_{\S^c}, P_{\S^c}) = \sum\limits_{(i,j) \in \S^c} \l(A_{ij}, P_{ij}). 
\end{align*}
For the true model, the computed loss is shown to be concentrated around the oracle loss and for the wrong models, the computed loss is shown to be much larger than the oracle loss with high probabilities. Thus, the true model is chosen with high probability.

All theoretical derivations are for \code{NETCROP} with $s$ subnetworks, overlap size $o$, and $R=1$ repetition. Since the repetitions are independent, the same consistency results should hold with high probability using the union bound when $R>1$. We also assume the squared error loss, defined as $\ell_2(a,b) = (a-b)^2$ for all theoretical results. 


\newpart{
As is known in the cross-validation literature, there is generally no theoretical guarantee against overestimation \citep{StoneCV1978, shaolinear1993, zhangmodel1993, yangconsistency2007}. Please see the paragraph after Corollary 1 of \citet{cv_comm} and Section 1.1 of \citet{Leicv2020} for more discussion on this. In line with the existing network cross-validation methods \citep{cv_comm, cv_edge}, we only prove one-sided consistency that ensures that the number of communities for SBM, DCBM and latent position dimension for RDPG are not underestimated with \code{NETCROP}.
}

\newpart{Although \code{NETCROP} is not theoretically guaranteed against overestimation, we provide some intuition for why \code{NETCROP} is unlikely to overestimate the network models in practice.}
For blockmodels, if the candidate number of communities is greater than the true number of communities, a few of the true communities are expected to artificially split and distributed into multiple predicted communities. Node pairs from those predicted communities are expected to make the computed loss from \code{NETCROP} larger. Similarly for RDPG, using larger candidate dimension than the true dimension will incorporate extra terms in the coordinates of the latent positions that will push the predicted edge probabilities away from the true edge probabilities, eventually increasing the computed loss. 


\subsection{Theory for SBM}\label{sec:theory:SBM_DCBM}
Here we consider the model selection problem of choosing the number of communities $\hat{K}$ from a candidate set of SBM with the number of communities in $[K_{max}]$. \code{NETCROP} for blockmodels (Algorithm \ref{algo:SBM_DCBM}) is applied with \code{SC} (Algorithm \ref{algo:spec_clus}), $s$ subnetworks, overlap size $o$ \newpart{(thus, non-overlap size $m = (n-o)/s$)}, $R = 1$ repetition, and the squared error loss. We assume that the true model is SBM with $K$ communities.


Since the results for SBM are for candidate set consisting of only SBMs, we use $\hat{P}_{\S^c}^{(\K)}$ to mean $\hat{P}_{\S^c}^{(0\K)}$ from (\ref{eqn:sbm.est}).
The following assumptions are needed to establish the results for SBM.
\begin{assumption}{(Model assumptions for SBM)}\label{assump:SBM}
\begin{itemize}
    \item \newpart{Let $K$ be fixed, and $g_i \in [K]$ be the true community membership of node $i \in [n]$. Then, for $1 \leq i \neq j \leq n$, $0 \leq P_{ij} = B_{g_i g_j} = \rho_n B_{0, g_i g_j} = \rho_n P^0_{ij} \leq 1$ for a fixed symmetric matrix $B_0 \in [0, \infty)^{K \times K}$ and a sparsity parameter $\rho_n > 0$. Additionally,} the rows of $B_0$ are all distinct, $\min\limits_{k, k^\prime \in [K]} B_{0, kk^\prime} > c_1$ for some $c_1 > 0$, and the smallest eigenvalue of $B_0$ is strictly positive.
    \item $\rho_n = \Omega(n^{-1} \log{n})$.
    \item There exists a $\gamma \in (0,1)$ such that $n_k \geq \gamma n$, for all  $k \in [K]$.
\end{itemize}
\end{assumption}

We show that the distance between the computed loss and the oracle loss $\left\lvert L\left(A_{\S^c}, \hat{P}_{\S^c}^{(K)}\right)- L(A_{\S^c}, P_{\S^c})\right\rvert$ is very small for $\K = K$ and $L\left(A_{\S^c}, \hat{P}_{\S^c}^{(\K)}\right) \gg L(A_{\S^c}, P_{\S^c})$ for $\K < K$. To show the former, we prove that $\hat{P}^{(K)}$, although estimated from the training set $\S$, is a good estimator of the test set edge probabilities $P_{\S^c}$ as long as the community detection algorithm reasonably accurately recovers the community labels for each subnetwork. To establish accuracy, we use \citet[Theorem~4]{sonnet_comm} which bounds the error rate of \code{SC} on a random subset of the network. For the latter case of $\K < K$, we use \newpart{the pigeonhole} principle to show that there is at least one incidence of multiple true communities merging into one, and the number of node pairs in the test set corresponding to those merged communities is large with high probability. The prediction accuracy is low for these node pairs, making the computed loss very large for $\K < K$.




\begin{theorem}\label{theorem:sbm_lossdiff}
    \newpart{If $A \sim SBM(n, K, B)$, then under the conditions in Assumption \ref{assump:SBM}, the following bounds hold for \code{NETCROP} for SBM (Algorithm \ref{algo:SBM_DCBM}) 
    with the squared error loss as $n \to \infty$:}
    \begin{align}
       &L\left(A_{\S^c}, \hat{P}^{(\K)}_{\S^c}\right) - L(A_{\S^c}, P_{\S^c}) = \Omega_{p}\left(s(s-1)m^2 \rho_n^2 \right),\ \text{for} \ \K < K, \label{eqn:SBM:lower}\\
       &\left\lvert L\left(A_{\S^c}, \hat{P}^{(\K)}_{\S^c}\right) - L(A_{\S^c}, P_{\S^c}) \right\rvert = O_p(s(s-1)m^2(o+m)^{-1}\rho_n), \ \text{for} \ \K = K. \label{eqn:SBM:upper}
    \end{align}
\end{theorem}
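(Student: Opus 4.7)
The plan is to decompose the loss difference using the squared-error identity
\begin{align*}
L\left(A_{\S^c}, \hat{P}^{(\K)}_{\S^c}\right) - L(A_{\S^c}, P_{\S^c}) = \sum_{(i,j) \in \S^c} \left(P_{ij} - \hat{P}^{(\K)}_{ij}\right)^2 + 2 \sum_{(i,j) \in \S^c} (A_{ij} - P_{ij})\left(P_{ij} - \hat{P}^{(\K)}_{ij}\right).
\end{align*}
The first sum is a deterministic-given-training bias term; the second is a noise cross-term. Crucially, the entries $A_{ij}$ for $(i,j) \in \S^c$ are independent of the training subnetworks $A_{S_{0q}}$ that determine $\hat{P}^{(\K)}$, so conditional on the training data the cross-term is a sum of independent bounded mean-zero variables. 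A Bernstein-type bound then controls it by $O_p\bigl(\sqrt{s(s-1)m^2 \rho_n \cdot \|P - \hat P^{(\K)}\|_\infty^2}\bigr)$, which in every case below is dominated by the bias term.

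For the case $\K = K$, I would apply Theorem 4 of \citet{sonnet_comm} to each of the $s$ subnetworks $A_{S_{0q}}$ (of size $o+m \asymp n$). Under Assumption \ref{assump:SBM}, spectral clustering on each subnetwork misclassifies only an $O_p((o+m)^{-1}\rho_n^{-1})$ fraction of nodes with high probability. Since all $s$ subnetworks share the overlap part $S_0$ whose misclassification rate is similarly small and whose community sizes are $\Omega(o)$, the permutation matrix $H_q^{(0K)}$ in Algorithm \ref{algo:SBM_DCBM} recovers the correct alignment with high probability. Once labels are aligned, the plug-in estimator $\hat{B}^{(K)}_q$ is a normalized sum of $\Omega((o+m)^2)$ Bernoulli entries with mean $\asymp \rho_n$, so $|\hat B^{(K)}_{q,kk'} - B_{kk'}| = O_p((o+m)^{-1}\rho_n^{1/2})$; averaging over $q$ and squaring yields an entrywise $(P_{ij} - \hat{P}^{(K)}_{ij})^2 = O_p((o+m)^{-2}\rho_n)$ bound for correctly labeled pairs. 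Summing over the $s(s-1)m^2/2$ test pairs and combining with the small misclassified contribution gives the claimed $O_p(s(s-1)m^2(o+m)^{-1}\rho_n)$ upper bound.

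For the case $\K < K$, the key ingredient is a pigeonhole argument: any candidate labeling $\hat g^{(0\K)}$ with only $\K$ labels must merge at least two of the $K$ true communities. Pick such a merged pair $(k_1, k_2)$; Assumption \ref{assump:SBM} (distinct rows of $B_0$, plus $n_k \geq \gamma n$) guarantees a third community $k_3$ with $|B_{k_1 k_3} - B_{k_2 k_3}| \geq c_2 \rho_n$ for a constant $c_2 > 0$. The averaged plug-in $\hat{B}^{(\K)}$ assigns a single value to pairs $(i,j)$ with $(g_i, g_j) \in \{(k_1, k_3), (k_2, k_3)\}$, so at least one of the two groups has $|P_{ij} - \hat{P}^{(\K)}_{ij}| \geq c_2 \rho_n / 2$. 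By balanced community sizes and random partitioning, $\Omega(m^2)$ such node pairs land in each $S_p \times S_q$ test block with high probability, producing a $\Omega(s(s-1)m^2 \rho_n^2)$ contribution to the bias term. The Bernstein cross-term is $O_p(s\,m\,\rho_n^{3/2})$, dominated by the bias, yielding the $\Omega_p(s(s-1)m^2\rho_n^2)$ lower bound.

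The main obstacle is the stitching step. For the $\K = K$ case one must propagate the per-subnetwork misclassification rate through the label-matching operation and then through the averaging in \eqref{eqn:sbm.est}, confirming that pairs labeled in different subnetworks are compared on a common coordinate system; this requires showing that the Hamming-minimizing $H_q^{(0K)}$ coincides with the true permutation on the overlap with high probability. For the $\K < K$ case the delicate point is that different subnetworks may merge different pairs of true communities, so the averaged $\hat B^{(\K)}$ may not correspond to any single sensible labeling; one must argue that regardless of how the mergers vary with $q$, the pairwise separation inherited from the distinct rows of $B_0$ still forces an $\Omega(\rho_n)$ gap on a positive fraction of test pairs.
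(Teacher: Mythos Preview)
Your proposal is correct and shares the paper's key ingredients --- the pigeonhole lemma for $\K<K$ and the subnetwork community-detection rate (Theorem~4 of \citet{sonnet_comm}) for $\K=K$ --- but organizes the argument differently. Two points of comparison are worth recording.

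First, for $\K<K$ the paper does \emph{not} try to control what value $\hat B^{(\K)}$ actually takes. Instead it lower-bounds the loss difference on each block $S_p\times S_q$ by replacing $\hat B^{(\K)}_{k_1k_2}$ with the \emph{test-set} empirical mean $\hat p$ over the two merged cells $T_{k_1,k_2,l_1,l_3}^{(pq)}\cup T_{k_1,k_2,l_2,l_3}^{(pq)}$; since $\hat p$ minimizes $\sum(A_{ij}-c)^2$ over any single constant $c$, this is a valid lower bound regardless of how $\hat B^{(\K)}$ was formed or stitched. The resulting expression reduces algebraically to a $(B_{l_1l_3}-B_{l_2l_3})^2$ signal term minus $O_p(\rho_n)$ noise. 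Your direct argument (``$\hat B^{(\K)}$ is some number, so it misses one of $B_{k_1k_3},B_{k_2k_3}$ by at least $c_2\rho_n/2$'') also works, but the paper's minimizer trick entirely sidesteps the stitching concern you flag in your last paragraph.

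Second, for $\K=K$ your rate $|\hat B^{(K)}_{q,kk'}-B_{kk'}|=O_p((o+m)^{-1}\rho_n^{1/2})$ is too optimistic: the $O_p(\rho_n^{-1})$ misclassified training nodes pollute the plug-in numerator and denominator, and the paper shows (via its $Q_{2q},Q_{3q}$ terms) that the correct rate is $O_p((o+m)^{-1})$. This does not damage your final bound, since $(o+m)^{-2}=o\bigl((o+m)^{-1}\rho_n\bigr)$ under $\rho_n=\Omega(n^{-1}\log n)$, but it does mean the dominant contribution to the bias term comes from misclassified test pairs, not from the estimation error in $\hat B$ on correctly labeled ones. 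Your explicit use of train--test independence to Bernstein the cross term is cleaner than the paper's treatment, which bounds $\epsilon_n\sum A_{ij}$ termwise without isolating the conditional-mean-zero structure.
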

The proof of this theorem and all the other results are provided in Appendix~\ref{app:proofs}. The following consistency result follows directly from Theorem \ref{theorem:sbm_lossdiff}.   
\begin{theorem}\label{coro:sbm_final}
    Under the same setup as in Theorem \ref{theorem:sbm_lossdiff} with $\hat{K}$ being the output number of communities from \code{NETCROP} for SBM \newpart{(Algorithm \ref{algo:SBM_DCBM})}, then as $n \to \infty$,
    \vspace{-1em}
    \begin{equation*}
        \P( \hat{K} < K ) \to 0.
    \end{equation*}
\end{theorem}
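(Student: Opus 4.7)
The plan is to derive the corollary directly from the two bounds in Theorem \ref{theorem:sbm_lossdiff} by a sandwich argument. The event $\{\hat{K} < K\}$ is contained in the union $\bigcup_{\K=1}^{K-1}\{\hat{L}^{(\K)} \leq \hat{L}^{(K)}\}$, where $\hat{L}^{(\K)} = L(A_{\S^c}, \hat{P}^{(\K)}_{\S^c})$. Subtracting the oracle loss $L(A_{\S^c}, P_{\S^c})$ from both sides, the event on the right becomes
\begin{equation*}
\hat{L}^{(\K)} - L(A_{\S^c}, P_{\S^c}) \leq \hat{L}^{(K)} - L(A_{\S^c}, P_{\S^c}).
\end{equation*}
So it suffices to show the right side's upper bound grows strictly slower than the left side's lower bound, for every fixed $\K < K$.

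First, I would invoke \eqref{eqn:SBM:lower} from Theorem \ref{theorem:sbm_lossdiff}, which guarantees that for each $\K < K$ the left-hand side is at least of order $s(s-1)m^2 \rho_n^2$ in probability, i.e.\ there exist constants $c_\epsilon, N_\epsilon$ such that with probability at least $1 - \epsilon$ the quantity is bounded below by $c_\epsilon s(s-1) m^2 \rho_n^2$ for $n \geq N_\epsilon$. Simultaneously, \eqref{eqn:SBM:upper} says that $|\hat{L}^{(K)} - L(A_{\S^c}, P_{\S^c})|$ is bounded above by $C_\epsilon s(s-1) m^2 (o+m)^{-1} \rho_n$ with probability at least $1 - \epsilon$ for $n$ large. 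On the intersection of these two high-probability events the required inequality would force
\begin{equation*}
c_\epsilon\, s(s-1) m^2 \rho_n^2 \leq C_\epsilon\, s(s-1) m^2 (o+m)^{-1} \rho_n,
\end{equation*}
i.e.\ $\rho_n (o+m) \leq C_\epsilon/c_\epsilon$.

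The key step is then to note that this ratio diverges. Because $o+m \asymp n$ and by Assumption \ref{assump:SBM} $\rho_n = \Omega(n^{-1} \log n)$, we have $\rho_n(o+m) = \Omega(\log n) \to \infty$. Hence for $n$ sufficiently large the displayed inequality is violated, meaning the intersection of the two events is empty. Therefore $\P(\hat{L}^{(\K)} \leq \hat{L}^{(K)}) \leq 2\epsilon$ for all large $n$, and since $K$ is fixed a union bound over $\K = 1, \ldots, K-1$ yields $\P(\hat{K} < K) \leq 2(K-1)\epsilon$ for all large $n$. Since $\epsilon$ was arbitrary, $\P(\hat{K} < K) \to 0$.

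There is no real obstacle here since all the heavy lifting is in Theorem \ref{theorem:sbm_lossdiff}; the only careful point is translating the $\Omega_p$/$O_p$ notation into uniform probability statements and verifying that $\rho_n(o+m) \to \infty$ under the sparsity assumption, which makes the quadratic-in-$\rho_n$ lower bound asymptotically dominate the linear-in-$\rho_n$ upper bound.
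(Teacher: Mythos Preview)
Your proposal is correct and is exactly the argument the paper has in mind: the paper states that Theorem \ref{coro:sbm_final} ``follows directly from Theorem \ref{theorem:sbm_lossdiff}'' and, in the remark immediately after, makes precisely your rate comparison, noting that \eqref{eqn:SBM:lower} dominates \eqref{eqn:SBM:upper} because $o+m \asymp n$ and $\rho_n = \Omega(n^{-1}\log n)$ force $(o+m)\rho_n \to \infty$. One small wording point: when you write ``the intersection of the two events is empty,'' you mean the intersection of the two high-probability events with the bad event $\{\hat{L}^{(\K)} \leq \hat{L}^{(K)}\}$ is empty, not that the two good events themselves have empty intersection; your conclusion $\P(\hat{L}^{(\K)} \leq \hat{L}^{(K)}) \leq 2\epsilon$ is nonetheless correct.
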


\vspace{-1em}
\newpart{Theorem \ref{theorem:sbm_lossdiff} 
provides some insights on the effects of the \code{NETCROP} parameters.} For (\ref{eqn:SBM:lower}) to be 
\newpart{greater than} (\ref{eqn:SBM:upper}), the training subnetwork size $o+m$ must be \newpart{greater than} the order of $\rho_n^{-1}$. This holds automatically from the design of \code{NETCROP} as $o+m \asymp n$ and from the assumption $\rho_n = \Omega(n^{-1} \log{n})$. This requirement also implies that for sparser networks with smaller $\rho_n$, larger training subnetworks are required to ensure separation between the two cases.

Additionally, note that for both $\K < K$ and $\K = K$ cases, \code{NETCROP} achieves the same rates for the loss difference as \code{ECV} under similar assumptions \citep[Theorem~3]{cv_edge}. \code{NETCROP} has the same rate for the loss difference as \code{NCV} for $\K < K$ case and a better rate as in (\ref{eqn:SBM:upper}) compared to the rate $O_p(n)$ for \code{NCV}. \code{NCV} also requires a stronger assumption of $\rho_n = \omega\left(n^{-1/2}\right)$ \citep[Theorem~2]{cv_comm}. Thus, \code{NETCROP} achieves similar or better rates for the loss difference than the existing methods under milder or equivalent assumptions, and is significantly faster than them.




\subsection{Theory for DCBM}\label{sec:theory:dcbm}
Here we consider the model selection problem of choosing the number of communities $\hat{K}$ from a candidate set of DCBM with the number of communities in $[K_{max}]$. \code{NETCROP} for blockmodels (Algorithm \ref{algo:SBM_DCBM}) is applied with \code{SSC} (Algorithm \ref{algo:spher_spec_clus}), $s$ subnetworks, overlap size $o$ \newpart{(thus, non-overlap size $m = (n-o)/s$)}, $R = 1$ repetition, and the squared error loss. We assume that the true model is DCBM with $K$ communities. We also assume that the test set edge probabilities are estimated using (\ref{eqn:dc.est.eigen}). We need additional identifiability constraints on the degree heterogeneity parameters $\psi_i$ along with similar assumptions as in Assumption \ref{assump:SBM}.
\begin{assumption}{(Model assumptions for DCBM)}\label{assump:DCBM}
    \begin{itemize}
        \item \newpart{Let $K$ be fixed, and $g_i \in [K]$ and $\psi_i > 0$ be the true community membership and degree-correction parameter of node $i \in [n]$. Then, for $1 \leq i \neq j \leq n$, $0 \leq P_{ij} = B_{g_i g_j} \psi_i \psi_j = \rho_n B_{0, g_i g_j} \psi_i \psi_j = \rho_n P^0_{ij} \leq 1$ for a fixed symmetric matrix $B_0 \in [0, \infty)^{K \times K}$ and a sparsity parameter $\rho_n > 0$. Additionally,} the rows of $B_0$ are all distinct, $\min\limits_{k, k^\prime \in [K]} B_{0, kk^\prime} > c_1$ for some $c_1 > 0$, and the smallest eigenvalue of $B_0$ is strictly positive.
        \item $\rho_n = \omega\left(n^{-1/3}\right)$.
        \item There exists a $\gamma \in (0,1)$ such that $n_k \geq \gamma n$, for all  $k \in [K]$.
        \item $\max\limits_{i \in G_k} \psi_i = 1$ for all $k \in [K]$.
        \item There exists a constant $\psi_0 > 0$ such that $\min\limits_{i \in [n]} \psi_{i} \geq \psi_0$.
    \end{itemize}
\end{assumption}

For $\K < K$, the same argument as SBM is used with some additional steps for adjusting the degree heterogeneity parameter in the distance between the computed loss and the oracle loss. The $\K = K$ case for DCBM is different as the type of estimator used to predict the test set edge probabilities is different from the SBM case. For $\K = K$, we follow a similar argument as \citet{cv_comm} to show that $\hat{B}_q^{\prime(K)}$ and $\psi^{\prime(K)}$ from the $q$-th subnetwork are good estimates of scaled $B^\prime$ and community normalized $\psi^\prime$, provided \code{SSC} has a low misclustering rate on the subnetwork \citep[Theorem~6]{sonnet_comm}. This ensures that $\hat{P}_{ij}^{\prime(K)}$ is an accurate estimator of $P_{ij}$, making the computed loss closely concentrated around the oracle loss.

\begin{theorem}\label{theorem:dcbm_lossdiff}
    \newpart{If $A \sim DCBM(n, K, \Psi, B)$, then under the conditions in Assumption \ref{assump:DCBM}, the following bounds hold for \code{NETCROP} for DCBM (Algorithm \ref{algo:SBM_DCBM}) with the squared error loss as $n \to \infty$:}
    \begin{align}
       &L\left(A_{\S^c}, \hat{P}^{(\K)}_{\S^c}\right) - L(A_{\S^c}, P_{\S^c}) = \Omega_p\left(s(s-1)m^2 \rho_n^2\right) 
       , \ \text{for} \ \K < K, \label{eqn:dc-lower}\\
       &\left\lvert L\left(A_{\S^c}, \hat{P}^{(\K)}_{\S^c}\right) - L(A_{\S^c}, P_{\S^c}) \right\rvert = O_p\left(s(s-1)m^2 n^{-\frac{1}{2}} \rho_n^\frac{1}{2}\right), \ \text{for} \ \K = K. \label{eqn:dc-upper}
    \end{align}
\end{theorem}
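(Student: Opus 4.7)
The plan is to mirror the proof of Theorem \ref{theorem:sbm_lossdiff}, starting from the squared-error decomposition
\begin{align*}
L(A_{\S^c}, \hat P^{(\K)}_{\S^c}) - L(A_{\S^c}, P_{\S^c}) = \sum_{(i,j)\in\S^c}\bigl(\hat P^{(\K)}_{ij} - P_{ij}\bigr)^2 + 2\sum_{(i,j)\in\S^c}(P_{ij} - A_{ij})\bigl(\hat P^{(\K)}_{ij} - P_{ij}\bigr).
\end{align*}
Because the test set $\S^c$ sits in off-diagonal blocks strictly disjoint from $\S$, the summands of the second (noise) term are mean zero conditional on $\hat P^{(\K)}$, and Bernstein gives $\sum_{\S^c}(A_{ij}-P_{ij})^2 = O_p(s(s-1)m^2\rho_n)$. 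The remaining work is to control the bias term $\sum_{\S^c}(\hat P^{(\K)}_{ij} - P_{ij})^2$ from below when $\K < K$ and from above when $\K = K$, after which the cross term is absorbed by Cauchy--Schwarz.

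For $\K < K$, I would first invoke Theorem~6 of \citet{sonnet_comm} to obtain, on each subnetwork of size $o+m \asymp n$ under $\rho_n = \omega(n^{-1/3})$, a vanishing misclustering rate for \code{SSC}. The pigeonhole principle then forces at least two true communities $k \neq k'$ to be merged into a common predicted community in every subnetwork. Distinctness of the rows of $B_0$ from Assumption \ref{assump:DCBM}, combined with the uniform lower bound $\psi_i \geq \psi_0 > 0$, implies that no single estimated value $\hat P^{(\K)}_{ij}/(\psi_i\psi_j)$ can simultaneously approximate both $\rho_n B_{0,kk''}$ and $\rho_n B_{0,k'k''}$ for a third community $k''$, so each affected test pair contributes $\Omega(\rho_n^2)$ to the bias. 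The community-balance assumption $n_k \geq \gamma n$ yields $\Omega(m^2)$ such pairs per pair of non-overlap blocks, producing the lower bound $\Omega_p(s(s-1)m^2\rho_n^2)$ in \eqref{eqn:dc-lower}.

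For $\K = K$, I follow the scheme of \citet{cv_comm} applied per subnetwork. Another appeal to Theorem~6 of \citet{sonnet_comm} shows that \code{SSC} correctly classifies every overlap node with probability tending to one, which makes the permutation-matching step $H^{(1K)}_q$ in Algorithm \ref{algo:SBM_DCBM} noise-free and lets the stitched estimates $\hat B^{\prime(K)}$ and $\hat\psi^{\prime(K)}$ from (\ref{eqn:dc.est.eigen}) target the community-normalized parameters $B'$ and $\psi'$. A two-to-infinity-norm eigenvector perturbation bound on each subnetwork adjacency matrix then yields uniform per-entry bias $|\hat P^{\prime(K)}_{ij} - P_{ij}| = O_p(n^{-1/2})$, so $\sum_{\S^c}(\hat P^{\prime(K)}_{ij} - P_{ij})^2 = O_p(s(s-1)m^2 n^{-1})$. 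Combined with the Bernstein bound above, Cauchy--Schwarz bounds the cross term by $O_p(s(s-1)m^2 n^{-1/2}\rho_n^{1/2})$, which dominates and matches \eqref{eqn:dc-upper}.

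The main obstacle is the $\K = K$ half. Unlike the SBM block-mean estimator in (\ref{eqn:sbm.b.hat}), which concentrates cleanly by Bernstein, the DCBM estimator in (\ref{eqn:dc.est.eigen}) is a ratio of eigenvector-based quantities that depends on the community-wise normalization $\psi^\prime_i = \psi_i/\sqrt{\sum_{g_j = g_i}\psi_j^2}$. Propagating spectral perturbation through the row norms of the subnetwork eigenvectors and then through this ratio estimator is the core technical step, and the weaker sparsity regime $\rho_n = \omega(n^{-1/3})$ (as opposed to $\rho_n = \Omega(n^{-1}\log n)$ for SBM) leaves little slack, forcing the use of sharp $2\to\infty$-norm concentration. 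A secondary concern is to verify that the cross-subnetwork average $\hat B^{\prime(K)} = s^{-1}\sum_q \hat B^{\prime(K)}_q$ remains consistent for predictions on test pairs spanning two distinct non-overlap parts; this reduces to averaging the per-subnetwork bounds once the permutation matching has been controlled on the overlap set.
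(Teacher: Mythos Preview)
Your bias-plus-cross decomposition is clean, but using Cauchy--Schwarz on the cross term is too crude and breaks the $\K<K$ lower bound. Since both $\hat P^{(\K)}_{ij}$ and $P_{ij}$ are $O_p(\rho_n)$ entrywise, the bias term satisfies $\sum_{\S^c}(\hat P-P)^2 = O_p(m^2\rho_n^2)$ as well as $\Omega_p(m^2\rho_n^2)$; it cannot be of larger order. Cauchy--Schwarz then gives $|\text{cross}|\le\bigl(\sum(A-P)^2\bigr)^{1/2}\cdot(\text{bias}^2)^{1/2}=O_p(m\rho_n^{1/2})\cdot O_p(m\rho_n)=O_p(m^2\rho_n^{3/2})$, which \emph{dominates} $m^2\rho_n^2$ because $\rho_n\to 0$. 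You already observed the conditional mean-zero structure; you must exploit it directly (conditional Bernstein yields $|\text{cross}|=O_p\bigl((\rho_n\cdot\text{bias}^2)^{1/2}\bigr)=O_p(m\rho_n^{3/2})$, which suffices), not through Cauchy--Schwarz. The paper avoids this decomposition altogether for $\K<K$: it replaces $\hat P_{ij}/(\psi_i\psi_j)$ by the $\psi_i^2\psi_j^2$-weighted block constant $\hat p$ via Lemma~\ref{lemma:algebraic-wgt}, producing residual terms $E,F=O_p(m\sqrt{\rho_n})$ whose absorption into $m^2\rho_n^2$ is exactly where $\rho_n=\omega(n^{-1/3})$ enters.

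For $\K=K$, your Cauchy--Schwarz route requires the uniform bias $\max_{\S^c}|\hat P-P|=O_p(n^{-1/2})$ to hit the target rate, but the tools you cite do not deliver this. Theorem~6 of \citet{sonnet_comm} gives $o_p(\sqrt{n/\rho_n})$ misclustered nodes, not exact recovery on the overlap; those misclustered nodes contaminate the numerator of $\hat B'^{(K)}$ and force a relative error of order $n^{-1/2}\rho_n^{-3/2}$ there, irrespective of any $2\to\infty$ control on $\hat\psi'$. The paper uses only the Frobenius bound of Lemma~\ref{lemma:dc-eigen-diff} together with a Markov-type good-set argument: it shows $|\hat P-P|=O_p(n^{-1/2}\rho_n^{-1/2})$ uniformly over a set $\mathcal S_{1/3,q}$ missing $O(n^{2/3})$ nodes, and then obtains the rate from the direct expansion $\sum|\hat P-P|(2A_{ij}+2P_{ij}+|\hat P-P|)\le\max|\hat P-P|\cdot O_p(m^2\rho_n)$, where the extra $\rho_n$ from $\sum_{\S^c}A_{ij}$ cancels the $\rho_n^{-1/2}$ in the bias. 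Your squared-bias-plus-cross decomposition cannot reproduce this cancellation without the sharper per-entry bound, which is not available from the cited community-detection result.
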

From a direct application of Theorem \ref{theorem:dcbm_lossdiff}, we have the following theorem.
\begin{theorem}\label{coro:DCBM_final}
        Under the same setup as in Theorem \ref{theorem:dcbm_lossdiff} with $\hat{K}$ being the output number of communities from \code{NETCROP} for DCBM \newpart{(Algorithm \ref{algo:SBM_DCBM})}, then as $n \to \infty$,
        \vspace{-1em}
    \begin{align*}
        \P\left( \hat{K} < K \right) \to 0.
    \end{align*}
\end{theorem}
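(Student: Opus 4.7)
The plan is to derive the corollary directly from the two bounds in Theorem \ref{theorem:dcbm_lossdiff} by showing that the lower bound for the loss gap at $\tilde{K} < K$ strictly dominates the upper bound at $\tilde{K} = K$ with probability tending to one, under the sparsity assumption $\rho_n = \omega(n^{-1/3})$ in Assumption \ref{assump:DCBM}.

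First, I would recall that $\hat{K} = \arg\min_{\tilde{K} \in [K_{max}]} L(A_{\S^c}, \hat{P}^{(\tilde{K})}_{\S^c})$. By subtracting the oracle loss $L(A_{\S^c}, P_{\S^c})$ from each term (which does not depend on $\tilde{K}$), minimizing $L(A_{\S^c}, \hat{P}^{(\tilde{K})}_{\S^c})$ is the same as minimizing the loss-gap $\Delta(\tilde{K}) \coloneqq L(A_{\S^c}, \hat{P}^{(\tilde{K})}_{\S^c}) - L(A_{\S^c}, P_{\S^c})$. The event $\{\hat{K} < K\}$ is contained in $\bigcup_{\tilde{K}=1}^{K-1}\{\Delta(\tilde{K}) \leq \Delta(K)\}$, so by the union bound and the fact that $K$ is fixed, it suffices to show $P(\Delta(\tilde{K}) \leq \Delta(K)) \to 0$ for each $\tilde{K} < K$.

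By (\ref{eqn:dc-lower}), for each $\tilde{K} < K$ there exist constants $c > 0$ and $M_\epsilon$ such that $\Delta(\tilde{K}) \geq c\, s(s-1) m^2 \rho_n^2$ holds with probability at least $1-\epsilon$ for $n$ large enough. By (\ref{eqn:dc-upper}), $|\Delta(K)| \leq M_\epsilon\, s(s-1) m^2 n^{-1/2} \rho_n^{1/2}$ with probability at least $1-\epsilon$. Hence with probability at least $1-2\epsilon$,
\begin{equation*}
\frac{\Delta(\tilde{K})}{|\Delta(K)|} \;\geq\; \frac{c}{M_\epsilon}\, n^{1/2}\rho_n^{3/2}.
\end{equation*}
Assumption \ref{assump:DCBM} gives $\rho_n = \omega(n^{-1/3})$, so $n^{1/2} \rho_n^{3/2} \to \infty$. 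Consequently $\Delta(\tilde{K}) > \Delta(K)$ with probability at least $1-2\epsilon$ for all sufficiently large $n$. Since $\epsilon$ was arbitrary, $P(\Delta(\tilde{K}) \leq \Delta(K)) \to 0$ for every fixed $\tilde{K} < K$.

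Finally, a union bound over the finitely many choices $\tilde{K} \in [K-1]$ yields $P(\hat{K} < K) \to 0$, which establishes the claim. There is no real obstacle here beyond bookkeeping; the entire content of the corollary is packed into the separation of the two rates in Theorem \ref{theorem:dcbm_lossdiff}, and the role of the assumption $\rho_n = \omega(n^{-1/3})$ is precisely to guarantee that separation.
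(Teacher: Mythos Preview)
Your proposal is correct and matches the paper's approach exactly: the paper states the result as ``a direct application of Theorem \ref{theorem:dcbm_lossdiff}'' and notes that the assumption $\rho_n = \omega(n^{-1/3})$ is precisely what makes the lower bound in (\ref{eqn:dc-lower}) dominate the upper bound in (\ref{eqn:dc-upper}). You have simply spelled out the bookkeeping (reduction to the loss-gap $\Delta(\tilde K)$, rate comparison $n^{1/2}\rho_n^{3/2}\to\infty$, and union bound over the finitely many $\tilde K<K$) that the paper leaves implicit.
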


\vspace{-1em}
Note that the assumption $\rho_n = \omega(n^{-1/3})$ ensures that the lower bound of the difference between the computed loss and the oracle loss in (\ref{eqn:dc-lower}) for $\K < K$ is of larger order than the upper bound of the same for $\K = K$. To the best of our knowledge, these are the only theoretical bounds of the loss difference for detecting the number of communities in a DCBM using cross-validation. \code{NCV} has a partial result on the consistency of $\hat{P}_{ij}^{\prime(K)}$ on all but a vanishing subset of node pairs for $\K = K$, under the assumption $\rho_n = \omega\left(n^{-1/3}\right)$ \citep[Theorem~4]{cv_comm}. \code{NETCROP} also attains a similar consistency (see \eqref{eqn:V-pdiff-max} in Appendix~\ref{app:proofs}) under the same assumption. This consistency result is further used to derive the upper bound in \eqref{eqn:dc-upper}.  

\subsection{Theory for RDPG}\label{sec:theory_RDPG}

In this section, we present the theoretical guarantee of \code{NETCROP} for selecting the latent space dimension in an RDPG (Algorithm \ref{algo:RDPG}). The theory for RDPG follows a similar track as that of SBM and DCBM. To establish the bounds on computed loss for RDPG with true and unknown latent space dimension $d$, we show that the predicted edge probabilities for the node pairs in the test set are inaccurate for the candidate dimension $\d < d$ and are accurate when $\d = d$. 


For $\d=d$, Theorem \ref{theorem:theo-stitch-mat} in Appendix~\ref{app:proofs} ensures the existence of orthogonal rotation matrices $W_1, \ldots, W_s \in \O_d$ such that $\hat{X}_1 W_1 \approx X_{S_{01} \cdot}, \ldots, \hat{X}_s W_s \approx X_{S_{0s} \cdot}$, where the approximate symbol is used as a shorthand to represent that the two vectors are aligned up to a small error. This means that the latent position estimate from the $q$th subnetwork can be orthogonally rotated using $W_q$ to be close to the true latent position for each $q \in [s]$, i.e., after the orthogonal rotations, all the estimated latent positions from different subnetworks are aligned close to the true positions. Since the true positions are unknown, we align the estimated latent positions from different subnetworks within themselves using Procrustes transformation. An arbitrary subnetwork is chosen as the standard (for this paper, we take the first subnetwork as the standard) and the estimated latent positions of the other subnetworks are aligned with that of the standard subnetwork using the overlap nodes. As defined in Algorithm \ref{algo:RDPG}, $\hat{W}_q$ matches $\hat{X}_{q, S_0 \cdot}$ with $\hat{X}_{1, S_0\cdot}$ for $q \in [2:s]$. Setting $\hat{W}_1 = \I_d$, we have $\hat{X}_1 \hat{W}_1 \approx \cdots \approx \hat{X}_s \hat{W}_s$, which, in turn, are closely aligned to $XW$ for some $W \in \O_d$ \citep{chakraborty2025scalable}. Since the orthogonal matrix $W$ can be absorbed into the definitions of the theoretical orthogonal rotations $W_1, \ldots, W_s$, we derive the theoretical results for the estimation error of $\hat{X}_q W_q$ for estimating $X_{S_{0q}\cdot}$, $q \in [s]$.  

For $\d < d$, the difference between the computed loss and the oracle loss reduces to the accuracy of $\hat{P}^{(\d)}$ for estimating $P$ on the test set. Using Lemma \ref{lemma:rdpg_php} from Appendix~\ref{app:proofs}, it is shown that the rank $\d$ truncated version of $P$, denoted by $P^{(\d)}$, is distinct from $P$ on a sufficiently large subset of the test set. We also show that the estimation error of $\hat{P}^{(\d)}$ for $P^{(\d)}$ is bounded by the error of estimating $P$ using $\hat{P}^{(d)}$. Finally, we show that the latter is dominated by the former, providing   
a lower bound for the loss difference.

Next, we state the results establishing the consistency of \code{NETCROP} for selecting the latent space dimension in RDPG. Since the predicted edge probabilities are obtained using the largest $\d$ eigenvalues and the corresponding eigenvectors, \code{NETCROP} requires the eigenvalues and eigenvectors of the adjacency matrix to be sufficiently distinct. Assumption \ref{assump:RDPG} lists the assumptions on the eigenvalues and the eigenvectors along with the assumptions on the network sparsity.


\begin{assumption}{(Model assumptions for RDPG)}\label{assump:RDPG}
\begin{enumerate}
    \item \label{assump:RDPG:p-define} $P = \rho_n P^0$ for some sparsity parameter $\rho_n > 0$ \newpart{and an $n \times n$ matrix $P^0$ with fixed entries} such that $\min\limits_{i, j \in [n]} P^0_{ij} \geq c_1$ and $\max\limits_{i, j \in [n]} P^0_{ij} \leq c_2$ for some $c_1, c_2 > 0$.
    \vspace{0.5em}
    \item \label{assump:RDPG:rho-cond} There exists $a_0 > 0$ and a constant $c_0>0$ such that $\rho_n \geq c_0 n^{-1/2} \log^{2+a_0/2}(n)$.
    \vspace{0.5em}
    \item \label{assump:RDPG:eigen} If $P^0 = U \Lambda^0 U^\top$ is the eigen decomposition of $P^0$ with $\Lambda^0 = diag(\lambda_1^0, \ldots, \lambda_d^0)$, where $\lambda_1^0 \geq \cdots \geq \lambda_d^0$, there exist $\phi > 0$ and $a > 1$ such that  
    \begin{align*}
     n\phi^{-1} \leq \lambda_d^0 \leq \cdots \leq \lambda_1^0 \leq n\phi\ \mbox{ and }\ \max\limits_{i \in [n]} \frac{n}{d}\lVert U_{i\cdot} \rVert^2 \leq a.
    \end{align*}
\end{enumerate}
\end{assumption}

The general idea behind the RDPG results is the same as the SBM and the DCBM cases. We show that for a candidate dimension $\d < d$, the computed loss is much larger than the oracle loss and for $\d = d$, the true latent space dimension, the computed loss is centred closely around the oracle loss.

\begin{theorem}\label{theorem:rdpg_lossdiff}
    \newpart{If $A \sim RDPG(n, d, X)$, then under the conditions in Assumption \ref{assump:RDPG}, the following bounds hold for \code{NETCROP} for RDPG (Algorithm \ref{algo:RDPG}) with the squared error loss as $n \to \infty$:}
    \begin{align}
       &L\left(A_{\S^c}, \hat{P}^{(\K)}_{\S^c}\right) - L(A_{\S^c}, P_{\S^c}) = \Omega_p\left( s(s-1)m^2 \rho_n^2 \right), \ \ \text{for} \ \d < d,\\
       &\left\lvert L\left(A_{\S^c}, \hat{P}^{(\K)}_{\S^c}\right) - L(A_{\S^c}, P_{\S^c}) \right\rvert = O_p\left( s(s-1)m^2 \frac{\rho_n \log^2(o+m)}{\sqrt{o+m}} \right), \ \ \text{for} \ \d = d.\label{eqn:rdpg:upper}
    \end{align}
\end{theorem}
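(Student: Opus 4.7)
My plan is to proceed via the standard cross-validation decomposition, adapted to the block structure of the test set $\S^c$, and then handle the two regimes separately.

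Under squared error loss, we can write
\begin{equation*}
L\!\left(A_{\S^c}, \hat P^{(\d)}_{\S^c}\right) - L(A_{\S^c}, P_{\S^c}) \;=\; \bigl\lVert \hat P^{(\d)}_{\S^c} - P_{\S^c}\bigr\rVert_F^2 \;-\; 2\!\!\!\sum_{(i,j)\in\S^c}\!\! (A_{ij}-P_{ij})\bigl(\hat P^{(\d)}_{ij}-P_{ij}\bigr).
\end{equation*}
The decisive structural fact to exploit is that the test set consists only of pairs $(i,j)\in S_p\times S_q$ with $p\neq q$, whereas every $\hat X_q$ (and hence $\hat P^{(\d)}$) is a function only of entries of $A$ inside training subnetworks $S_{0q}\times S_{0q}$. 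Conditioning on the training data, the cross-term is therefore a sum of independent centred Bernoulli variables with conditional variance bounded by $\rho_n \lVert \hat P^{(\d)}_{\S^c} - P_{\S^c}\rVert_F^2$, so a Bernstein/Hoeffding bound reduces the whole problem to controlling the squared-bias term $\lVert \hat P^{(\d)}_{\S^c} - P_{\S^c}\rVert_F^2$.

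For $\d=d$ (the upper bound \eqref{eqn:rdpg:upper}), I would invoke Theorem \ref{theorem:theo-stitch-mat} from the appendix to extract orthogonal $W_1,\ldots,W_s\in\O_d$ with $\hat X_q W_q\approx X_{S_{0q}\cdot}$, absorbing the global rotation from the Procrustes alignment. Using the $(2,\infty)$-norm bounds that come out of ASE consistency on subnetworks of size $o+m\asymp n$, I would get row-wise errors of order $\sqrt{\rho_n}\,\log(o{+}m)/\sqrt{o{+}m}$. Expanding $\hat P^{(d)}_{ij}-P_{ij}=(\hat X_p W_p)_{i\cdot}^\top(\hat X_q W_q - X_{S_{0q}\cdot})_{j\cdot}+(\hat X_p W_p-X_{S_{0p}\cdot})_{i\cdot}^\top X_{j\cdot}$ and using $\lVert X_{i\cdot}\rVert=O(\sqrt{\rho_n})$ under Assumption \ref{assump:RDPG}, the entrywise deviation is $O_p(\rho_n\log(o{+}m)/\sqrt{o{+}m})$. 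Summing over the $s(s-1)m^2$ test pairs yields a Frobenius-squared bound well inside the claimed rate; the conditional Hoeffding bound on the cross-term then delivers the dominant term $s(s-1)m^2\rho_n\log^2(o{+}m)/\sqrt{o{+}m}$.

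For $\d<d$, I would use Lemma \ref{lemma:rdpg_php} to argue that the truncation $P^{(\d)}=\sum_{k=1}^{\d}\lambda_k u_k u_k^\top$ satisfies $\lVert P - P^{(\d)}\rVert_F^2=\Omega(n^2\rho_n^2)$ and that this discrepancy is spread evenly enough across pairs that, by concentration of the random partitioning of the nodes, its restriction to $\S^c$ is still $\Omega(s(s-1)m^2\rho_n^2)$. Next, via a perturbation/Davis--Kahan argument applied to each subnetwork plus the Procrustes stitching, I would show $\hat P^{(\d)}_{\S^c}$ is close to $P^{(\d)}_{\S^c}$ at a rate dominated by the $\d=d$ rate derived above, so by triangle inequality the squared-bias term is $\Omega(s(s-1)m^2\rho_n^2)$. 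The cross-term is $O_p$ of $\sqrt{\rho_n}$ times the bias root and hence of strictly smaller order under Assumption \ref{assump:RDPG}(\ref{assump:RDPG:rho-cond}), so it cannot cancel the positive bias.

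The most delicate step, I anticipate, is not the cross-term (which is essentially mechanical once independence is set up) but rather establishing that the test-set restriction of $P-P^{(\d)}$ inherits the same order as its global Frobenius norm. This requires arguing that the ``energy'' of the missing eigencomponents cannot concentrate within the training blocks $S_{0q}\times S_{0q}$ --- which is exactly where Assumption \ref{assump:RDPG}(\ref{assump:RDPG:eigen}) on delocalisation of the eigenvectors (the bounded $\lVert U_{i\cdot}\rVert^2$ condition) enters, combined with a standard concentration argument over the uniformly random overlap/partition choice.
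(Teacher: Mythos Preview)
Your plan is essentially the paper's, relying on the same two ingredients (Theorem \ref{theorem:theo-stitch-mat} for $\d=d$ and Lemma \ref{lemma:rdpg_php} for $\d<d$), but your bounding mechanics differ in a way worth noting. For $\d=d$ the paper does \emph{not} exploit the train/test conditional independence; it uses the cruder inequality $\lvert (A_{ij}-\hat P_{ij})^2-(A_{ij}-P_{ij})^2\rvert \le \lvert \hat P_{ij}-P_{ij}\rvert\,(2A_{ij}+2P_{ij}+\lvert\hat P_{ij}-P_{ij}\rvert)$, and then bounds $\sum_{\S^c}A_{ij}=O_p(\lvert\S^c\rvert\rho_n)$ --- this is precisely what produces the stated rate $s(s-1)m^2\rho_n\log^2(o{+}m)/\sqrt{o{+}m}$. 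Your bias--variance split with a conditional Bernstein on the cross-term is cleaner and in fact yields a strictly \emph{smaller} bound; so your claim that ``the cross-term delivers the dominant term'' at the stated rate is off --- neither your squared-bias term nor your cross-term is of that order. (This does not invalidate the $O_p$ claim, it just means you have misidentified where the rate in the theorem statement comes from.) For $\d<d$, the paper's argument is close to yours: it also reduces to $\sum_{\S^c}(\hat P_{ij}^{(\d)}-P_{ij})^2$ via a Bernstein step on the test entries, then lower-bounds this on the set from Lemma \ref{lemma:rdpg_php}. One difference: to control $\lvert \hat P_{ij}^{(\d)}-P_{ij}^{(\d)}\rvert$ the paper does not invoke Davis--Kahan separately at level $\d$; it simply uses $\lVert (W_p\hat X_{p,i\cdot})_{[\d]}-X_{i,[\d]}\rVert \le \lVert W_p\hat X_{p,i\cdot}-X_{i\cdot}\rVert$, recycling the full $d$-dimensional bound.

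One concrete slip: the row-wise ASE error from Theorem \ref{theorem:theo-stitch-mat} is $O_p\bigl(\log^2(o{+}m)/\sqrt{(o{+}m)\rho_n}\bigr)$, not $\sqrt{\rho_n}\log(o{+}m)/\sqrt{o{+}m}$ --- you have the $\rho_n$ on the wrong side (and $\log$ rather than $\log^2$). Combined with $\lVert X_{i\cdot}\rVert=O(\sqrt{\rho_n})$, the correct entrywise deviation is $O_p(\log^2(o{+}m)/\sqrt{o{+}m})$, not $O_p(\rho_n\log(o{+}m)/\sqrt{o{+}m})$. This does not break your argument (both terms in your decomposition are still $O_p$ of the claimed rate under Assumption~\ref{assump:RDPG}(\ref{assump:RDPG:rho-cond})), but it is worth correcting so that the comparison to the $\Omega_p(\rho_n^2)$ lower bound in the $\d<d$ case goes through transparently.
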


\begin{theorem}\label{theorem:RDPG_main}
    Under the same setup as in Theorem \ref{theorem:dcbm_lossdiff} with $\hat{d}$ being the output latent space dimension from \code{NETCROP} for RDPG \newpart{(Algorithm \ref{algo:RDPG})}, then as $n \to \infty$,
    \vspace{-1em}
    \begin{align}
        \P\left(\hat{d} < d \right) \to 0.
    \end{align}
\end{theorem}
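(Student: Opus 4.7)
The plan is to deduce Theorem \ref{theorem:RDPG_main} as a direct corollary of Theorem \ref{theorem:rdpg_lossdiff} by comparing the rates at $\d < d$ and $\d = d$. By construction of \code{NETCROP}, $\hat{d}$ minimizes $\hat{L}^{(\d)} = L(A_{\S^c}, \hat{P}^{(\d)}_{\S^c})$ over $\d \in [d_{max}]$, so the event $\{\hat{d} < d\}$ is contained in $\bigcup_{\d = 1}^{d-1} \{\hat{L}^{(\d)} \leq \hat{L}^{(d)}\}$. Since $d$ is fixed, the union is over a finite index set, so a union bound reduces the problem to showing $\P(\hat{L}^{(\d)} \leq \hat{L}^{(d)}) \to 0$ for each $\d < d$.

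For any fixed $\d < d$, I would write the difference as
\begin{equation*}
\hat{L}^{(\d)} - \hat{L}^{(d)} = \bigl[\hat{L}^{(\d)} - L(A_{\S^c}, P_{\S^c})\bigr] - \bigl[\hat{L}^{(d)} - L(A_{\S^c}, P_{\S^c})\bigr].
\end{equation*}
Theorem \ref{theorem:rdpg_lossdiff} immediately gives that the first bracket is $\Omega_p(s(s-1)m^2 \rho_n^2)$ while the second bracket is $O_p(s(s-1)m^2 \rho_n \log^2(o+m)/\sqrt{o+m})$. Hence $\hat{L}^{(\d)} - \hat{L}^{(d)}$ is bounded below in probability by $s(s-1)m^2 \rho_n^2$ times a factor of the form $1 - o_p(1)$, provided the upper-bound rate is dominated by the lower-bound rate.

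The key verification, and the only real technical point, is that $\rho_n \log^2(o+m)/\sqrt{o+m} = o(\rho_n^2)$, equivalently $\log^2(o+m)/\sqrt{o+m} = o(\rho_n)$. Since $o + m \asymp n$ and Assumption \ref{assump:RDPG}(\ref{assump:RDPG:rho-cond}) gives $\rho_n \geq c_0 n^{-1/2} \log^{2+a_0/2}(n)$ for some $a_0 > 0$, we have
\begin{equation*}
\frac{\log^2(o+m)}{\rho_n \sqrt{o+m}} \;\lesssim\; \frac{\log^2 n}{c_0 \log^{2 + a_0/2}(n)} \;=\; \frac{1}{c_0}\log^{-a_0/2}(n) \;\to\; 0,
\end{equation*}
so the upper-bound rate is strictly of smaller order than the lower-bound rate.

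Combining, $\hat{L}^{(\d)} - \hat{L}^{(d)} = \Omega_p(s(s-1)m^2 \rho_n^2)$, which diverges in probability to $+\infty$ since $s(s-1)m^2\rho_n^2 \to \infty$ (using $m \asymp n$ and the lower bound on $\rho_n$). Therefore $\P(\hat{L}^{(\d)} \leq \hat{L}^{(d)}) \to 0$ for each $\d < d$, and summing over the finitely many $\d \in [d-1]$ yields $\P(\hat{d} < d) \to 0$. The main obstacle, such as it is, is simply the rate comparison in the displayed inequality above; the poly-logarithmic slack $\log^{a_0/2}(n)$ built into Assumption \ref{assump:RDPG}(\ref{assump:RDPG:rho-cond}) is precisely what makes this work, so no separate lemma is required beyond Theorem \ref{theorem:rdpg_lossdiff}.
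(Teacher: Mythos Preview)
Your proposal is correct and follows exactly the approach the paper intends: the paper treats Theorem~\ref{theorem:RDPG_main} as an immediate consequence of Theorem~\ref{theorem:rdpg_lossdiff} (in the same way Theorems~\ref{coro:sbm_final} and~\ref{coro:DCBM_final} follow from their respective loss-difference theorems) without spelling out the details, and your decomposition via the oracle loss, union bound over the finitely many $\d < d$, and rate comparison using the logarithmic slack in Assumption~\ref{assump:RDPG}(\ref{assump:RDPG:rho-cond}) is precisely the intended deduction. If anything, you have written out the argument more carefully than the paper does.
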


\vspace{-1em}


{Note that for both $\d < d$ and $\d = d$ cases, \code{NETCROP} achieves the same rates for the loss difference as \code{ECV} under similar assumptions (\citet[Theorem~2]{cv_edge}, \citet{li_correction_2025}). Thus, \code{NETCROP} achieves similar rates for the loss difference than existing method equivalent assumptions while being significantly faster than the competing method.}




\section{Numerical Results}\label{sec:numerical}
We apply \code{NETCROP} (\href{https://github.com/sayan-ch/NETCROP.git}{GitHub Repository}) for model selection on a few simulated and real networks, 
comparing its performance
with two other cross-validation methods
for networks: \code{NCV} and \code{ECV}.
First, we applied \code{NETCROP} for detecting the number of communities and degree correction on networks, modeled by SBM and DCBM. Here, \code{NETCROP} was compared with both \code{ECV} and \code{NCV}. In the second part, we applied \code{NETCROP} for detecting the latent space dimension of RDPG. In this case, \code{NETCROP} was only compared to \code{ECV} as \code{NCV} was not designed for this problem. In the third part, we applied \code{NETCROP} for detecting the latent space dimension in latent space models. 
\newpart{Finally, we applied \code{NETCROP} for tuning the regularizing parameter in regularized spectral clustering for sparse DCBM. In Appendix~\ref{supp:numerical:small}, we have included some additional simulations for selecting the number of communities and degree heterogeneity in smaller networks.}  

\newpart{
In all the instances in this section, \code{NETCROP} was applied with the parameters chosen using the parameter selection strategy outlined in Section \ref{sec:method:partune} with test proportion $p_{test} = 0.02$ and overlap size $o = n p_o \approx n (1 - \sqrt{2p_{test}}) = 0.8 n$ that produced the number of subnetworks $s = 3$. 
We ran each case with $R \in \{1, 5\}$ repetitions.
}
{For all the applications, we used the authors' recommended parameters for \code{NCV} ($3$ folds), \code{ECV} ($3$ folds and $10\%$ held-out nodes), and their stabilized versions \code{NCV+St} and \code{ECV+St} ($20$ repetitions).}
{Unless otherwise mentioned, the computations in this section were done on a university campus cluster with $40$ processors and $200$ gigabytes of RAM.}
R version $4.5.1$ was used for the computations. The codes for \code{NCV} and \code{ECV} were taken from the CRAN package {\it randnet} \citep{randnet-package}. All algorithms were parallelized whenever possible.

\subsection{Simulation Examples}\label{sec:numerical:sim}
\subsubsection*{Detecting the Number of Communities and Degree Correction in SBM and DCBM}
We simulated $100$ networks from the true model (SBM or DCBM) with $n = 10000$ nodes and $K$ communities. Each node was randomly assigned one of the $K$ communities. The $K \times K$ community connectivity matrix was formed as $B = \alpha((1-\beta) I_K + \beta J_K)$. 
The out-in ratio $\beta$ is related to the strength of the communities
and $\alpha$ determines the sparsity of the network. 
Additionally, the degree parameters $\psi_i$'s were generated from $1 / Beta(4,1)$ distribution, independently for each node. The adjacency matrix was generated as $A_{ij} \sim Bernoulli(P_{ij})$ for the true model SBM and $A_{ij} \sim Bernoulli(P_{ij}\psi_i \psi_j)$ for the true model DCBM, independently for each node pair $i < j$. 
The candidate sets of number of communities 
for true $K=5, 10,$ and $20$,
were $\K = [10], [20],$ and $[30]$, respectively. 
\newpart{For label matching, a simpler adaptation of the Hungarian algorithm \citep{kuhn1955hungarian}, called \code{MatchGreedy} (Algorithm \ref{algo:match-greedy}, \citet{mukherjee2017provably}), was used for faster computations.}
Spectral clustering (Algorithm \ref{algo:spec_clus}) and spherical spectral clustering (Algorithm \ref{algo:spher_spec_clus}) were used to fit SBM and DCBM, \newpart{respectively.}
The accuracy of each method is measured by the percentage of times they were able to detect the true $K$ and the degree correction out of the $100$ simulations. We also reported the mean absolute deviation (MAD) from the true value of $K$ and the computation time. The simulation results are 
in Table \ref{tab:sim:sbm_dcbm}.

\begin{table}[!ht]
      \centering
      {
      \resizebox{\linewidth}{!}{
      \begin{tabular}{c c c c c r c c r }
        \toprule
        \textbf{Network} & \multicolumn{5}{c}{ \textbf{\code{NETCROP}}} & \multicolumn{3}{c}{\textbf{Other Algorithms}} \\
        & \textbf{$s$} & \textbf{$o$} & \textbf{$R$} & \textbf{Accu. \%} \textsuperscript{\ddag} \ \textbf{(MAD}\textsuperscript{\textsection}\textbf{)} & \textbf{Time (sec.}\textsuperscript{\textsection \textsection}\textbf{)} & \textbf{Algo.}\textsuperscript{*} & \textbf{Accu. \% \ (MAD)} & \textbf{Time (sec.)}\\
        \midrule

        SBM, $K = 5$ & \cellcolor{lightgray}{$3$} & \cellcolor{lightgray}{$8002$} & \cellcolor{lightgray}{$1$} & \cellcolor{lightgray}{$100 \% \ (0)$} & \cellcolor{lightgray}{$6.8$} & \cellcolor{lightgray}\code{ECV} & \cellcolor{lightgray}$100 \% \ (0)$ & \cellcolor{lightgray} {$118.2$}\\
        
        $\beta = 0.3, \alpha = 0.1$ & & & {$5$} & {$100 \% \ (0)$} & {$12.1$} &  \code{ECV+St}\textsuperscript{**} & $100 \% \ (0)$ & {$912.9$} \\
        
        $\lambda = 439.8$\textsuperscript{\dag} & & & & & & \cellcolor{lightgray}\code{NCV} & \cellcolor{lightgray}{$97 \% \ (0.04)$} & \cellcolor{lightgray} {$52.3$}\\
        
        & & & & & & \code{NCV+St}\textsuperscript{**} & $100 \% \ (0)$ & {$771.9$} \\
        \\
        SBM, $K = 20$ & \cellcolor{lightgray}{$3$} & \cellcolor{lightgray} {$8002$} & \cellcolor{lightgray}{$1$} & \cellcolor{lightgray}{$100 \% \ (0)$}  & \cellcolor{lightgray}{$21.2$} & \cellcolor{lightgray}\code{ECV} & \cellcolor{lightgray}{$0 \% \ (2.81)$}  & \cellcolor{lightgray}{$442.2$} \\
    
        $\beta = 0.33, \alpha = 0.3$ & & & {$5$} & {$100 \% \ (0)$} & {$77.4$} &  \code{ECV+St} & {$0 \% \ (2.67)$} & {$3300.3$}\\
        
        $\lambda = 1099.8$ & & & & & & \cellcolor{lightgray}\code{NCV} & \cellcolor{lightgray}{$0 \% \ (3.18)$}  & \cellcolor{lightgray}{$353.0$}\\
        
        & & & & & & \code{NCV+St} & {$0 \% \ (2.72)$} & {$2947.3$}\\
        \\
        DCBM, $K = 10$& \cellcolor{lightgray}{$3$} & \cellcolor{lightgray}{$8002$} & \cellcolor{lightgray}{$1$} & \cellcolor{lightgray}{$73 \% (0.51)$}  & \cellcolor{lightgray}{$2.3$} & \cellcolor{lightgray}\code{ECV} & \cellcolor{lightgray}{$54 \% \ (0.95)$}  & \cellcolor{lightgray}{$168.3$}\\
        
        $\beta = 0.2, \alpha = 0.5$ & & & {$5$} & {$96 \% (0.04)$} & {$5.0$} &  \code{ECV+St} & {$57 \% \ (0.77)$}  &{$503.2$}\\
        
        $\lambda = 79.4$& & & & & & \cellcolor{lightgray}\code{NCV} & \cellcolor{lightgray}{$36 \% \ (1.08)$}  & \cellcolor{lightgray}{$67.8$} \\
        
        & & & & & & \code{NCV+St} & {$39 \% \ (0.97)$}  & {$451.3$} \\
        \\
        DCBM, $K = 20$& \cellcolor{lightgray}{$3$} & \cellcolor{lightgray}{$8002$} & \cellcolor{lightgray}{$1$} & \cellcolor{lightgray}{$94 \% (0.09)$}  & \cellcolor{lightgray}{$13.0$} & \cellcolor{lightgray}\code{ECV} & \cellcolor{lightgray}{$70 \% (1.06)$} & \cellcolor{lightgray}{$346.7$}\\
        
        $\beta = 0.33, \alpha = 3$ & & & {$5$} & {$99 \% (0.02)$}  & {$44.1$}  & \code{ECV+St} & {$76 \% (0.65)$} & {$1539.3$} \\
        
        $\lambda = 659.1$ & & & & & & \cellcolor{lightgray}\code{NCV} & \cellcolor{lightgray}{$10 \% \ (4.77)$}  & \cellcolor{lightgray}{$165.4$}\\
        
        & & & & & & \code{NCV+St} & {$16 \% (4.17)$}  &{$1884.9$}\\
        
    \bottomrule
      \end{tabular}}
      }
      \caption{Results for detecting $K$ and degree Correction in simulated networks from SBM and DCBM. \newline {\scriptsize \textsuperscript{\dag}$\lambda$ is the mean of the average densities of the simulated networks; \textsuperscript{\ddag}\textbf{Accu. \%} denotes the percentage of times the correct model was selected out of $100$ simulations; \textsuperscript{\textsection}\textbf{MAD} denotes Mean Absolute Deviation from true $K$; \textsuperscript{\textsection\textsection}\textbf{sec.} denotes seconds; \textsuperscript{*}\textbf{Algo.} denotes Algorithm; \textsuperscript{**}\code{ECV+St} and \code{NCV+St} represent the stabilized versions of \code{ECV} and \code{NCV}, respectively, with 20 repetitions.}}\label{tab:sim:sbm_dcbm}
    \end{table}

Based on the findings presented in Table \ref{tab:sim:sbm_dcbm}, \code{NETCROP} demonstrates superior accuracy compared to \code{ECV} and \code{NCV} in estimating both the number of communities $K$ and the degree correction across 
all scenarios. Moreover, \code{NETCROP} was observed to be approximately 
$7$ to $100$ 
times faster
than the other methods, depending on the simulation setting.
In scenarios involving a true SBM model with a large $K = 20$, \code{NETCROP} achieves an accuracy level of 
{$100 \%$ in just $21.2$}
seconds, whereas \code{ECV}, \code{NCV}, and their stabilized versions achieve
$0 \%$ 
accuracy with much longer time.
In the case of true DCBM models, \code{NETCROP} demonstrates notably higher accuracy compared to \code{ECV} and \code{NCV}, while also requiring significantly less computational time. Repetition improved the performance of \code{NETCROP} significantly in all the simulation scenarios. 
\newpart{Merely} five repetitions were enough to stabilize the outcomes of \code{NETCROP} compared to twenty repetitions for \code{NCV} and \code{ECV}.
In general, \code{NETCROP} achieved lower MADs from the true number of communities compared to the other methods. {We also observed that all the methods were able to detect the correct degree heterogeneity (SBM or DCBM) with $100\%$ accuracy in all the simulation settings. Thus, we did not report those accuracies in Table \ref{tab:sim:sbm_dcbm}}.

\subsubsection*{Estimating the Latent Space Dimension of RDPG}
We simulated networks from RDPG with various levels of sparsity
and applied \code{NETCROP} for RDPG (Algorithm \ref{algo:RDPG}), \code{ECV} and \code{ECV+St} on them. We generated the latent positions of the $n$ nodes as $X_{ik} \sim U(0,1)$ independently for $i \in [n]$, $k \in [d]$. Then the probability matrix was computed as $P = \zeta XX^\top/max(XX^\top)$, where $\zeta$ was used to control the sparsity of the simulated networks. We fixed $n = 10000$ and $d = 10$, and varied the sparsity parameter $\zeta \in \{0.75, 0.70, 0.65\}$ to generate the networks. For each simulation setting, $100$ networks were generated. 
\code{NETCROP} was applied 
on each simulated network with the largest candidate dimension $d_{max} = 20$. Accuracy is defined as the percentage of times \code{NETCROP} detects the true dimension out of the $100$ simulated networks. 
We also reported the MAD from the true latent space dimension. 
The results are in Table \ref{tab:sim_rdpg}.

\begin{table}[!ht]
      \centering
      {
      \resizebox{\linewidth}{!}{
      \begin{tabular}{c c c c c c c r c c r}
         \toprule
        & & & \multicolumn{5}{c}{ \textbf{\code{NETCROP}}} & \multicolumn{3}{c}{\textbf{Other Algorithms}} \\
        $n$ & $d$ & $\zeta$ & \textbf{$s$} & \textbf{$o$} & \textbf{$R$} & \textbf{Accu. \%} \textbf{(MAD)} & \textbf{Time (sec.)} & \textbf{Algo.} & \textbf{Accu. \% \ (MAD)} & \textbf{Time (sec.)}\\
        \midrule
    $10^4$ & $10$ & $0.75$ & \cellcolor{lightgray}{$3$} & \cellcolor{lightgray}{$8002$} & \cellcolor{lightgray}{$1$} & \cellcolor{lightgray}{$100 \% \ (0)$} & \cellcolor{lightgray}{$14.7$} & \cellcolor{lightgray}\code{ECV} & \cellcolor{lightgray}{$85 \% \ (0.25)$} &\cellcolor{lightgray} {$157.9$}\\

    & & & & & {$5$} & {$100 \% \ (0)$} & {$30.5$} & \code{ECV+St} & {$100 \% (0)$} & {$1133.5$} \\
    
    \\
    $10^4$ & $10$ & $0.70$ & \cellcolor{lightgray}{$3$} & \cellcolor{lightgray}{$8002$} & \cellcolor{lightgray}{$1$} & \cellcolor{lightgray}{$99 \% \ (0.01)$} & \cellcolor{lightgray}{$14.0$} & \cellcolor{lightgray}\code{ECV} & \cellcolor{lightgray}{$27 \% (2.30)$} &\cellcolor{lightgray}{$190.6$}\\
    
    & & & & & {$5$} & {$100 \% \ (0)$} & {$29.4$} & \code{ECV+St} & {$53 \% (1.6)$} & {$1150.7$} \\
        
    \\
    $10^4$ & $10$ & $0.65$ & \cellcolor{lightgray}{$3$} & \cellcolor{lightgray}{$8002$} & \cellcolor{lightgray}{$1$} & \cellcolor{lightgray}{$100 \% \ (0)$} & \cellcolor{lightgray}{$13.4$} & \cellcolor{lightgray}\code{ECV} & \cellcolor{lightgray}{$1 \% (6.7)$} &\cellcolor{lightgray}{$179.8$}\\
    
    & & & & & {$5$} & {$100 \% \ (0)$} & {$28.3$} & \code{ECV+St} & {$0 \% (6.8)$} & {$1122.0$} \\
    \\
    
    \bottomrule
      \end{tabular}}
      }
      \caption{Results for detecting the latent space dimension $d$ in simulated RDPGs. 
      }\label{tab:sim_rdpg}
    \end{table}


Table \ref{tab:sim_rdpg} shows that across all the scenarios, \code{NETCROP} outperforms \code{ECV} and its stabilized version in terms of accuracy and MAD, while being approximately $10$ to $40$ times faster. As the sparsity of simulated networks increases (as $\zeta$ decreases), the accuracy of \code{ECV} and \code{ECV+St} drops, while \code{NETCROP} has almost perfect accuracy in all the settings.

\subsubsection*{Estimating the Dimension of Latent Space Model} \label{sec:numerical:latent}
We conducted simulations on networks based on the latent space model outlined in (\ref{eqn:latent.mod}) and used \code{NETCROP} to estimate their latent space dimension.
We used projected gradient descent \citep{ma2020universal} to obtain the estimated latent positions within each training set.
We considered two options for the intercept $\alpha \in \{0,1\}$ and two alternatives for the latent dimension $d \in \{2, 5\}$. $100$ networks of size $n = 1000$ were generated using the latent space model for each combination of $\alpha$ and $d$. Subsequently, \code{NETCROP} was applied to each simulated network with 
a candidate set of $\d = [5]$ for the $d = 2$ scenario 
and $\d = [10]$ for the $d = 5$ scenario.
We present the accuracy, defined as the percentage of correct dimension selections out of 100 simulations, the MAD from the true $d$, and the computation time in seconds. The results are provided in Table \ref{tab:sim:latent}.

\begin{table}[!ht]
      \centering
      {
      \begin{tabular}{c c c c c c c r}
         \toprule
         
         \textbf{$n$} & \textbf{$\alpha$} & \textbf{$d$} & \textbf{$s$} & \textbf{$o$} & \textbf{$R$} & \textbf{Accu. \%} \textbf{(MAD)} & \textbf{Time (sec.)} \\
         \midrule
         
         $10^3$ & $0$ & $2$ & \cellcolor{lightgray}{$3$} & \cellcolor{lightgray}{$802$} & \cellcolor{lightgray}{$1$} & \cellcolor{lightgray}{$99 \% (0.01)$} & \cellcolor{lightgray}{$5.8$}\\
         
         & & &  &  & {$5$} & {$100 \% (0)$} & {$27.3$}\\
         \\
         
         $10^3$ & $1$ & $2$ & \cellcolor{lightgray}{$3$} & \cellcolor{lightgray}{$802$} & \cellcolor{lightgray}{$1$} & \cellcolor{lightgray}{$100 \% (0)$} & \cellcolor{lightgray}{$6.0$}\\
         
         & & &  &  & {$5$} & {$100 \% (0)$} & {$33.5$}\\
         \\
         
         $10^3$ & $0$ & $5$ & \cellcolor{lightgray}{$3$} & \cellcolor{lightgray}{$802$} & \cellcolor{lightgray}{$1$} & \cellcolor{lightgray}{$100 \% (0)$} & \cellcolor{lightgray}{$11.1$}\\
         
         & & &  &  & {$5$} & {$100 \% (0)$} & {$55.1$}\\
         \\

         $10^3$ & $1$ & $5$ & \cellcolor{lightgray}{$3$} & \cellcolor{lightgray}{$802$} & \cellcolor{lightgray}{$1$} & \cellcolor{lightgray}{$99 \% (0.01)$} & \cellcolor{lightgray}{$8.5$}\\
         
         & & &  &  & {$5$} & {$100 \% (0)$} & {$57.2$}\\
         
         \bottomrule
      \end{tabular}}
      \caption{Results for detecting the latent space dimension $d$ using \code{NETCROP} in simulated networks from latent space models. 
      }\label{tab:sim:latent}
    \end{table}


We observe that \code{NETCROP} with $R = 1$ repetition had $99\%$ to $100\%$ accuracy for all four cases in Table \ref{tab:sim:latent}, while taking approximately $6$ to $11$ seconds per simulation. \code{NETCROP} with $R = 5$ repetitions achieved perfect accuracy in approximately $27$ to $57$ seconds per simulation.
Overall, \code{NETCROP} effectively identified the dimension $d$ in latent space models even with one repetition.

\vspace{-1em}
\subsubsection*{Parameter Tuning for Regularized Spectral Clustering}
We applied 
\code{NETCROP} for tuning the regularizing parameter in regularized spectral clustering. We generated $100$ DCBM networks with $n = 10000$ nodes, $K = 5$ communities, $\beta = 0.33$, and $\alpha = 0.3$. We then applied \code{NETCROP} to find an appropriate value of the tuning parameter $\tau$ from a candidate set of $\{0, 0.1, \ldots, 2\}$. We used 
$R \in \{1, 5\}$ repetitions. For the $R = 5$ case, we used the mean and the mode of the estimated $\hat{\tau}$ from \code{NETCROP}. Then, we used those estimated $\hat{\tau}$ to perform regularized spectral clustering on the entire network to compute the average clustering accuracies with $\tau = 0$ (no regularization), the oracle estimator of $\tau$ given by the value of $\tau$ in the candidate set that produces the highest clustering accuracy,
$\hat{\tau}$ selected by \code{NETCROP} with $R = 1$, and $\hat{\tau}_{mean}$ and $\hat{\tau}_{mode}$ from \code{NETCROP} with $R = 5$. We compared our method with the performance of the Davis-Kahan estimator of $\tau$ \citep{joseph2016impact}. We used $\ell_2$ loss function for all the implementations of \code{NETCROP}. We report these average clustering accuracies and their standard deviations (sd) in Figure \ref{fig:tune_par}.

\begin{figure}[!ht]
    \centering
    \includegraphics[width=\textwidth, height=8cm, keepaspectratio]{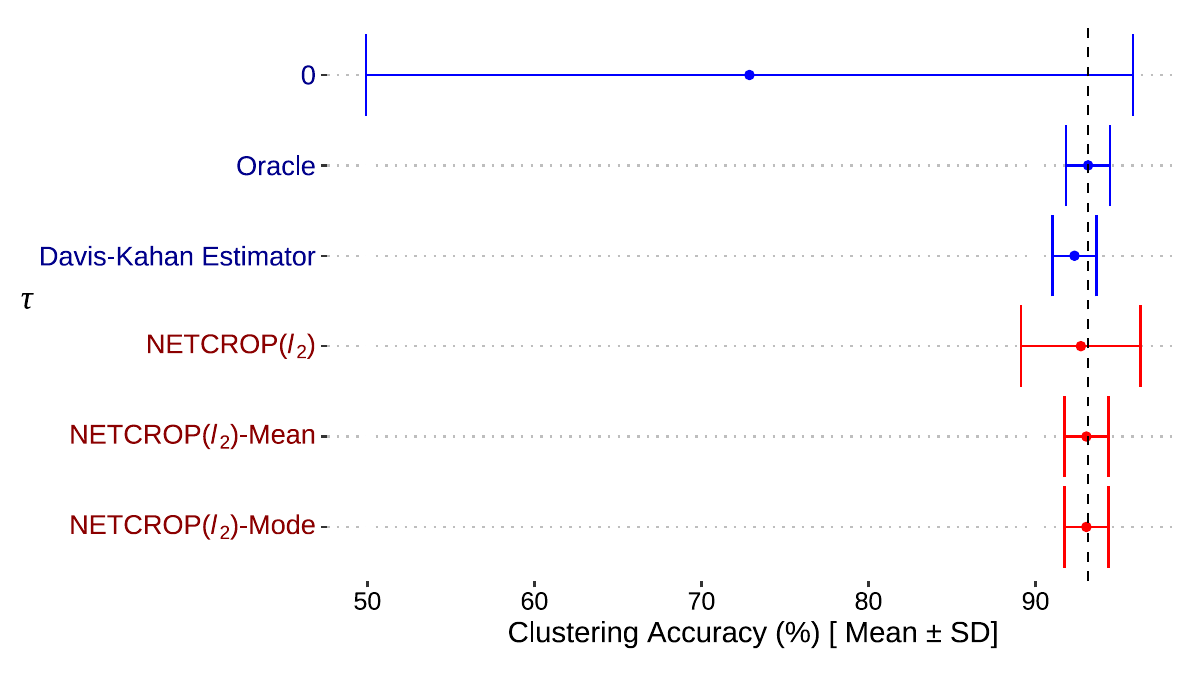}
    \caption{Mean accuracy ($\%$) of regularized spectral clustering with multiple choices of $\tau$ and with $\hat{\tau}$ chosen from \code{NETCROP} with $R = 1 $, and their mean and mode with $R = 5$.}
    \label{fig:tune_par}
\end{figure}

From Figure \ref{fig:tune_par}, it can be observed that the average clustering accuracy is at its lowest when no regularization is applied ($\tau = 0$) with a clustering accuracy of $76.5\%$ with a very large sd of $21.60\%$.
The average accuracies of the oracle and the Davis-Kahan estimator are $93.3\%$ (sd $=1.29\%$) and $92.5\%$ (sd $=1.27\%$), respectively.
Using $\hat{\tau}$, the clustering accuracy 
is $92.5\%$ with an sd of $5.19\%$. Repetitions help increase the average accuracy of \code{NETCROP} close to the oracle accuracy and decrease the standard deviation.
Mean of $\hat{\tau}$ from 5 repetitions of \code{NETCROP} has average clustering accuracy of $93.2\%$ with sd $1.30\%$. The same for mode is $93.2\%$ with sd $1.29\%$.
In summary, \code{NETCROP} effectively selects the optimal tuning parameter that produced average accuracies closest to that of the oracle estimator and combining outcomes from multiple repetitions improved the variation in accuracies significantly. Most instances of \code{NETCROP} outperformed the Davis-Kahan estimator as well.

\subsection{Real Data Examples}\label{sec:numerical:realdata}
We considered two real data examples to demonstrate the performance of \code{NETCROP} for detecting the number of communities and degree correction in a blockmodel. The first dataset is \href{https://gammagl.readthedocs.io/en/latest/generated/gammagl.datasets.DBLP.html}{DBLP}
\newpart{(Digital Bibliography \& Library Project)}
four-area network. It was curated by \cite{Gao_DBLP} and \cite{Ji_DBLP} and previously analyzed by \cite{sengupta2015spectral}. The data consist of $n=4057$ data mining researchers from the research areas of database, data mining, information retrieval, and artificial intelligence. Two nodes are connected if the authors have presented at the same conference, and there are $K=4$ ground-truth communities representing the research areas. The other real network is \href{https://snap.stanford.edu/data/twitch_gamers.html}{Twitch gamers social network}  \citep{sarkar2021twitch}. It has $n=32407$ Twitch users from $K=20$ language communities. Two Twitch users are connected by an edge if they have a mutual follower.

We applied \code{NETCROP}, \code{ECV} and \code{NCV} for detecting the number of communities and degree correction in blockmodels on DBLP and Twitch networks. The candidate number of communities was $\K = [10]$ for DBLP and \newpart{$\K = [30]$ for Twitch.} 
Community detection was performed on the real networks with the predicted $\hat{K}$ and using spectral clustering if the predicted model was SBM and spherical spectral clustering if it was DCBM. The results are in Table \ref{tab:real:sbm_dcbm}.

\begin{table}[!ht]
      \centering
      {
      \resizebox{\linewidth}{!}{
      \begin{tabular}{c c c c c r c c c r c}
        \toprule
        \textbf{Real} & \multicolumn{6}{c}{ \textbf{\code{NETCROP}}} & \multicolumn{4}{c}{\textbf{Other Algorithms}} \\
        \textbf{Network} & \textbf{$s$} & \textbf{$o$} & \textbf{$R$} & \textbf{Mode (\%Freq.)} & \textbf{Time (sec.)} & \textbf{AUC} & \textbf{Algo.} & \textbf{Mode (\%Freq.)} & \textbf{Time (sec.)} & \textbf{AUC}\\
        \midrule

        DBLP & \cellcolor{lightgray}{$3$} & \cellcolor{lightgray}{$3247$} & \cellcolor{lightgray}{$1$} & \cellcolor{lightgray}{DCBM-$4$ \ $(61\%)$} & \cellcolor{lightgray}{$2.0$} & \cellcolor{lightgray}{$0.949$} & \cellcolor{lightgray}\code{ECV} & \cellcolor{lightgray}SBM-$10$ \ $(100\%)$ & \cellcolor{lightgray}{$23.6$} & \cellcolor{lightgray} {$0.866$} \\

        & & & {$5$} & {DCBM-$4$ \ $(93\%)$} & {$4.4$} & {$0.943$} & \code{ECV+St}\textsuperscript{*} & SBM-$10$ \ $(100\%)$ & {$129.2$} & {$0.877$}\\

        & & & \cellcolor{lightgray}{$10$} & \cellcolor{lightgray}{DCBM-$4$ \ $(97\%)$} & \cellcolor{lightgray}{$7.4$} & \cellcolor{lightgray}{$0.947$} &\cellcolor{lightgray}\code{NCV} & \cellcolor{lightgray}SBM-$10$ \ $(100\%)$ & \cellcolor{lightgray}{$10.2$} & \cellcolor{lightgray}{$0.863$}\\

        & & & {$20$} & {DCBM-$4$ \ $(100\%)$} & {$13.5$} & {$0.950$} & \code{NCV+St}\textsuperscript{*} & SBM-$10$ \ $(100\%)$ & {$60.3$} & {$0.867$}\\

        \\
        

        & {$3$} & {$25927$} & {$1$} & {DCBM-$20$ \ $(85 \%)$} & {$70.5$} & {$0.924$} &  &  &  & \\
        & & & {$5$} & {DCBM-$20$ \ $(92 \%)$} & {$230.3$} & {$0.932$} &  &  &  & \\

        
    \bottomrule
      \end{tabular}}
      }
      \caption{Results for detecting the number of communities $K$ and degree correction in DBLP and Twitch networks. 
      }\label{tab:real:sbm_dcbm}
    \end{table}

In Table \ref{tab:real:sbm_dcbm}, the performance of \code{NETCROP} stands out in terms of accurately selecting the number of communities and achieving a higher AUC compared to \code{ECV}, \code{NCV}, and their stabilized versions for the DBLP network. Moreover, \code{NETCROP} operates significantly faster than the other methods. 
\code{NETCROP} 
identified DCBM as the 
\newpart{preferred model}, while \code{NCV} and \code{ECV} favored SBM. Notably, \code{NETCROP} consistently estimated the number of communities as $4$, aligning well with the actual number of communities. In contrast, \code{NCV} 
and
\code{ECV} overestimated it at the highest value of $10$ within the candidate set. Additionally, \code{NETCROP} exhibited superior AUC values compared to \code{ECV} and \code{NCV}, 
while being approximately $5$ to $10$ times faster than them.
For the Twitch network, \code{NETCROP} consistently opted for DCBM with $20$ communities in $85\%$ of cases, achieving an average AUC of $0.924$. The proportion improved to $92\%$ with $5$ repetitions. The estimated number of communities corresponded to the actual count of the languages of the users. The runtime was $70.5$ and $230.3$ seconds for the Twitch network without and with repetitions, respectively. \code{NCV} and \code{ECV} could not be implemented even with a total of $400$ gigabytes of RAM.








\section{Discussion} \label{sec:discussion}
In this paper, a subsampling-based computationally efficient general cross-validation technique for networks has been developed. The method is adaptable to a broad range of model selection and parameter tuning problems for networks. The consistency of \code{NETCROP} has been theoretically analysed for tasks such as identifying the number of communities in SBM and DCBM, as well as determining the latent space dimension of RDPG. Extensive empirical evaluations have been conducted using both simulated and real networks to address issues like community detection and degree correction in blockmodels, latent space dimension estimation in RDPG and latent space models, and optimal parameter selection for regularized spectral clustering for community detection in DCBM. The findings from both theoretical analyses and empirical investigations demonstrate that \code{NETCROP} delivers accurate model selection outcomes in considerably less time compared to existing network cross-validation approaches in the literature.

Current research lacks adequate methods for subsampling or cross-validation in the context of complex networks like dynamic networks, multilayer networks, hypergraphs, and heterogeneous networks. Broadening the concept of \code{NETCROP} for inference, model selection, and parameter tuning for these types of networks will also be interesting directions for future research.


\vspace{-12pt}


\bibliographystyle{asa}
\bibliography{reference}

\section*{Appendix}
Appendices \ref{app:adtl_algo}, \ref{app:proofs}, and \ref{supp:numerical:small} provide relevant algorithms, proofs of all the theoretical results, and additional numerical results for small networks, respectively.

\begin{appendix}
\setcounter{equation}{0}
\setcounter{algorithm}{0}
\setcounter{lemma}{0}
\setcounter{theorem}{0}

\def\theequation{A\arabic{equation}}
\def\thealgorithm{A\arabic{algorithm}}
\def\thelemma{A\arabic{lemma}}
\def\thetheorem{A\arabic{theorem}}
\def\thefigure{A\arabic{figure}}

\section{Relevant Algorithms}\label{app:adtl_algo}
\subsection{Algorithms for Community Detection}
This section describes two community detection algorithms used in the paper to derive the theoretic results and for numerical evaluations of \code{NETCROP}. Both are spectral methods that exploit the eigen structure of the graph adjacency matrix or the graph Laplacian to cluster the nodes in multiple communities. Spectral clustering (\code{SC}) is mainly used on SBM networks and spherical spectral clustering (\code{SSC}) on DCBM networks.
\paragraph*{Spectral Clustering} Spectral clustering is a simple community detection method that recovers the underlying community structures of a network using the eigen decomposition of the corresponding adjacency matrix. For an undirected simple network with adjacency matrix $A$, spectral clustering computes the eigenvectors and eigenvalues of $A$. Then it clusters the $K$ eigenvectors corresponding to the largest $K$ eigenvalues in terms of their absolute values. Any clustering algorithm can be used at this step. However, we consider $K$-means clustering at this step. Since solving a $K$-means clustering problem is NP-hard, we use $(1+\delta)$-approximate solution to the $K$-means problem. Spectral clustering with approximate $K$-means is summarized in Algorithm \ref{algo:spec_clus}.
\begin{algorithm}[!ht]
    \caption{Spectral Clustering with Approximate $K$-means (\code{SC})}
    \label{algo:spec_clus}
    \begin{algorithmic}
      \State \textbf{Input} An adjacency matrix $A_{n \times n}$, number of communities $K$, and an approximating parameter $\delta>0$ for $K$-means clustering. 
    \State \textbf{Output} A membership matrix $\hat C_{n\times K}$.
    \Procedure{SC}{$A, K$}
        \State \textbf 1. Calculate $\hat U \in \R^{n \times K}$ consisting of the leading $K$ eigenvectors (ordered in absolute eigenvalue) of $A$. 
        \State \textbf2. Let $(\hat C, \hat X) \in \C_{n\times K} \times \R^{K \times K}$ such that $\lVert \hat C \hat X - \hat U \rVert_F^2 \leq (1+\delta) \min\limits_{(C, X) \in \C_{n\times K} \times \R^{K \times K}} \lVert CX - \hat U \rVert_F^2$, where $\lVert \cdot \rVert_F$ is the Frobenius norm. 
        \State \textbf3. Output $\hat C$. 
    \EndProcedure
    \end{algorithmic}
\end{algorithm}

\paragraph*{Spherical $K$-median Spectral Clustering} Community recovery is difficult for a DCBM due to the presence of degree heterogeneity. Small values in $\psi$ makes it hard to identify the community membership of the corresponding nodes as few edges are observed for those nodes. Spherical $K$-median spectral clustering overcomes this issue by row-normalizing the top $K$ eigenvector matrix $\hat U$ and then minimizing the matrix $2,1$ distance between the points and cluster centers. The results on DCBM discussed in \cite{lei2015} were derived for spherical $(1+\delta)$-approximate $K$-median spectral clustering. The algorithm is presented in Algorithm \ref{algo:spher_spec_clus}. 
\begin{algorithm}[!ht]
    \caption{Spherical $K$-median Spectral Clustering (\code{SSC})}
    \label{algo:spher_spec_clus}
    \begin{algorithmic}
      \State \textbf{Input} An adjacency matrix $A_{n \times n}$, the number of communities $K$, and an approximating parameter $\delta > 0$ for $K$-median clustering.
    \State \textbf{Output} A membership matrix $\hat C_{n\times K}$.
    \Procedure{SSC}{$A, K$}
        \State \textbf 1. Calculate $\hat U \in \R^{n \times K}$ consisting of the leading $K$ eigenvectors (ordered in absolute eigenvalue) of $A$. 
        
        \State \textbf 2. Let $I_{+} = \left\{ i :\ \lVert \hat{U}_{i*} \rVert > 0 \right\}$ and $\hat{U}^{+} = (\hat{U}_{I_{+} *})$.
        
        \State \textbf 3. Let $\hat{U}^\prime$ be the row-normalized version of $\hat{U}^{+}$.
        
        \State \textbf 4. Let $(\hat{C}^+, \hat X) \in \C_{n\times K} \times \R^{K \times K}$ such that $\lVert \hat{C}^+ \hat X - \hat{U}^\prime \rVert_{2,1} \leq (1+\delta) \min\limits_{(C, X) \in \C_{n\times K} \times \R^{K \times K}} \lVert CX - \hat{U}^\prime \rVert_{2,1}$.
        
        \State \textbf 5. Output $\hat C$ with $\hat C_{i*}$ being the corresponding row in $\hat{C}^+$ if $i \in I_+$, and $\hat C_{i*} = (1,0, \ldots, 0)$ if $i \notin I_+$.
    \EndProcedure
    \end{algorithmic}
\end{algorithm}

\subsection{Algorithm for RDPG}
Here we present adjacency spectral embedding for recovering the latent positions in an RDPG.
\begin{algorithm}[!ht]
    \caption{Adjacency Spectral Embedding (\code{ASE})}
    \label{algo:ASE}
    \begin{algorithmic}
      \State \textbf{Input} An adjacency matrix $A_{n \times n}$, latent position dimension $d$.
    \State \textbf{Output} A latent position matrix $\hat{ X}_{n\times d}$ and the estimated edge probability matrix $\hat{P}_{n \times n}$.
    \Procedure{ASE}{$A, d$}
        \State \textbf 1. Calculate $\hat U \in \R^{n \times d}$ consisting of the leading $d$ eigenvectors of $A$, sorted in descending order of the largest $d$ eigenvalues $\hat{\lambda}_1 \geq \ldots \geq \hat{\lambda}_d$ .
        \State \textbf2. Let $\hat{\Lambda} = diag(\hat{\lambda}_1 \ldots \hat{\lambda}_d)$. Compute $\hat{X} = \hat{U} \hat{\Lambda}^{\frac{1}{2}}$ and $\hat{P} = \hat{X} \hat{X}^\top$.
        \State \textbf3. Output $(\hat{X}, \hat{P})$.
    \EndProcedure
    \end{algorithmic}
\end{algorithm}

\subsection{Algorithms for Community Label Matching} \label{app:match-algo}

Here, we present two algorithms for matching two sets of community labels - \code{MatchBF} (Algorithm \ref{algo:brute-force}) and \code{MatchGreedy} (Algorithm \ref{algo:match-greedy}). \code{MatchBF} searches over all possible permutations of the first set of labels to select the permutation that has the least number of mismatched nodes with the second set of labels. Although this algorithm gives the best permutation of the labels, it has a computation complexity of $O(K!)$ for $K$ communities and is not computationally feasible for large $K$. Algorithm \code{MatchGreedy} computes the number of nodes in each pair of communities between the two sets of labels. It swaps the communities between the two sets of labels that have the highest number of nodes between them. It has a complexity of $O(K^2)$. \code{MatchGreedy} is \newpart{a faster and simpler adaptation of the Hungarian algorithm \citep{kuhn1955hungarian} for label matching, and} is shown to produce the same permutation matrix as \code{MatchBF} when one of the community membership matrices can be expressed as a permutation of the other plus some error term that satisfies certain conditions \citep{mukherjee2017provably}.

\begin{algorithm}[!ht]
    \caption{\code{MatchBF}}
    \label{algo:brute-force}
    \begin{algorithmic}
        \State \textbf{Input} Two community membership matrices $C_1$ and $C_2$ with the same number of communities 
        \State \textbf{Output} A permutation matrix $P$ that best aligns $C_1$ with $C_2$
    
        \Procedure{\code{MatchBF}}{$C_1, C_2$}
            \State $K \gets$ \text{number of columns of }$C_1$ 
            \State $\E_K \gets$ list of all permutation matrices of order $K \times K$
            \State \textbf{initialize} a vector $mismatch$ of length $K!$
            \State $i \gets 1$
            \For{$E \in \E_K$}
                \State $mismatch[i] \gets$ $\lVert(C_1E - C_2) \rVert_0$
                \State $i \gets i + 1$
            \EndFor
            \State \textbf{return} $\E_K[\argmin\limits_i  mismatch[i]]$
        \EndProcedure
    \end{algorithmic}
\end{algorithm}

\begin{algorithm}[!ht]
    \caption{\code{MatchGreedy}}
    \label{algo:match-greedy}
    \begin{algorithmic}
    
    \State \textbf{Input} Two community membership matrices $C_1$ and $C_2$ with the same number of communities 
        \State \textbf{Output} A permutation matrix $P$ that approximates the best alignment of $C_1$ with $C_2$
        
        \Procedure{\code{MatchGreedy}}{$C_1, C_2$} 
                \State $K \gets$ \text{number of columns of } $C_1$
                \State $P \gets \mathbf{0}_{K\times K}$ (null matrix of order $K \times K$)
                \State $M \gets$ $C_1^T C_2$
                
                \While{there are rows or columns of $M$ left with positive values}
                    \State \textbf{find} $(i,j) = \argmax\limits_{i,j} M_{ij}$ (ties are broken arbitrarily) 
                    \State $P_{ij} \gets 1$
                    \State \textbf{replace} the $i$-th row and the $j$-th column of $M$ by $-1$
                \EndWhile
            \State \textbf{return} $P$
        \EndProcedure
    \end{algorithmic}
\end{algorithm}

\subsection{\code{NETCROP} for Estimating the Dimension of Latent Space Model}\label{supp:est_latent}
The latent space model \citep{Hoff2002LatentSA, ma2020universal} is a general class of models that assumes that the connection probabilities between the nodes are influenced by the positions of the nodes in an unobserved latent space. Let $z_i \in \R^d$ be the latent position of the $i$th node. Then in a distance based latent space model, the probability of an edge between nodes $i$ and $j$ is modelled as
\vspace{-1em}
\begin{align}
    \eta_{ij} \coloneqq \logodds (A_{ij} = 1 | z_i, z_j, \alpha) = \log \left(\frac{P_{ij}}{1 - P_{ij}}\right) = \alpha -  \lVert z_i - z_j\rVert^2.\label{eqn:latent.mod}
\end{align} 

\vspace{-1em}

\citet{ma2020universal} provided a projected gradient descent algorithm that finds the maximum likelihood estimator of $\alpha$ and $Z$. However, their algorithm requires the knowledge of the true latent space dimension $d$ and requires a few tuning parameters including the step size and the initializer for the gradient descent algorithm. Here, we focus on applying \code{NETCROP} for selecting the
the latent space dimension $d$ from a set of candidate dimensions $[d_{max}]$. The method is similar to \code{NETCROP} for RDPG (Algorithm \ref{algo:RDPG}) except for the estimation and stitching steps. Given the subsets $S_{01}, \ldots, S_{0s}$ with overlap nodes in $S_0$, the method of \citet{ma2020universal} is applied on subnetwork $A_{S_{0q}}$ to obtain $\hat{Z}^{(\d)}_{q}$ and $\hat{\alpha}^{(\d)}_{q}$ for each $\d \in [d_{max}]$ and $q \in [s]$. The latent positions are only identifiable up to rotation, reflection, and translation. We use generalized Procrustes transformation to match the estimated latent positions of the overlap nodes from the subnetworks with respect to an arbitrarily selected standard subnetwork. Then for each candidate dimension $\d \in [d_{max}]$ and node pair $(i,j) \in \S^c$ such that $i \in S_p$ and $j \in S_q$, $1 \leq p < q \leq s$, the intercept term is estimated as the average $\hat{\alpha}^{(\d)}_{pq} = (\hat{\alpha}^{(\d)}_{p} + \hat{\alpha}^{(\d)}_{q})/2 $ and the edge probability is predicted as
\vspace{-1em}
\begin{align}
    \hat{\eta}_{ij}^{(\d)} = \hat{\alpha}^{(\d)}_{pq} - \lVert \hat{z}^{(\d)}_{p, i} - \hat{z}^{(\d)}_{q, j} \rVert^2 \ \mbox{and} \ \hat{P}_{ij}^{(\d)} = \frac{\exp{\left(\hat{\eta}_{ij}^{(\d)}\right)}}{1 + \exp{\left(\hat{\eta}_{ij}^{(\d)}\right)}}.
\end{align}

\vspace{-1em}
The loss functions are computed as the sum of losses between the predicted edge probabilities and the observed adjacency matrix over the entries in the test set for each candidate $\d \in [d_{max}]$. Then, the outcome $\hat{d}$ is chosen as the minimizer of the computed losses. The procedure for selecting the latent space dimension $d$ may be repeated $R$ times and then a majority voting can be used to increase the estimation accuracy as in Algorithm \ref{algo:RDPG}.

\end{appendix}


\begin{appendix}
\setcounter{equation}{0}
\setcounter{algorithm}{0}
\setcounter{lemma}{0}
\setcounter{theorem}{0}

\def\thesection{B}
\def\theequation{B\arabic{equation}}
\def\thealgorithm{B\arabic{algorithm}}
\def\thelemma{B\arabic{lemma}}
\def\thetheorem{B\arabic{theorem}}
\def\thefigure{B\arabic{figure}}

\section{Proofs of Theoretical Results}\label{app:proofs}
This section presents the proofs of the theorems and results in the paper. Throughout all the proofs, it is assumed that \code{NETCROP} is applied on a network of size $n$ with $s$ subgraphs and overlap size $o$. The size of each non-overlap part is $m = (n-o)/s$ with the size of each working subgraph being $(o+m)$. The overlap part is denoted by $S_0$ and the $s$ non-overlap parts by $S_1, \ldots, S_s$. The subgraphs are spanned by the subsets of nodes $S_{0q} = S_0 \cup S_q$, for $q \in [s]$. The training and test sets of \code{NETCROP} are 
\begin{align*}
\text{Training set } \ &= \S \coloneqq \bigcup\limits_{q \in [s]} (S_{0q} \times S_{0q}),\\
    \text{Test set } \ &= \S^c = (S \times S) \setminus \S = \mathop{\mathop{\bigcup}\limits_{p < q}}\limits_{p,q \in [s]} (S_p \times S_q).
\end{align*}
Note that $\lvert \S^c \rvert = \binom{s}{2}m^2$. Additionally, we redefine the notations for the computed loss and the oracle loss as
\begin{align*}
    L(A, \hat{P}) \coloneqq L(A_{\S^c}, \hat{P}_{\S^c}) \ \ \mbox{ and } \ \ L(A) \coloneqq L(A_{\S^c}, P_{\S^c}).
\end{align*}

We state an upper bound of Bernoulli sums that follows directly from Bernstein's inequality. This bound is used throughout the appendix.
\begin{lemma}\label{lemma:bern}
    Let $M_i \sim Bernoulli(p_i)$ independently for $i \in [n]$ and $\mu_n = \frac{1}{n}\sum\limits_{i \in [n]} p_i$. Then 
    the following bounds hold for sufficiently large $n$:
    \begin{align}
        P\left( \sum\limits_{i \in [n]} M_i  - n\mu_n \geq \delta n\mu_n \right) &\leq \exp{\left(- \frac{\delta^2 n\mu_n/2}{\delta/3 + 1} \right)} \ \mbox{for any }\ \delta > 0, \\
        &= \exp{\left( - \Omega\left(\delta n\mu_n\right) \right)}, \ \mbox{for }\ \delta \coloneqq \delta_n = \Omega(1).
    \end{align}

    Additionally, 
    \begin{align}
        \left\lvert \frac{1}{n}\sum_{i \in [n]} (M_i - p_i) \right\rvert = O_p\left( \sqrt{\frac{\mu_n}{n}} \right) \ \mbox{ as }\ n \to \infty.
    \end{align}

\end{lemma}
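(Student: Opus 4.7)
\textbf{Plan of proof for Lemma~\ref{lemma:bern}.} The lemma is a standard concentration bound for independent Bernoulli sums, so the plan is to derive the first inequality directly from Bernstein's inequality and then obtain the second-line asymptotic statement and the $O_p$ bound as easy consequences. No subtle ingredients are needed.

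First, I would set $X_i = M_i - p_i$. These are independent, centered, bounded by $|X_i| \leq 1$, and satisfy $\mathrm{Var}(X_i) = p_i(1-p_i) \leq p_i$, so $V \coloneqq \sum_{i=1}^n \mathrm{Var}(X_i) \leq n\mu_n$. Applying Bernstein's inequality with $t = \delta n \mu_n$ gives
\begin{equation*}
    P\!\left(\sum_{i \in [n]} M_i - n\mu_n \geq \delta n \mu_n\right)
    \leq \exp\!\left(-\frac{t^2/2}{V + t/3}\right)
    \leq \exp\!\left(-\frac{\delta^2 n\mu_n/2}{1 + \delta/3}\right),
\end{equation*}
which is exactly the first bound of the lemma.

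Next, for the $\Omega$-form bound, I would argue that when $\delta = \delta_n = \Omega(1)$, the ratio $\delta^2/(1+\delta/3)$ is of order $\delta$. Indeed, if $\delta$ is bounded above as well, then $\delta^2 \asymp \delta$ and $1+\delta/3 \asymp 1$, so $\delta^2/(1+\delta/3) \asymp \delta$; while if $\delta \to \infty$, then $\delta^2/(1+\delta/3) \sim 3\delta$. Either way the exponent is of order $-\delta n \mu_n$, giving the claimed $\exp(-\Omega(\delta n \mu_n))$.

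Finally, for the $O_p$ statement, I would use a second-moment argument rather than the exponential bound, since it is both simpler and tight. Because the $M_i$ are independent,
\begin{equation*}
    \mathrm{Var}\!\left(\frac{1}{n}\sum_{i \in [n]} (M_i - p_i)\right) = \frac{1}{n^2}\sum_{i=1}^n p_i(1-p_i) \leq \frac{\mu_n}{n}.
\end{equation*}
Chebyshev's inequality then yields, for any $\varepsilon > 0$, the bound $P\!\left(\left|\frac{1}{n}\sum_i (M_i - p_i)\right| > K_\varepsilon \sqrt{\mu_n/n}\right) \leq 1/K_\varepsilon^2$, which is at most $\varepsilon$ upon choosing $K_\varepsilon = \varepsilon^{-1/2}$. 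This is exactly the definition of $O_p(\sqrt{\mu_n/n})$ stated earlier in the paper, completing the proof. The only potential obstacle is bookkeeping for the $\Omega$ constants in the second line, but since $\delta = \Omega(1)$ is assumed, that analysis reduces to the two elementary cases above.
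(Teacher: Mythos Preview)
Your proposal is correct and follows essentially the same approach as the paper: Bernstein's inequality with $V \leq n\mu_n$ for the exponential tail bound, and Chebyshev's inequality with $K_\varepsilon = \varepsilon^{-1/2}$ for the $O_p$ statement. If anything, you give a slightly more careful justification of the intermediate $\exp(-\Omega(\delta n\mu_n))$ claim than the paper does.
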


\begin{proof}
    From Bernstein's inequality,
    \begin{align*}
        P\left( \sum\limits_{i \in [n]} M_i  - n\mu_n \geq \delta n\mu_n \right) &\leq \exp{\left(- \frac{\delta^2 n^2\mu_n^2/2}{\delta n \mu_n/3 + \sum\limits_{i \in [n]} p_i (1-p_i)} \right)} \\
        &\leq \exp{\left(- \frac{\delta^2 n^2\mu_n^2/2}{\delta n \mu_n/3 + n\mu_n} \right)}\\
        &= \exp{\left(- \frac{\delta^2 n\mu_n/2}{\delta/3 + 1} \right)}.
    \end{align*}

    From the definition of $O_p$ in \citet[Section~1.2.5]{Serfling}. For a sequence of random variables $X_n$ and a sequence of positive real numbers $a_n$, $X_n = O_p(a_n)$ if for every $\epsilon > 0$, there exist $M(\epsilon)$ and  $N(\epsilon)$ such that $P(\lvert X_n \rvert \geq M(\epsilon) a_n )) \leq \epsilon$ for any $n \geq N(\epsilon)$.
    
    For any $\epsilon > 0$, define $M(\epsilon) = \epsilon^{-\frac{1}{2}}$ and $N(\epsilon) = 1$. Then by Chebyshev's inequality and for $n = 1, 2, \ldots$, 
    \begin{align}
        P\left( \left\lvert \frac{1}{n}\sum_{i \in [n]} (M_i - p_i) \right\rvert \geq M(\epsilon) \sqrt{Var\left(\frac{1}{n}\sum_{i \in [n]} (M_i - p_i) \right)} \right) \leq \epsilon.
    \end{align}

    Thus from the earlier definition, 
    \begin{align*}
     \left\lvert \frac{1}{n}\sum_{i \in [n]} (M_i - p_i) \right\rvert = O_p\left(\sqrt{Var\left(\frac{1}{n}\sum_{i \in [n]} (M_i - p_i)\right)}\right) = O_p\left(\sqrt{ \sum\limits_{i \in [n]} \frac{p_i(1-p_i)}{n^2}}\right) = O_p\left(\sqrt{\frac{\mu_n}{n}}\right).   
    \end{align*}
\end{proof}

\subsection{Proof of Theorem \ref{theorem:sbm_lossdiff}}\label{app:proofs:SBM}
In this section, we prove the bounds on the difference between computed loss and the oracle loss for SBM. Case 1 deals with the scenario $\K < K$ and Case 2 is for $\K = K$ case.
We reintroduce and redefine some of the notations for the proof.
Recall that $g_i$ is the true community label of the $i$-th node and $G_{k}$ consists of the nodes in true community $k$, $i \in [n]$ and $k \in [K]$. For a candidate value $\K \in [K_{max}]$, the predicted community label of the $i$-th node from \code{NETCROP} with $\K$ communities is $\hat{g}_{ i}^{(\K)}$ and the set of nodes in the predicted $k$-th community is $\hat{G}_{k}^{(\K)}$, $i \in [n]$, and $k \in [\K]$. The superscript is dropped for $\hat{P}, \hat{g}$ and $\hat{G}$ whenever it is clear from the context. We also define the following quantities - 
\begin{align}
    \nonumber & T_{k, k^\prime, l, l^\prime} = \left\{  (i,j) \in \S^c : i \neq j \text{ and } i \in \hat{G}_k \cap G_l, j \in \hat{G}_{k^\prime} \cap G_{l^\prime} \right\},\\ \nonumber
    &T_{k, k^\prime, l, l^\prime}^{(pq)} = \left\{  (i,j) \in S_p \times S_q : i \neq j \text{ and } i \in \hat{G}_k \cap G_l, j \in \hat{G}_{k^\prime} \cap G_{l^\prime} \right\},\\ \nonumber
    & T_{l l^\prime} = \left\{  (i,j) \in \S^c : i \neq j \text{ and } i \in G_l, j \in G_{l^\prime} \right\},\\ \nonumber
    & \hat{T}_{kk^\prime} = \left\{  (i,j) \in \S^c : i \neq j \text{ and } i \in \hat{G}_k, j \in \hat{G}_{k^\prime} \right\},\\ \nonumber
    & \U_{l l^\prime} = \left\{  (i,j) \in \S : i \neq j \text{ and } i \in G_l, j \in G_{l^\prime} \right\},\\ \nonumber
    & \U_{l l^\prime}^{(q)} = \left\{  (i,j) \in S_{0q} \times S_{0q} : i \neq j \text{ and } i \in G_l, j \in G_{l^\prime} \right\},\\ \nonumber
    & \hat{\U}_{kk^\prime} = \left\{  (i,j) \in \S : i \neq j \text{ and } i \in \hat{G}_k, j \in \hat{G}_{k^\prime} \right\},\\ 
    &\hat{\U}_{k k^\prime}^{(q)} = \left\{\ \ (i,j) \in S_{0q} \times S_{0q} : i \neq j \ \text{and}\ i \in \hat{G}_k, j \in \hat{G}_{k^\prime}  \right\}. \label{notation:test_subsets}
\end{align}

We restate a few results from \citet{sonnet_comm} that are used in the proofs of Theorem \ref{theorem:sbm_lossdiff}. 
The first lemma bounds the sizes of the smallest and the largest communities in a subnetwork from a blockmodel. The next theorem bounds the error rate of spectral clustering on a subnetwork from SBM. All the results have been modified based on Assumption \ref{assump:SBM} and 
and switched to the notations used in this paper.

\begin{lemma}{\citep[Lemma~S3]{sonnet_comm}}\label{lemma:maxmincomm}
    Let $\Tilde{n}_{min}$ and $\Tilde{n}_{max}$ be the sizes of the smallest and the largest communities in a random subnetwork of size $\eta$ from an SBM or a DCBM with $n$ nodes and $K$ communities. Let $\{n_1, \ldots, n_K\}$ be the expected sizes of the communities in the entire network satisfying $\min\limits_{k \in [K]} n_k > \gamma n$ for some $\gamma > 0$. Then the following bounds hold 
    \begin{align}
        \Tilde{n}_{min} \geq \frac{\eta \gamma}{1+\gamma}\ \mbox{ and }\ \Tilde{n}_{max} \leq \frac{\eta (1-\gamma)}{\gamma},
    \end{align}
    each with probability tending to $1$ as $\eta/n > c$ for some $c > 0$ and $n \to \infty$.
\end{lemma}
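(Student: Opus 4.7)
The plan is to exploit the fact that each working subnetwork of size $\eta$ in \code{NETCROP} is obtained by sampling $\eta$ nodes uniformly without replacement from the $n$ nodes of the original network (the overlap $S_0$ is uniform on $\binom{[n]}{o}$ and each $S_q$ is uniform on $\binom{[n]\setminus S_0}{m}$, so $S_{0q}$ is uniform on $\binom{[n]}{o+m}$). Under this sampling scheme, the count $\tilde{n}_k$ of nodes from true community $k$ appearing in the subnetwork follows a hypergeometric distribution with parameters $(n,n_k,\eta)$, and hence $E[\tilde{n}_k] = \eta n_k/n$. Because $n_k \geq \gamma n$ for every $k \in [K]$, this mean is at least $\gamma \eta$; and because $\sum_k n_k = n$, for any fixed $k$ we have $n_k \leq n - (K-1)\gamma n \leq (1-\gamma) n$ when $K \geq 2$, so the mean is also at most $(1-\gamma)\eta$.

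For the lower bound $\tilde{n}_{\min} \geq \eta\gamma/(1+\gamma)$, I would invoke the one-sided Serfling/Hoeffding concentration inequality for hypergeometric sampling without replacement,
\begin{align*}
    P\!\left(\tilde{n}_k \leq \tfrac{\eta n_k}{n} - t\right) \leq \exp\!\left(-\tfrac{2t^2}{\eta}\right), \qquad t > 0,
\end{align*}
and choose $t = \eta\bigl(n_k/n - \gamma/(1+\gamma)\bigr)$, which forces $\tilde{n}_k \geq \eta\gamma/(1+\gamma)$ on the complementary event. A short algebraic calculation gives $n_k/n - \gamma/(1+\gamma) \geq \gamma - \gamma/(1+\gamma) = \gamma^{2}/(1+\gamma)$, so $t^2/\eta \geq c\gamma^{4} n/(1+\gamma)^{2}$ under the standing hypothesis $\eta/n > c$, making the failure probability exponentially small in $n$. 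For the upper bound $\tilde{n}_{\max} \leq \eta(1-\gamma)/\gamma$, I would apply the mirror inequality, using $E[\tilde{n}_k] \leq (1-\gamma)\eta$. The target bound $(1-\gamma)/\gamma$ is quite loose in the regime of interest (in fact trivial whenever $\gamma \leq 1/2$, since $\tilde{n}_{\max} \leq \eta$ always), so only a constant-fraction deviation from the mean is required, again with exponentially small failure probability.

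Finally, a union bound across the $K$ (fixed) communities combines the two concentration statements into a single event of probability tending to $1$ as $n \to \infty$ whenever $\eta/n > c$. The main obstacle is purely bookkeeping---arranging the constants so that the prescribed bounds $\gamma/(1+\gamma)$ and $(1-\gamma)/\gamma$ emerge exactly; the underlying probabilistic content is nothing more than hypergeometric concentration together with the distributional assumption $n_k/n \geq \gamma$. Since downstream uses of this lemma in Theorems \ref{theorem:sbm_lossdiff} and \ref{theorem:dcbm_lossdiff} only require $\tilde{n}_{\min}$ and $\tilde{n}_{\max}$ to be bounded away from $0$ and $\eta$ by fixed constants, the precise constants are inessential and the proof reduces to a routine concentration argument followed by a union bound.
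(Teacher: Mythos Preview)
Your proposal is correct. The paper does not actually prove this lemma---it is merely restated from \citet{sonnet_comm} without proof---so there is no ``paper's own proof'' to compare against. Your hypergeometric concentration argument is the natural one, and it matches exactly the style of computation the paper carries out elsewhere: see the hypergeometric tail bound \eqref{eqn:hyper-tail} in the proof of Lemma~\ref{lemma:sbm-dcbm_php}, which establishes $\lvert G_l \cap S_p\rvert \geq \gamma m/2$ via the same Hoeffding/Serfling inequality and union bound over $K$ communities. Your observation that the upper bound $\eta(1-\gamma)/\gamma$ is vacuous whenever $\gamma \leq 1/2$ is also correct and worth noting.
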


\begin{theorem}{\citep[Theorem~4]{sonnet_comm}}\label{SBM_bound}
    Under Assumption \ref{assump:SBM}, the number of misclustered nodes from \code{SC} (Algorithm \ref{algo:spec_clus}) on $A_{S_{0q}}$, where $A$ is from SBM with $n$ nodes and $K$ communities, is $O_p\left(\rho_n^{-1}\right)$ for $q \in [s]$, as $o/n > c$ for some $c>0$ and $n \to \infty$.
\end{theorem}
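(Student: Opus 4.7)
The plan is to follow the now-standard template for consistency of spectral clustering on SBM in the style of Lei and Rinaldo (2015), applied to the random subadjacency matrix $A_{S_{0q}}$. Conditional on the induced membership, $A_{S_{0q}}$ is itself drawn from an SBM on $o+m \asymp n$ nodes with the same connectivity parameters $B = \rho_n B_0$. First I would invoke Lemma \ref{lemma:maxmincomm}, which ensures that with probability tending to one, every induced community size $n_k^{(q)}$ satisfies $n_k^{(q)} \geq \gamma(o+m)/(1+\gamma) = \Omega(o+m)$. This reduces the task to proving the claim conditional on a roughly balanced SBM of size $o+m$, after which the randomness of the subset $S_{0q}$ plays no further role.

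Next I would assemble three standard ingredients. (i) Concentration: under $\rho_n = \Omega(n^{-1}\log n)$ and $o \asymp n$, a matrix Bernstein / combinatorial bound gives $\lVert A_{S_{0q}} - P_{S_{0q}}\rVert = O_p(\sqrt{(o+m)\rho_n})$. (ii) Signal strength: since $P_{S_{0q}} = C_{S_{0q}\cdot}\, B\, C_{S_{0q}\cdot}^{\top}$, the nonzero eigenvalues of $P_{S_{0q}}$ coincide (up to a similarity) with those of $\rho_n B_0^{1/2}(C_{S_{0q}\cdot}^{\top}C_{S_{0q}\cdot})B_0^{1/2}$; the diagonal matrix $C_{S_{0q}\cdot}^{\top}C_{S_{0q}\cdot}$ has entries $n_k^{(q)} = \Omega(o+m)$ and $\lambda_K(B_0) > 0$ by Assumption \ref{assump:SBM}, so $\lambda_K(P_{S_{0q}}) = \Omega((o+m)\rho_n)$. (iii) Eigenvector perturbation: Davis--Kahan $\sin\Theta$ yields an orthogonal $W \in \O_K$ with
\[
\lVert \hat{U}_q - U_q W \rVert_F \;=\; O_p\!\left( \frac{\lVert A_{S_{0q}} - P_{S_{0q}} \rVert}{\lambda_K(P_{S_{0q}})} \right) \;=\; O_p\!\left( \frac{1}{\sqrt{(o+m)\rho_n}} \right).
\]

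Finally I would plug this into the usual analysis of $(1+\delta)$-approximate $K$-means on the rows of $\hat{U}_q$. Because the rows of the population eigenvector matrix $U_q$ take only $K$ distinct values with pairwise separation of order $1/\sqrt{o+m}$ (balanced communities), any row of $\hat{U}_q$ that is closer than half this separation to its correct population center gets assigned to the correct cluster (up to a global label permutation). A standard counting argument then bounds the number of misclustered nodes by a constant multiple of $(o+m)\cdot \lVert \hat{U}_q - U_q W\rVert_F^2$, which is $O_p((o+m)\cdot 1/((o+m)\rho_n)) = O_p(\rho_n^{-1})$, giving the stated rate.

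The main obstacle, and the reason the result is not an immediate quotation of Lei--Rinaldo, is that $S_{0q}$ is itself a random subset rather than a fixed population, which could in principle produce an ill-conditioned or rank-deficient subblock. Lemma \ref{lemma:maxmincomm} is the lever that removes this difficulty by ensuring that every induced community is linear in $o+m$ with high probability; once that event is conditioned upon, the signal strength, concentration and perturbation steps become essentially the same as in the fixed-population case, and the $(1+\delta)$-approximation slack in the $K$-means step is absorbed by a constant factor.
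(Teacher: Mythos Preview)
The paper does not prove this statement at all: it is quoted verbatim as Theorem~4 of \citet{sonnet_comm} and used as a black box in the proof of Theorem~\ref{theorem:sbm_lossdiff}. Your sketch is the standard Lei--Rinaldo (2015) template transported to the random subnetwork via Lemma~\ref{lemma:maxmincomm}, which is correct in outline and is almost certainly what the cited reference does; there is nothing in the present paper to compare it against.
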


Next we state a lemma bounding sum of $A_{ij}$'s over node pairs in different communities.
\begin{lemma}\label{lemma:sum-aij}
    Let $\S_{kk^\prime} \subset G_{k} \times G_{k^\prime}$ be an arbitrary subset of node pairs from a blockmodel with $n$ nodes and $K$ communities with $G_k$ being the set of all nodes in the $k$-th community. If $\lvert \S_{kk^\prime} \rvert \geq c n^2$ for some $c > 0$ and $n \rho_n = \omega(1)$ as $n \to \infty$, then
    \begin{align}
        &\frac{1}{\lvert \S_{k k^\prime}\rvert} \sum\limits_{(i,j) \in \S_{k k^\prime}} (A_{ij} - B_{kk^\prime}) = O_p\left(\sqrt{\frac{\rho_n}{\lvert S_{kk^\prime} \rvert}}\right), \label{eq:sqrho}\\
        \mbox{and }\ \ &\frac{1}{\lvert \S_{k k^\prime}\rvert} \sum\limits_{(i,j) \in \S_{k k^\prime}} A_{ij} = O_p(\rho_n).\label{eq:halfrho}
    \end{align}
\end{lemma}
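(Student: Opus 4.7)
The plan is to decouple the symmetry $A_{ij} = A_{ji}$, apply a second-moment (or Bernstein) bound to an independent sub-collection, and then derive the second bound from the first by the triangle inequality.

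First I would split $\S_{kk'}$ into the three disjoint sets $\S_{kk'}^{<} = \{(i,j) \in \S_{kk'}: i<j\}$, $\S_{kk'}^{>} = \{(i,j) \in \S_{kk'}: i>j\}$, and $\S_{kk'}^{=}$ (only relevant if $k = k'$, and of cardinality at most $n$, hence negligible compared to $|\S_{kk'}| \geq c n^2$). The random variables $\{A_{ij} : (i,j) \in \S_{kk'}^{<}\}$ are independent Bernoulli$(B_{kk'})$ with $B_{kk'} = \rho_n B_{0,kk'}$ by Assumption~\ref{assump:SBM}. A symmetric statement holds for $\S_{kk'}^{>}$, and each of $\S_{kk'}^{<}, \S_{kk'}^{>}$ is related to the other by index-swap, so the two sums are equal. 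Therefore
\begin{equation*}
  \sum_{(i,j) \in \S_{kk'}} (A_{ij} - B_{kk'}) = 2 \sum_{(i,j) \in \S_{kk'}^{<}} (A_{ij} - B_{kk'}) + O(n),
\end{equation*}
and the leading term involves a genuine sum of independent centered Bernoullis.

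Next I would invoke Lemma~\ref{lemma:bern} (with $p_i \equiv B_{kk'} = \rho_n B_{0,kk'}$ and $n$ there replaced by $|\S_{kk'}^{<}| \asymp |\S_{kk'}|$) to conclude that
\begin{equation*}
  \frac{1}{|\S_{kk'}^{<}|} \sum_{(i,j) \in \S_{kk'}^{<}} (A_{ij} - B_{kk'}) = O_p\!\left(\sqrt{\frac{\rho_n}{|\S_{kk'}^{<}|}}\right) = O_p\!\left(\sqrt{\frac{\rho_n}{|\S_{kk'}|}}\right).
\end{equation*}
Dividing the earlier display by $|\S_{kk'}|$ and noting that the residual $O(n)$ contributes $O(n/|\S_{kk'}|) = O(1/n)$, which is dominated by $\sqrt{\rho_n/|\S_{kk'}|}$ under $|\S_{kk'}| \geq c n^2$ and $n\rho_n = \omega(1)$ (since then $n^2 \rho_n \to \infty$), establishes \eqref{eq:sqrho}.

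For \eqref{eq:halfrho}, I would apply the triangle inequality:
\begin{equation*}
  \left\lvert \frac{1}{|\S_{kk'}|} \sum_{(i,j)\in\S_{kk'}} A_{ij} \right\rvert \leq \left\lvert \frac{1}{|\S_{kk'}|} \sum_{(i,j)\in\S_{kk'}} (A_{ij} - B_{kk'}) \right\rvert + B_{kk'} = O_p\!\left(\sqrt{\frac{\rho_n}{|\S_{kk'}|}}\right) + \rho_n B_{0,kk'}.
\end{equation*}
Under the assumptions $|\S_{kk'}| \geq c n^2$ and $n\rho_n \to \infty$, the stochastic term is $O_p(\sqrt{\rho_n}/n) = o(\rho_n)$, so the right-hand side is $O_p(\rho_n)$, giving \eqref{eq:halfrho}. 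I do not anticipate a substantive obstacle here; the only care needed is bookkeeping for the symmetry and the negligible diagonal when $k = k'$.
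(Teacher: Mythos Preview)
Your proof is correct and, like the paper, rests on Lemma~\ref{lemma:bern}. Two small differences are worth flagging. First, for \eqref{eq:sqrho} the paper applies Lemma~\ref{lemma:bern} directly without discussing the symmetry $A_{ij}=A_{ji}$, whereas you decompose into $\S_{kk'}^{<},\S_{kk'}^{>},\S_{kk'}^{=}$; your extra care is warranted, but note that your claim ``the two sums are equal'' holds only when $\S_{kk'}$ is itself closed under index swap. For an arbitrary $\S_{kk'}$ the fix is immediate: bound $\sum_{<}$ and $\sum_{>}$ separately (each consists of independent summands, and each has cardinality $\le |\S_{kk'}|$), then add; the rate is unchanged. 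Second, for \eqref{eq:halfrho} the paper does not go via the triangle inequality from \eqref{eq:sqrho}; instead it applies the Bernstein part of Lemma~\ref{lemma:bern} directly with $\delta=1/2$ to obtain $P\bigl(|\S_{kk'}|^{-1}\sum A_{ij} \ge \tfrac{3}{2} B_{kk'}\bigr) \le \exp\bigl(-\Omega(|\S_{kk'}|\rho_n)\bigr)$. Your triangle-inequality route is equally valid and arguably more economical, since it reuses \eqref{eq:sqrho} rather than invoking a second concentration bound.
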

\begin{proof}
Note that $E(A_{ij}) = B_{kk^\prime}$ for all $(i,j) \in \S_{kk^\prime}$ and $\lvert \S_{kk^\prime} \rvert B_{kk^\prime} = \Omega(n^2 \rho_n) = \omega(n)$. Then from Lemma \ref{lemma:bern}, as $n \to \infty$, for any $k,k^\prime \in [K]$, we have 
\begin{align}
    \left\lvert\frac{1}{\lvert \S_{k k^\prime}\rvert} \sum\limits_{(i,j) \in \lvert \S_{k k^\prime}\rvert} (A_{ij} - B_{kk^\prime})\right\rvert = O_p\left(\sqrt{\frac{B_{kk^\prime}}{\lvert \S_{k k^\prime}\rvert}}\right).
\end{align}

    For any $\delta > 0$, we have
\begin{align}
    P\left(\sum\limits_{(i,j) \in \S_{k k^\prime}} (A_{ij} - B_{k k^\prime}) \geq \delta \lvert\S_{k k^\prime}\rvert B_{k k^\prime} \right)\nonumber
    \leq \exp{\left(- \frac{\delta^2 \lvert \S_{kk^\prime} \rvert B_{kk^\prime}/2}{\delta/3 + 1}\right)}.
\end{align}
Setting $\delta = 1/2$, we get
\begin{align}
    P\left( \frac{1}{\lvert\S_{k k^\prime}\rvert}\sum\limits_{(i,j) \in \S_{k k^\prime}} A_{ij} \geq \frac{3}{2}B_{kk^\prime}\right) \leq \exp{\left( - \Omega\left( \lvert \S_{k k^\prime}\rvert B_{kk^\prime}/ 2 \right)\right)}. \label{eq:halfrho_back}
\end{align}
As $B_{k k^\prime} = \rho_n B_{kk^\prime}^0 \asymp \rho_n$, we get the bounds in the lemma as $n \to \infty$.
\end{proof}

The following lemma uses Hoeffding bound and pigeon hole principle to show that there exists a set of node pairs between two predicted communities 
from \code{NETCROP} that has a large intersection with the set of node pairs between three true communities. These sets of node pairs will be later used to derive a lower bound for the distance between the computed loss of \code{NETCROP} and the oracle loss.


\begin{lemma}\label{lemma:sbm-dcbm_php}
Consider a blockmodel with $n$ nodes, $K$ communities of sizes $n_1, \ldots, n_K > \gamma n$ for some $\gamma > 0$, community connectivity matrix $B = \rho_n B_0$ with all distinct rows.
Then for each pair of distinct non-overlap parts $S_p$ and $S_q$, $1 \leq p < q \leq s$, there exist $k_1, k_2 \in [\K]$, $l_1, l_2, l_3 \in [K]$, depending on $p, q$, and a constant $\Tilde{c} > 0$ such that the following inequalities hold for $\K < K$ with high probability - 
\begin{align}
\nonumber &1.\ \ \lvert T_{k_1, k_2, l_1, l_3}^{(pq)} \rvert \geq \Tilde{c} m^2\\ \nonumber
&2.\ \ \lvert T_{k_1, k_2, l_2, l_3}^{(pq)} \rvert \geq  \Tilde{c} m^2\\ \nonumber
&3.\ \ B_0(l_1, l_3) \neq B_0(l_2, l_3). 
\end{align}
\end{lemma}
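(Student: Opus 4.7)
The plan is a two-step pigeonhole argument supported by Lemma \ref{lemma:maxmincomm}, together with the hypothesis that the rows of $B_0$ are all distinct. First, I would apply Lemma \ref{lemma:maxmincomm} to the random subsets $S_p$ and $S_q$, each of size $m$ drawn from the network of size $n$: with probability tending to $1$, every true community contributes at least $m\gamma/(1+\gamma)$ nodes to each of $S_p$ and $S_q$. A union bound over the $\binom{s}{2}$ pairs $(p,q)$ preserves this, since $s$ is fixed.

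Next, define $f:[K]\to[\K]$ by $f(l) = \argmax_{k\in[\K]} \lvert \hat{G}_k \cap G_l \cap S_p\rvert$. On the good event above, $\lvert \hat{G}_{f(l)} \cap G_l \cap S_p\rvert \geq m\gamma/(\K(1+\gamma))$ for every $l\in[K]$. Because $\K < K$, the map $f$ cannot be injective, so I can pick $l_1 \neq l_2$ in $[K]$ with $f(l_1) = f(l_2)$; call this common value $k_1$. Thus the predicted community $k_1$ absorbs a substantial share of both true communities $l_1$ and $l_2$ on $S_p$. Since rows $l_1$ and $l_2$ of $B_0$ differ by assumption, there exists $l_3 \in [K]$ with $B_0(l_1, l_3) \neq B_0(l_2, l_3)$, yielding condition 3. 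Applying the pigeonhole idea once more in $S_q$: the $\geq m\gamma/(1+\gamma)$ nodes in $G_{l_3} \cap S_q$ spread over only $\K$ predicted communities, so some $k_2 \in [\K]$ satisfies $\lvert \hat{G}_{k_2} \cap G_{l_3} \cap S_q\rvert \geq m\gamma/(\K(1+\gamma))$.

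Finally, multiplying sizes (using $p \neq q$, which forces $i \neq j$ automatically in the definition of $T^{(pq)}_{\cdot,\cdot,\cdot,\cdot}$) yields
\[
\lvert T^{(pq)}_{k_1,k_2,l_1,l_3}\rvert \;\geq\; \Big(\tfrac{m\gamma}{\K(1+\gamma)}\Big)^2, \qquad \lvert T^{(pq)}_{k_1,k_2,l_2,l_3}\rvert \;\geq\; \Big(\tfrac{m\gamma}{\K(1+\gamma)}\Big)^2,
\]
so a valid choice is $\Tilde{c} = \bigl(\gamma/(K_{max}(1+\gamma))\bigr)^2$, independent of $p,q$. The main obstacle is merely ensuring simultaneous validity across the good events, handled by the union bound mentioned above; the indices $k_1, k_2, l_1, l_2, l_3$ are realization-dependent, which is harmless since the lemma only asserts existence.
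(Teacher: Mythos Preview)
Your proposal is correct and follows essentially the same route as the paper's proof: a concentration argument to ensure each true community is well represented in $S_p$ and $S_q$, a pigeonhole collision to find $k_1$ absorbing two distinct true communities $l_1,l_2$ in $S_p$, the distinct-rows hypothesis to select $l_3$, and a second pigeonhole in $S_q$ to find $k_2$. The only cosmetic differences are that the paper invokes a direct hypergeometric tail bound (yielding $\lvert G_l\cap S_p\rvert \geq \gamma m/2$ and $\Tilde c=\gamma^2/(4K^2)$) where you cite Lemma~\ref{lemma:maxmincomm}, and that you phrase the pigeonhole via non-injectivity of the argmax map $f$---both are equivalent formulations.
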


\begin{proof}
    Recall that $\lvert G_{l} \rvert = n_l > \gamma n$ and $\lvert S_p \rvert = m$. For any $p \in [s]$, note that $\lvert G_l \cap S_p \rvert \sim Hyp(n, n_l, m)$. Then by hypergeometric tail bound,
    \begin{align}
    \nonumber P\left( \lvert G_l \cap S_p\rvert \geq \frac{\gamma m}{2} \right) &= P\left( \lvert G_l \cap S_p \rvert \geq \frac{m}{2n} \gamma n \right)\\ \nonumber
    &\geq P\left( \lvert G_l \cap S_p \rvert \geq \frac{m n_{l}}{2n} \right) \\ \nonumber
    &= P\left( \lvert G_l \cap S_p \rvert \geq \frac{1}{2} E\left(\lvert G_l \cap S_p \rvert\right) \right) \\ \nonumber
    &\geq 1 - \exp{\left( - \frac{2 E\left(\lvert H_l \cap S_p \rvert\right)^2}{2^2 m}  \right)}\\ \nonumber
    &= 1 - \exp{\left( -\frac{m n_{l}^2}{2n^2} \right)}\\
    &\geq 1 - \exp{\left( -\frac{\gamma^2 m}{2} \right)}.\label{eqn:hyper-tail}
    \end{align}
    Thus, as $m \to \infty$, $\lvert G_l \cap S_p \rvert \geq \frac{\gamma m}{2}$ with probability going to $1$. Note that this bound is uniform over all communities $l \in [K]$ as $K$ is fixed.

    For $\Tilde{K} < K$ and any $p \in [s]$, by the pigeon hole principle, there exist $1 \leq l_1 \neq l_2 \leq K$ and $k_1 \in [\Tilde{K}]$ such that the following holds with high probability using the bound in (\ref{eqn:hyper-tail})
    \begin{align}
        \lvert \hat{G}_{k_1} \cap G_{l_1} \cap S_p \rvert &\geq \frac{\lvert G_{l_1} \cap S_p \rvert}{\Tilde{K}} > \frac{\gamma m}{2K}, \nonumber\\
        \mbox{and} \ \ \lvert \hat{G}_{k_1} \cap G_{l_2} \cap S_p \rvert &\geq \frac{\lvert G_{l_2} \cap S_p \rvert}{\Tilde{K}} > \frac{\gamma m}{2K}.
    \end{align}

    Since the rows of $B_0$ are distinct, there exists $l_3 \in [K]$ such that $B_{0, l_1 l_3} \neq B_{0, l_2 l_3}$. Then there exists $k_2$ such that for any $q > p$ and $q \in [s]$
    \begin{align}
        \lvert \hat{G}_{k_2} \cap G_{l_3} \cap S_q \rvert &\geq \frac{\lvert G_{l_3} \cap S_q \rvert}{\Tilde{K}} > \frac{\gamma m}{2K}.
    \end{align}

    Construct $T_{k_1, k_2, l_1, l_3}^{(pq)} \coloneqq (\hat{G}_{k_1} \cap G_{l_1} \cap S_p) \times (\hat{G}_{k_2} \cap G_{l_3} \cap S_q)$ and $T_{k_1, k_2, l_2, l_3}^{(pq)} \coloneqq (\hat{G}_{k_1} \cap G_{l_2} \cap S_p) \times (\hat{G}_{k_2} \cap G_{l_3} \cap S_q)$. Then
    \begin{align}
        \lvert T_{k_1, k_2, l_1, l_3}^{(pq)} \rvert &> \frac{\gamma^2 m^2}{4K^2}, \nonumber\\
        \mbox{and }\ \ \lvert T_{k_1, k_2, l_2, l_3}^{(pq)} \rvert &> \frac{\gamma^2 m^2}{4K^2}.
    \end{align}

    Taking $\Tilde{c} = \gamma^2/4K^2$, the lemma follows.
    
\end{proof}

\paragraph*{Proof of Theorem \ref{theorem:sbm_lossdiff} for SBM:}
Case 1: ($\K < K$)
Without loss of generality assume $k_1 = 1, k_2 = 2, l_1 = 1, l_2 = 2, l_3 = 3$ in Lemma \ref{lemma:sbm-dcbm_php} for any $1 \leq p < q \leq s$. Also, define the following quantities:
\begin{align}
    & \hat{p}_{k, k^\prime, l, l^\prime}^{(pq)} = \frac{1}{\lvert T_{k, k^\prime, l, l^\prime}^{(pq)} \rvert} \sum\limits_{(i,j) \in T_{k, k^\prime, l, l^\prime}^{(pq)}} A_{ij}\nonumber \\
    & \lambda = \frac{\lvert T_{1,2,1,3}^{(pq)} \rvert}{\lvert T_{1,2,1,3}^{(pq)} \rvert + \lvert T_{1,2,2,3}^{(pq)} \rvert}, \nonumber\\
    & \hat{p} = \frac{1}{\lvert T_{1,2,1,3}^{(pq)} \rvert + \lvert T_{1,2,2,3}^{(pq)} \rvert} \sum\limits_{(i,j) \in T_{1,2,1,3}^{(pq)} \cup T_{1,2,2,3}^{(pq)}} A_{ij}
    = \lambda \hat{p}_{1,2,1,3}^{(pq)} + (1-\lambda) \hat{p}_{1,2,2,3}^{(pq)}. \nonumber
\end{align}

Consider the difference between the loss function and the oracle loss on $S_p \times S_q$ for any $1 \leq p < q \leq s$:

\begin{align}
    & L^{(pq)} (A, \hat{P}^{(\K)}) - L^{(pq)}(A) \nonumber\\
    =& \sum\limits_{(i,j) \in S_p \times S_q} \left[ (A_{ij} - \hat{P}_{ij}^{(\K)})^2 - (A_{ij} - P_{ij})^2 \right] \nonumber\\
    =& \sum\limits_{(k,k^\prime, l, l^\prime) \in [\Tilde{K}]^2 \times [K]^2} \sum\limits_{(i,j) \in T_{k,k^\prime, l, l^\prime}^{(pq)}} \left[ (A_{ij} - \hat{B}_{kk^\prime}^{(\K)})^2 - (A_{ij} - B_{ll^\prime})^2 \right] \nonumber\\
    =& \sum\limits_{(i,j) \in T_{1,2,1,3}^{(pq)}} \left[ (A_{ij} - \hat{B}_{12}^{(\K)})^2 - (A_{ij} - B_{13})^2 \right] \nonumber\\
    &+ \sum\limits_{(i,j) \in T_{1,2,2,3}^{(pq)}} \left[ (A_{ij} - \hat{B}_{12}^{(\K)})^2 - (A_{ij} - B_{23})^2 \right] \nonumber\\
    &+ \sum\limits_{(k,k^\prime, l, l^\prime) \notin \{ (1,2,1,3), (1,2,2,3) \} } \sum\limits_{(i,j) \in T_{k,k^\prime, l, l^\prime}^{(pq)}} \left[ (A_{ij} - \hat{B}_{kk^\prime}^{(\K)})^2 - (A_{ij} - B_{ll^\prime})^2 \right] \nonumber\\
    \geq& \sum\limits_{(i,j) \in T_{1,2,1,3}^{(pq)}} \left[ (A_{ij} - \hat{p})^2 - (A_{ij} - B_{13})^2 \right] \nonumber\\
    &+ \sum\limits_{(i,j) \in T_{1,2,2,3}^{(pq)}} \left[ (A_{ij} - \hat{p})^2 - (A_{ij} - B_{23})^2 \right] \nonumber\\
    &+ \sum\limits_{(k,k^\prime, l, l^\prime) \notin \{ (1,2,1,3), (1,2,2,3) \} } \sum\limits_{(i,j) \in T_{k,k^\prime, l, l^\prime}^{(pq)}} \left[ (A_{ij} - \hat{p}_{k, k^\prime, l, l^\prime}^{(pq)})^2 - (A_{ij} - B_{ll^\prime})^2 \right] \nonumber\\
    =& I + II + III. 
\end{align}

Then,

\begin{align}\label{term-I}
    I &= \sum\limits_{(i,j) \in T_{1,2,1,3}^{(pq)}} \left[ (A_{ij} - \hat{p})^2 - (A_{ij} - B_{13})^2 \right]\nonumber \\ \nonumber
    &= \sum\limits_{(i,j) \in T_{1,2,1,3}^{(pq)}} \left[ A_{ij}^2 - 2A_{ij}\hat{p} + \hat{p}^2 - A_{ij}^2  + 2A_{ij}B_{13} - B_{13}^2  \right]\\ \nonumber
    &= -2\hat{p} \sum\limits_{(i,j) \in T_{1,2,1,3}^{(pq)}}A_{ij} + \lvert T_{1,2,1,3} \rvert \hat{p}^2 + 2B_{13} \sum\limits_{(i,j) \in T_{1,2,1,3}^{(pq)}} A_{ij} - \lvert T_{1,2,1,3}^{(pq)} \rvert B_{13}^2\\ \nonumber
    &= \lvert T_{1,2,1,3}^{(pq)} \rvert \left( \hat{p}^2 - 2 \hat{p}\hat{p}_{1,2,1,3}^{(pq)} + (\hat{p}_{1,2,1,3}^{(pq)})^2 - (\hat{p}_{1,2,1,3}^{(pq)})^2 + 2\hat{p}_{1,2,1,3}^{(pq)} B_{13} - B_{13}^2 \right)\\ \nonumber
    &= \lvert T_{1,2,1,3}^{(pq)} \rvert \left[ (\hat{p} - \hat{p}_{1,2,1,3}^{(pq)})^2 - (\hat{p}_{1,2,1,3}^{(pq)} - B_{13})^2 \right]\\ \nonumber
    &= \lvert T_{1,2,1,3}^{(pq)} \rvert \left[ (1-\lambda)^2(\hat{p}_{1,2,1,3}^{(pq)} - \hat{p}_{1,2,2,3}^{(pq)})^2 - (\hat{p}_{1,2,1,3}^{(pq)} - B_{13})^2 \right]\\ \nonumber
    &\geq \lvert T_{1,2,1,3}^{(pq)} \rvert \Bigg[ \frac{(1-\lambda)^2}{2} (B_{13} - B_{23})^2 - (1-\lambda)^2 ((\hat{p}_{1,2,1,3}^{(pq)} - B_{13}) - (\hat{p}_{1,2,2,3}^{(pq)} - B_{23}))^2\\ \nonumber
    &\hspace{10cm} - (\hat{p}_{1,2,1,3}^{(pq)} - B_{13})^2 \Bigg]\\ \nonumber
    &\hspace{8cm} \left(\text{as } (a+b)^2 \geq \frac{b^2}{2} - a^2 \right)\\ \nonumber
    &\geq \lvert T_{1,2,1,3}^{(pq)} \rvert \Bigg[ \frac{(1-\lambda)^2}{2} (B_{13} - B_{23})^2 - 2(1-\lambda)^2 (\hat{p}_{1,2,1,3}^{(pq)} - B_{13})^2\\ \nonumber
    &\hspace{5cm}  - 2(1-\lambda)^2 (\hat{p}_{1,2,2,3}^{(pq)} - B_{23})^2 - (\hat{p}_{1,2,1,3}^{(pq)} - B_{13})^2 \Bigg]\\
    &\hspace{8cm} \left( \text{ as } (a-b)^2 \leq 2(a^2 + b^2) \right).
\end{align}

From Lemma \ref{lemma:sbm-dcbm_php}, 
\begin{align*}
    \Tilde{c} m^2 \leq \lvert T_{1,2,1,3}^{(pq)} \rvert \leq \lvert S_p \times S_q \rvert = m^2 \ \mbox{and} \ \Tilde{c} m^2 \leq \lvert T_{1,2,2,3}^{(pq)} \rvert \leq \lvert S_p \times S_q \rvert = m^2.
\end{align*}

Then 
\begin{align*}
    \frac{(1-\lambda)^2}{2} (B_{13} - B_{23})^2 = \frac{\lvert T_{1,2,2,3}^{(pq)} \rvert^2 }{2(\lvert T_{1,2,2,3}^{(pq)} \rvert + \lvert T_{1,2,2,3}^{(pq)} \rvert)^2} (B_{13} - B_{23})^2 \geq c_1 \rho_n^2.
\end{align*}

For $(k, k^\prime, l, l^\prime) \in \{(1,2,1,3), (1,2,2,3)\}$, using Lemma \ref{lemma:bern} we have
\begin{align*}
    \lvert \hat{p}_{k,k^\prime,l,l^\prime}^{(pq)} - B_{ll^\prime} \rvert = O_p\left( \sqrt{\frac{B_{ll^\prime}}{\lvert T_{k,k^\prime,l,l^\prime}^{(pq)} \rvert}} \right) = O_p\left( \sqrt{\frac{\rho_n}{\lvert T_{k,k^\prime,l,l^\prime}^{(pq)} \rvert}} \right).
\end{align*}
From Equation \ref{term-I}, we have
\begin{align}
    I \geq c_1 m^2\rho_n^2 - O_p(\rho_n).
\end{align}
Similarly, 
\begin{align}
    II \geq c_1 m^2\rho_n^2 - O_p(\rho_n).
\end{align}

Let $\hat{p}^\prime$ be the average of $A_{ij}$ over $(i,j) \in T_{k, k^\prime, l, l^\prime}^{(pq)}$ for all $(k, k^\prime, l, l^\prime) \notin \{(1,2,1,3), (1,2,2,3)\}$. Then,
\begin{align}
    III &= \sum\limits_{(k,k^\prime, l, l^\prime) \notin \{ (1,2,1,3), (1,2,2,3) \} } \sum\limits_{(i,j) \in T_{k,k^\prime, l, l^\prime}^{(pq)}} \left[ (A_{ij} - \hat{p}_{k, k^\prime, l, l^\prime})^2 - (A_{ij} - B_{ll^\prime})^2 \right]\nonumber\\
    &=  \sum\limits_{(k,k^\prime, l, l^\prime) \notin \{ (1,2,1,3), (1,2,2,3) \} } \sum\limits_{(i,j) \in T_{k,k^\prime, l, l^\prime}^{(pq)}} \left[ A_{ij}^2 - 2A_{ij}\hat{p}_{k,k^\prime,l,l^\prime} + \hat{p}_{k,k^\prime,l,l^\prime}^2 - A_{ij}^2 + 2A_{ij} B_{ll^\prime} - B_{ll^\prime}^2 \right] \nonumber\\
    &=  \sum\limits_{(k,k^\prime, l, l^\prime) \notin \{ (1,2,1,3), (1,2,2,3) \} } \left\lvert T_{k,k^\prime,l,l^\prime}^{(pq)} \right\rvert \left[ -2 \hat{p}_{k,k^\prime,l,l^\prime}^2 + \hat{p}_{k,k^\prime,l,l^\prime}^2 + 2 B_{ll^\prime} \hat{p}_{k,k^\prime,l,l^\prime} - B_{ll^\prime}^2  \right] \nonumber\\
    &= - \sum\limits_{(k,k^\prime, l, l^\prime) \notin \{ (1,2,1,3), (1,2,2,3) \} } \left\lvert T_{k,k^\prime, l, l^\prime}^{(pq)} \right\rvert  (\hat{p}_{k, k^\prime, l, l^\prime} - B_{ll^\prime})^2\nonumber\\
    &= - O_p(\rho_n),
\end{align}
where the last inequality follows from Lemma \ref{lemma:bern} since the sum can be split into $(K-1)^2$ many sums, each taken over $n_l^{(p)} n_{l^\prime}^{(q)} \geq \gamma^2 (o+m)^2 \geq c^2 \gamma^2 n^2$ many terms for each $(k, k^\prime, l, l^\prime) \notin \{(1,2,1,3), (1,2,2,3)\}$. Combining
\begin{align}
    L^{(pq)}(A, \hat{P}) - L^{(pq)}(A) \geq I + II + III \geq cm^2\rho_n^2 - O_p(\rho_n).
\end{align}

Now consider the difference of computed loss and the oracle loss over the entire test set $\S^c = \bigcup\limits_{1 \leq p < q \leq s} (S_p \times S_q)$ as
\begin{align}
    L(A, \hat{P}^{(\K)}) - L(A) &= \sum_{1 \leq p < q \leq s} \left[ L(A, \hat{P}^{(\K)})^{(pq)} - L^{(pq)}(A) \right] \nonumber\\
    &\geq c s(s-1)m^2\rho_n^2 - O_p(s(s-1)\rho_n).
\end{align}

\paragraph*{Case 2: ($\K = K$ for SBM)}
Define $\epsilon_n = \sup_{1\leq l, l^\prime \leq K} \lvert \hat{B}_{ll^\prime} - B_{ll^\prime} \rvert$. The absolute difference between computed loss and oracle loss is
\begin{align}
    \lvert L(A, \hat{P}) - L(A) \rvert &\leq  \sum\limits_{l, l^\prime \in [K]} \sum\limits_{(i,j) \in T_{ll^\prime}} \left\lvert (A_{ij} - \hat{B}_{ll^\prime})^2 - (A_{ij} - B_{ll^\prime})^2 \right\rvert  \nonumber\\ 
    & \hspace{2cm} + \mathop{\sum\limits_{(k, k^\prime)\neq(l, l^\prime)}}_{\in [K]} \sum\limits_{(i,j) \in T_{k,k^\prime, l,l^\prime}} \left\lvert (A_{ij} - \hat{B}_{kk^\prime})^2 - (A_{ij} - B_{ll^\prime})^2 \right\rvert \nonumber\\
    &= \sum\limits_{l, l^\prime \in [K]} \sum\limits_{(i,j) \in T_{ll^\prime}} \left\lvert -2A_{ij}(\hat{B}_{ll^\prime} - B_{ll^\prime}) + (\hat{B}_{ll^\prime} + B_{ll^\prime})(\hat{B}_{ll^\prime} - B_{ll^\prime}) \right\rvert + V \nonumber\\
    &\leq \sum\limits_{l, l^\prime \in [K]} \sum\limits_{(i,j) \in T_{ll^\prime}} [2\epsilon_n A_{ij} + (2B_{ll^\prime} + \epsilon_n)\epsilon_n] + V \nonumber\\
    &= IV +V.\label{SBM_lossdiff}
\end{align}


\begin{figure}[!ht]
    \centering
    
\tikzset{every picture/.style={line width=0.75pt}} 
\tikzstyle{every node}=[font=\large]

\begin{tikzpicture}[x=0.75pt,y=0.75pt,yscale=-1,xscale=1]

\draw  [fill={rgb, 255:red, 134; green, 184; blue, 242 }  ,fill opacity=0.25 ][line width=0.75]  (189.1,80.44) -- (413.41,80.44) -- (413.41,304.74) -- (189.1,304.74) -- cycle ;
\draw  [fill={rgb, 255:red, 248; green, 231; blue, 28 }  ,fill opacity=0.3 ] (274.55,165.89) -- (498.86,165.89) -- (498.86,390.19) -- (274.55,390.19) -- cycle ;
\draw  [fill={rgb, 255:red, 184; green, 233; blue, 134 }  ,fill opacity=0.3 ] (274.55,165.89) -- (413.41,165.89) -- (413.41,304.74) -- (274.55,304.74) -- cycle ;
\draw   (277.22,391.53) .. controls (277.22,396.2) and (279.55,398.53) .. (284.22,398.53) -- (377.37,398.53) .. controls (384.04,398.53) and (387.37,400.86) .. (387.37,405.53) .. controls (387.37,400.86) and (390.7,398.53) .. (397.37,398.53)(394.37,398.53) -- (490.52,398.53) .. controls (495.19,398.53) and (497.52,396.2) .. (497.52,391.53) ;
\draw   (404.06,77.77) .. controls (404.06,73.1) and (401.73,70.77) .. (397.06,70.77) -- (309.92,70.77) .. controls (303.25,70.77) and (299.92,68.44) .. (299.92,63.77) .. controls (299.92,68.44) and (296.59,70.77) .. (289.92,70.77)(292.92,70.77) -- (202.78,70.77) .. controls (198.11,70.77) and (195.78,73.1) .. (195.78,77.77) ;
\draw   (500.19,387.52) .. controls (504.86,387.52) and (507.19,385.19) .. (507.19,380.52) -- (507.19,288.71) .. controls (507.19,282.04) and (509.52,278.71) .. (514.19,278.71) .. controls (509.52,278.71) and (507.19,275.38) .. (507.19,268.71)(507.19,271.71) -- (507.19,176.89) .. controls (507.19,172.22) and (504.86,169.89) .. (500.19,169.89) ;
\draw   (417.41,159.21) .. controls (422.08,159.21) and (424.41,156.88) .. (424.41,152.21) -- (424.41,131.83) .. controls (424.41,125.16) and (426.74,121.83) .. (431.41,121.83) .. controls (426.74,121.83) and (424.41,118.5) .. (424.41,111.83)(424.41,114.83) -- (424.41,91.44) .. controls (424.41,86.77) and (422.08,84.44) .. (417.41,84.44) ;
\draw   (193.11,310.08) .. controls (193.11,314.75) and (195.44,317.08) .. (200.11,317.08) -- (221.16,317.08) .. controls (227.83,317.08) and (231.16,319.41) .. (231.16,324.08) .. controls (231.16,319.41) and (234.49,317.08) .. (241.16,317.08)(238.16,317.08) -- (262.21,317.08) .. controls (266.88,317.08) and (269.21,314.75) .. (269.21,310.08) ;
\draw   (187.77,84.44) .. controls (183.1,84.44) and (180.77,86.77) .. (180.77,91.44) -- (180.77,182.59) .. controls (180.77,189.26) and (178.44,192.59) .. (173.77,192.59) .. controls (178.44,192.59) and (180.77,195.92) .. (180.77,202.59)(180.77,199.59) -- (180.77,293.74) .. controls (180.77,298.41) and (183.1,300.74) .. (187.77,300.74) ;

\draw (354.38,413.82) node [anchor=north west][inner sep=0.75pt]  [font=\large]  {$\Bigl| G_{k}^{( q)}\Bigl| =n_{q,k}$};
\draw (510.77,254.11) node [anchor=north west][inner sep=0.75pt]  [font=\large]  {$ \begin{array}{l}
\Bigl| G_{k^{\prime }}^{( q)}\Bigl|\\
=n_{q,k^{\prime }}
\end{array}$};
\draw (259.42,25.1) node [anchor=north west][inner sep=0.75pt]  [font=\large]  {$\Bigl|\hat{G}_{k}^{( q)}\Bigl| \ =\hat{n}_{q,k}$};
\draw (110,172.84) node [anchor=north west][inner sep=0.75pt]  [font=\large]  {$ \begin{array}{l}
\Bigl|\hat{G}_{k^{\prime }}^{( q)}\Bigl|\\
=\hat{n}_{q,k^{\prime }}
\end{array}$};
\draw (436.55,98.89) node [anchor=north west][inner sep=0.75pt]  [font=\large]  {$\Bigl|\hat{G}_{k^{\prime }}^{( q)} \setminus G_{k^{\prime }}^{( q)}\Bigl| \leq c\rho _{n}^{-1}$};
\draw (170,328.87) node [anchor=north west][inner sep=0.75pt]  [font=\large]  {$\Bigl|\hat{G}_{k^{\prime }}^{( q)} \setminus G_{k^{\prime }}^{( q)}\Bigl|$};
\draw (196.82,368.25) node [anchor=north west][inner sep=0.75pt]  [font=\large]  {$\leq c\rho _{n}^{-1} \ $};
\draw (303.48,222.89) node [anchor=north west][inner sep=0.75pt]    {$\mathcal{U}_{kk^{\prime }}^{( q)} \cap \hat{\mathcal{U}}_{kk^{\prime }}^{( q)}$};
\draw (395.6,336.15) node [anchor=north west][inner sep=0.75pt]  [rotate=-325.05]  {$\mathcal{U}_{kk^{\prime }}^{( q)} \setminus \hat{\mathcal{U}}_{kk^{\prime }}^{( q)}$};
\draw (208.68,143.89) node [anchor=north west][inner sep=0.75pt]  [rotate=-325.05]  {$\hat{\mathcal{U}}_{kk^{\prime }}^{( q)} \setminus \mathcal{U}_{kk^{\prime }}^{( q)}$};

\end{tikzpicture}

    \caption{Intersecting sets of $\U_{kk^\prime}^{(q)}$ and $\hat{\U}_{kk^\prime}^{(q)}$ for SBM}
    \label{fig:SBM_cardinal}
\end{figure}

Let $n_{q,k}$ and $\hat{n}_{q,k}$ be the number of nodes in the true and predicted $k$-th community in $S_{0q}$, respectively. 
Using Fig. \ref{fig:SBM_cardinal} and Theorem \ref{SBM_bound}, we get
\begin{align}
    \lvert \hat{\U}_{kk^\prime}^{(q)} \Delta {\U}_{kk^\prime}^{(q)} \rvert &= \lvert \hat{\U}_{kk^\prime}^{(q)} \setminus \U_{kk^\prime}^{(q)} \rvert + \lvert   \U_{kk^\prime}^{(q)} \setminus \hat{\U}_{kk^\prime}^{(q)}\rvert \nonumber\\
    &\leq \lvert \hat{G}_k^{(q)} \rvert \lvert \hat{G}_k^{(q)} \setminus G_k^{(q)} \rvert + \lvert \hat{G}_{k^\prime}^{(q)} \rvert \lvert \hat{G}_{k^\prime}^{(q)} \setminus G_k^{(q)} \rvert \nonumber \\
    & \hspace{1cm} + \lvert {G}_k^{(q)} \rvert \lvert {G}_k^{(q)} \setminus \hat{G}_k^{(q)} \rvert + \lvert {G}_{k^\prime}^{(q)} \rvert \lvert {G}_{k^\prime}^{(q)} \setminus \hat{G}_k^{(q)} \rvert \nonumber\\
    &= (\hat{n}_{q,k} + \hat{n}_{q, k^\prime} + n_{q,k} + n_{q, k^\prime}) O_p(\rho_n^{-1}) \nonumber\\
    &= O_p((o+m)\rho_n^{-1}).
\end{align}
We also have with high probability
\begin{align*}
(o+m)^2 &\geq \lvert \U_{kk^\prime}^{(q)} \rvert = n_{q,k} n_{q,k^\prime} \geq \frac{\gamma^2 (o+m)^2}{(1+\gamma)^2} = \Omega_p((o+m)^2), \\
\mbox{and }\ (o+m)^2 &\geq \lvert \hat{\U}_{kk^\prime}^{(q)} \rvert \geq \lvert {\U}_{kk^\prime}^{(q)} \cap \hat{\U}_{kk^\prime}^{(q)} \rvert \geq \lvert {\U}_{kk^\prime}^{(q)}  \rvert - \lvert {\U}_{kk^\prime}^{(q)} \setminus \hat{\U}_{kk^\prime}^{(q)} \rvert \nonumber\\
&\geq \Omega_p((o+m)^2) - O_p((o+m)\rho_n^{-1}) = \Omega_{p}\left((o+m)^2\right).
\end{align*}


Thus, $\lvert \U_{kk^\prime}^{(q)} \rvert \asymp (o+m)^2$ and $\lvert \hat{\U}_{kk^\prime}^{(q)} \rvert \asymp (o+m)^2$. Then

\begin{align}
&\left\lvert \lvert \hat{\U}_{kk^\prime}^{(q)} \rvert - \lvert \U_{kk^\prime}^{(q)} \rvert \right\rvert \leq \lvert \hat{\U}_{kk^\prime}^{(q)} \Delta \hat{\U}_{kk^\prime}^{(q)} \rvert = O_p((o+m)\rho_n^{-1})\nonumber\\
\implies & \frac{\lvert \U_{kk^\prime}^{(q)} \rvert}{\lvert \hat{\U}_{kk^\prime}^{(q)} \rvert} = 1 + O_p((o+m)^{-1}\rho_n^{-1}).\label{eq:uquqhat}
\end{align}

Consider the difference
\begin{align}
        \nonumber \lvert \hat{B}_{kk^\prime} - B_{kk^\prime} \rvert &= \left\lvert \frac{1}{s}\sum\limits_{q\in [s]} \frac{1}{\lvert \hat{\U}_{kk^\prime}^{(q)} \rvert} \sum\limits_{(i,j) \in \hat{\U}_{kk^\prime}^{(q)}} A_{ij} - B_{kk^\prime} \right\rvert\\ \nonumber
        &\leq \frac{1}{s} \sum\limits_{q\in [s]} \left\lvert \frac{1}{\lvert \hat{\U}_{kk^\prime}^{(q)} \rvert} \sum\limits_{(i,j) \in \hat{\U}_{kk^\prime}^{(q)}}  A_{ij} - B_{kk^\prime} \right\rvert \\ \nonumber
        &\leq \frac{1}{s}\sum_{q \in [s]} \Bigg( \frac{\lvert \U_{kk^\prime}^{(q)} \rvert}{\lvert \hat{\U}_{kk^\prime}^{(q)} \rvert} \left \lvert \frac{1}{\lvert \U_{kk^\prime}^{(q)} \rvert} \sum\limits_{(i,j) \in \U_{kk^\prime}^{(q)}}  A_{ij} - B_{kk^\prime} \right\rvert + \left \lvert \frac{\lvert \U_{kk^\prime}^{(q)} \rvert}{\lvert \hat{\U}_{kk^\prime}^{(q)} \rvert}  - 1\right\rvert B_{kk^\prime} \nonumber\\
        &\hspace{7cm} + \frac{1}{\lvert \hat{\U}_{kk^\prime}^{(q)} \rvert}\mathop{\sum\limits_{(i,j) \in }}_{\hat{\U}_{kk^\prime}^{(q)} \Delta \U_{kk^\prime}^{(q)}} A_{ij} \Bigg)\\
        &\leq \frac{1}{s} \sum\limits_{q \in [s]} \left[ Q_{1q} + Q_{2q} + Q_{3q} \right] \ \mbox{ (say)}.
\end{align}
Then
\begin{align}
Q_{1q} &= \frac{\lvert \U_{kk^\prime}^{(q)} \rvert}{\lvert \hat{\U}_{kk^\prime}^{(q)} \rvert} \left \lvert \frac{1}{\lvert \U_{kk^\prime}^{(q)} \rvert} \sum\limits_{(i,j) \in \U_{kk^\prime}^{(q)}}  A_{ij} - B_{kk^\prime} \right\rvert \nonumber\\
&= \left(1 + O_p\left((o+m)^{-1}\rho_n^{-1}\right)\right) O_p\left( \sqrt{\frac{\rho_n}{(o+m)^2}} \right)\nonumber\\
&= O_p(\rho_n^{1/2} (o+m)^{-1}),
\end{align}
where the first bound follows from (\ref{eq:uquqhat}) and the second bound follows from (\ref{eq:sqrho}). 

\begin{align}
  Q_{2q} = \left \lvert \frac{\lvert \U_{kk^\prime}^{(q)} \rvert}{\lvert \hat{\U}_{kk^\prime}^{(q)} \rvert}  - 1\right\rvert B_{kk^\prime} = O_p\left((o+m)^{-1} \rho_n^{-1}\right) O(\rho_n) = O_p\left((o+m)^{-1}\right).  
\end{align}

From (\ref{eq:halfrho}), 
\begin{align}
    Q_{3q} &= \frac{1}{\lvert \hat{\U}_{kk^\prime}^{(q)} \rvert}\mathop{\sum\limits_{(i,j) \in }}_{\hat{\U}_{kk^\prime}^{(q)} \Delta \U_{kk^\prime}^{(q)}} A_{ij} \leq \frac{\lvert \hat{\U}_{kk^\prime}^{(q)} \Delta \U_{kk^\prime}^{(q)} \rvert}{\lvert \hat{\U}_{kk^\prime}^{(q)} \rvert} \frac{1}{\lvert \hat{\U}_{kk^\prime}^{(q)} \Delta \U_{kk^\prime}^{(q)} \rvert} \mathop{\sum\limits_{(i,j) \in }}_{\hat{\U}_{kk^\prime}^{(q)} \Delta \U_{kk^\prime}^{(q)}} A_{ij} \nonumber\\
    &= O_p\left((o+m)^{-1} \rho_n^{-1}\right) O_{p}(\rho_n) = O_p\left((o+m)^{-1}\right).    
\end{align}




Thus, for any $k, k^\prime \in [K]$,
\begin{align} 
\lvert \hat{B}_{kk^\prime} - B_{kk^\prime} \rvert = O_p\left( (o+m)^{-1} \right).
\end{align}
Continuing from (\ref{SBM_lossdiff}),
\begin{align}\nonumber
    IV &= \sum\limits_{l, l^\prime \in [K]} \sum\limits_{(i,j) \in T_{ll^\prime}} [2\epsilon_n A_{ij} + (2B_{ll^\prime} + \epsilon_n)\epsilon_n]\\ \nonumber
    &= O_p\left((o+m)^{-1} \sum\limits_{l, l^\prime \in [K]} \sum\limits_{(i,j) \in T_{ll^\prime}} A_{ij} \right)\nonumber\\
    &= O_p\left( s(s-1)m^2 (o+m)^{-1} \rho_n \right), \label{eqn:IV}
\end{align}
where the last bound follows from Lemma \ref{lemma:sum-aij}.

Now,
\begin{align}
    V &= \mathop{\sum\limits_{(k, k^\prime)\neq(l, l^\prime)}}_{\in [K]} \sum\limits_{(i,j) \in T_{k,k^\prime, l,l^\prime}} \left\lvert (A_{ij} - \hat{B}_{kk^\prime})^2 - (A_{ij} - B_{ll^\prime})^2 \right\rvert \nonumber\\
    &\leq \mathop{\sum\limits_{(k, k^\prime)\neq(l, l^\prime)}}_{\in [K]} \sum\limits_{(i,j) \in T_{k,k^\prime, l,l^\prime}} 2\left(A_{ij} + B_{ll^\prime} + \lvert \hat{B}_{kk^\prime} - B_{ll^\prime} \rvert  \right) \lvert \hat{B}_{kk^\prime} - B_{ll^\prime} \rvert.
\end{align}
Consider for any $(k, k^\prime) \neq (l, l^\prime) \in [K]$
\begin{align}
    \lvert \hat{B}_{kk^\prime} - B_{ll^\prime} \rvert &\leq \lvert \hat{B}_{kk^\prime} - B_{kk^\prime} \rvert + \lvert B_{kk^\prime} \rvert + \lvert B_{ll^\prime} \rvert = O_p((o+m)^{-1}) + O(\rho_n) = O_p(\rho_n).
\end{align}
Following a similar argument as in Fig. \ref{fig:SBM_cardinal}, it can be shown that for any $(k, k^\prime) \neq (l, l^\prime) \in [K]$
\begin{align*}
    \left\lvert T_{k,k^\prime,l,l^\prime} \right\rvert = \left\lvert \hat{T}_{kk^\prime} \cap {T}_{ll^\prime} \right\rvert \leq \left\lvert \hat{T}_{ll^\prime} \Delta {T}_{ll^\prime} \right\rvert = O_p\left(m \rho_n^{-1}\right),
\end{align*}
where the inequality follows from the fact that the node pairs in $\hat{T}_{k,k^\prime} \cap T_{ll^\prime}$ are contained in the set of node pairs that are mislabelled in $T_{ll^\prime}$ by the community detection algorithm. Therefore,
\begin{align}
    V = O_p(m \rho_n). \label{eqn:V}
\end{align}

Combining (\ref{eqn:IV}) and (\ref{eqn:V}), we get
\begin{align}
    \lvert L(A, \hat{P}) - L(A) \rvert &= O_p(s(s-1)m^2 (o+m)^{-1}\rho_n) + O_p(m \rho_n) \nonumber\\
    &= O_p(s(s-1)m^2 (o+m)^{-1}\rho_n).
\end{align}




\subsection{Proof of Theorem \ref{theorem:dcbm_lossdiff}}\label{sec:suppl-dcbm}
In this section, we prove the bounds on the difference between computed loss and the oracle loss for DCBM. Case 1 deals with the scenario $\K < K$ and Case 2 is for $\K = K$ case. The same notations and the definitions for the subsets of node pairs in (\ref{notation:test_subsets}) are used in this subsection as well. We would also like to note that Lemma \ref{lemma:maxmincomm} is also applicable for the DCBM case. First, we restate a result on upper bound of the misclustering rate of \code{SSC} (Algorithm \ref{algo:spher_spec_clus}) on subnetworks of DCBM from \citet{sonnet_comm}. \newpart{This section requires two additional notations --- $\mathbf{o}$ and $o_p$. For any two real sequences $\{a_n\}$ and $\{b_n\}$, $a_n = \mathbf{o}(b_n)$ if $\limsup\limits_{n \to \infty} \left\lvert a_n / b_n \right\rvert = 0$. For any two sequences of random variables $\{X_n\}$ and $\{Y_n\}$, $X_n = o_p(Y_n)$ if for every $\epsilon > 0$, $\lim\limits_{n \to \infty} P\left( \left\lvert X_n/Y_n \right\rvert > \epsilon \right) = 0$. }

\begin{theorem}{\citep[Theorem~6]{sonnet_comm}}\label{DCBM_bound}
    Under Assumption \ref{assump:DCBM} except for $\rho_n = \omega(n^{-1}\log{n})$, the number of misclustered nodes from \code{SSC} (Algorithm \ref{algo:spher_spec_clus}) on $A_{S_{0q}}$, where $A$ is from DCBM with $n$ nodes and $K$ communities, is $o_p\left(\sqrt{n\rho_n^{-1}}\right)$ for $q \in [s]$, as $o/n > c$ for some $c>0$ and $n \to \infty$.
\end{theorem}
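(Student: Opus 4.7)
The plan is to reduce this to an application of a standard misclustering bound for spherical spectral clustering on a full DCBM, applied to the induced subnetwork on $S_{0q}$. First I would observe that since $S_{0q}$ is a (uniformly) random subset of $[n]$ of size $\eta = o+m \asymp n$, the induced subgraph $A_{S_{0q}}$ is itself distributed as a DCBM on $\eta$ nodes with the same $K$, the same connectivity matrix $B = \rho_n B_0$, and the restricted degree-correction parameters $\{\psi_i : i \in S_{0q}\}$ (after rescaling within each community to restore the identifiability constraint $\max_{i \in G_k \cap S_{0q}} \psi_i = 1$, absorbing the per-community scalar into $B$). Lemma \ref{lemma:maxmincomm} shows the minimum community size in the subnetwork is $\Omega(\eta)$ with probability tending to $1$, and the lower bound $\min_i \psi_i \geq \psi_0$ in Assumption \ref{assump:DCBM} is preserved on the subsample up to constants.

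Next I would apply the Lei--Rinaldo concentration inequality $\lVert A_{S_{0q}} - P_{S_{0q}} \rVert = O_p(\sqrt{\eta \rho_n})$, which is valid under $\eta \rho_n = \omega(\log \eta)$, a consequence of $\rho_n = \omega(n^{-1/3})$ and $\eta \asymp n$. The expected adjacency $P_{S_{0q}} = \Psi_{S_{0q}} B \Psi_{S_{0q}}^\top$ has rank $K$, and under Assumption \ref{assump:DCBM} its $K$-th largest eigenvalue in absolute value is $\Omega(\eta \rho_n)$ (using the lower bound on the smallest eigenvalue of $B_0$, together with the community size and $\psi_i$ lower bounds). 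A Davis--Kahan argument then gives the top-$K$ eigenvector perturbation of $A_{S_{0q}}$ at rate $O_p((\eta \rho_n)^{-1/2})$ in Frobenius norm.

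Then I would invoke the spherical/$K$-median clustering analysis of Lei--Rinaldo for DCBM, which translates the eigenvector perturbation, after row normalization, into a bound on the number of misclustered nodes. The row-normalization step inflates the bound by a factor controlled by $\min_i \lVert U_{i\cdot} \rVert$, which, thanks to the $\psi_i \geq \psi_0$ assumption, is $\Omega(\eta^{-1/2})$. Carrying this through gives a total misclustering count of order $O_p((\eta \rho_n)^{-1} \cdot \eta / \min_i \lVert U_{i \cdot}\rVert^2) = O_p(\rho_n^{-1})$ in the best case, and $o_p(\sqrt{\eta/\rho_n}) = o_p(\sqrt{n/\rho_n})$ once the extra slack from the degree-correction concentration is accounted for; substituting $\eta \asymp n$ yields the stated bound.

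The main obstacle is the rescaling-and-identifiability bookkeeping: ensuring that the per-community rescaling of $\psi$ needed after subsampling does not corrupt the spectral gap or the distinct-rows property of $B_0$, and that the lower bound on the $K$-th singular value of $P_{S_{0q}}$ can be taken uniform in the random draw of $S_{0q}$ (so that the Davis--Kahan application is valid on a high-probability event). This requires writing the subnetwork expected probability matrix in the factored form $C \Psi_0 \tilde{B} \Psi_0 C^\top$ where $\tilde{B}$ differs from $B$ only by bounded multiplicative factors, and tracking these factors through the eigengap estimate; everything else reduces to standard Bernstein-type concentration that is essentially identical to the full-network analysis in Lei--Rinaldo (2015).
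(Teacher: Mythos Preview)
The paper does not prove this statement; it is quoted verbatim as Theorem~6 of \citet{sonnet_comm} and used as a black-box input to the DCBM loss analysis. So there is no ``paper's own proof'' to compare against here.

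That said, your outline is the natural route and matches what the cited reference does: verify that $A_{S_{0q}}$ is itself a DCBM of size $\eta = o+m \asymp n$ satisfying the same structural assumptions (via Lemma~\ref{lemma:maxmincomm} for community sizes and the preserved $\psi_i \geq \psi_0$), invoke the Lei--Rinaldo spectral-norm concentration $\lVert A_{S_{0q}} - P_{S_{0q}} \rVert = O_p(\sqrt{\eta\rho_n})$, lower-bound $\lambda_K(P_{S_{0q}}) = \Omega(\eta\rho_n)$, apply Davis--Kahan, and then feed the eigenvector perturbation through the row-normalization and $K$-median analysis of \citet[Theorem~4.4]{lei2015}. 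The one place where your sketch is loose is the final accounting: you write that the misclustering count is ``$O_p(\rho_n^{-1})$ in the best case, and $o_p(\sqrt{n/\rho_n})$ once the extra slack from the degree-correction concentration is accounted for,'' without saying which step produces that slack. In the Lei--Rinaldo DCBM analysis the extra factor comes from converting a Frobenius (or $\ell_{2,1}$) eigenvector bound into a count of misclustered nodes after row normalization, where the rows of $U$ have norm $\Theta(\eta^{-1/2})$ rather than being bounded away from zero; tracking this carefully is what turns the SBM-type $O_p(\rho_n^{-1})$ into the weaker $o_p(\sqrt{\eta/\rho_n})$. Your identifiability/rescaling concern is real but benign: the per-community rescaling multiplies $B_0$ by a diagonal matrix with entries in $[\psi_0,1]$, which preserves full rank, distinct rows, and the eigenvalue lower bound up to constants.
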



    

The following lemma bounds the distance between the eigenvectors of principal submatrices of $A$ and $P$. The proof is based on \citet[Lemma~6,7]{cv_comm} and \citet[Theorem 5.2]{lei2015}.
\begin{lemma}\label{lemma:dc-eigen-diff}
    For any subnetwork spanned by nodes in $\mathcal{S} \subset [n]$ of size $\eta$ from a DCBM with $n$ nodes and $K$ communities satisfying Assumption \ref{assump:DCBM}, let $\hat{U}_{\mathcal{S}}$ and $U_{\mathcal{S}}$ be the matrix of top $K$ eigenvectors of $A_{\mathcal{S}}$ and $P_{\mathcal{S}}$, respectively. Then for some constant $c^{\prime\prime}$,
    \begin{align}
        \lVert \hat{U}_{\mathcal{S}} - U_{\mathcal{S}} \rVert_F \leq \frac{c^{\prime\prime}}{\sqrt{n\rho_n}} \ \ \ \ \mbox{ with probability } \geq 1 - \frac{1}{n},
    \end{align}
    as $\eta/n >c$ for some $c > 0$ and $n \to \infty$.
\end{lemma}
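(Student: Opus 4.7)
\textbf{Proof plan for Lemma \ref{lemma:dc-eigen-diff}.} The plan is a standard two-step argument: (i) control the spectral norm of the noise $A_{\mathcal{S}} - P_{\mathcal{S}}$ via random matrix concentration, and (ii) convert this into an eigenvector bound via a Davis--Kahan $\sin\Theta$ inequality, using a lower bound on the $K$-th eigenvalue of $P_{\mathcal{S}}$ to control the eigengap. Implicit throughout is that eigenvectors are defined only up to an orthogonal rotation within the top-$K$ eigenspace, so the bound is understood for an appropriate alignment of $\hat U_{\mathcal{S}}$ with $U_{\mathcal{S}}$ (as is done in \citet{cv_comm,lei2015}); the constant $c''$ absorbs the rotation and any multiplicative constants depending only on $K$, $\gamma$, $\psi_0$, and $B_0$.

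\textbf{Step 1 (concentration).} Since $A_{\mathcal{S}}$ is a $\eta\times\eta$ symmetric matrix with independent Bernoulli entries (above the diagonal) and edge probabilities bounded by $\rho_n \max_{i,j} P^0_{ij}$, the matrix Bernstein bound of \citet[Theorem~5.2]{lei2015} applied to the subnetwork yields a constant $C_1$ such that
\[
\bigl\lVert A_{\mathcal{S}} - P_{\mathcal{S}} \bigr\rVert \;\leq\; C_1 \sqrt{\eta \rho_n} \;\leq\; C_1' \sqrt{n\rho_n}
\]
with probability at least $1-\eta^{-1}\geq 1-n^{-1}$, provided $\eta\rho_n \gtrsim \log \eta$. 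Assumption \ref{assump:DCBM} ($\rho_n=\omega(n^{-1/3})$) together with $\eta/n>c$ easily validates this regime.

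\textbf{Step 2 (eigengap of $P_{\mathcal{S}}$).} Writing $P_{\mathcal{S}} = \Psi_{\mathcal{S}} B \Psi_{\mathcal{S}}^{\top} = \rho_n \Psi_{\mathcal{S}} B_0 \Psi_{\mathcal{S}}^\top$, where $\Psi_{\mathcal{S}}$ is the $\eta\times K$ restricted version of $\Psi$, I will lower-bound the $K$-th eigenvalue of $P_{\mathcal{S}}$ as $\lambda_K(P_{\mathcal{S}}) \geq c_\lambda \eta \rho_n$ for a constant $c_\lambda>0$ depending on $\gamma$, $\psi_0$, and $\lambda_{\min}(B_0)$. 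The key observation is that the $K\times K$ Gram-like matrix $\Psi_{\mathcal{S}}^\top \Psi_{\mathcal{S}}$ is diagonal with entries $\sum_{i\in G_k\cap\mathcal{S}}\psi_i^2 \geq \psi_0^2\,|G_k\cap\mathcal{S}|$; by Lemma \ref{lemma:maxmincomm}, $|G_k\cap \mathcal{S}| = \Omega(\eta)$ with high probability for every $k\in[K]$. Combining this with $\lambda_{\min}(B_0)>0$ gives the claimed $\Omega(\eta\rho_n)$ lower bound (via Weyl's inequality applied to $\Psi_{\mathcal{S}}^\top B_0 \Psi_{\mathcal{S}}^\top$ sandwiched by its diagonal scaling).

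\textbf{Step 3 (Davis--Kahan).} Applying the Yu--Wang--Samworth form of the $\sin\Theta$ theorem to $A_{\mathcal{S}}$ and $P_{\mathcal{S}}$ (which is exactly rank $K$, so there is no signal below the $K$-th eigenvalue), there exists $W\in \O_K$ with
\[
\bigl\lVert \hat U_{\mathcal{S}} - U_{\mathcal{S}} W \bigr\rVert_F
\;\leq\; \frac{2^{3/2}\sqrt{K}\,\lVert A_{\mathcal{S}} - P_{\mathcal{S}} \rVert}{\lambda_K(P_{\mathcal{S}})}
\;\leq\; \frac{2^{3/2}\sqrt{K}\,C_1'\sqrt{n\rho_n}}{c_\lambda \eta\rho_n}
\;\leq\; \frac{c''}{\sqrt{n\rho_n}},
\]
using $\eta\asymp n$ to absorb constants. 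Re-choosing the orthonormal basis of the $K$-dimensional invariant subspace of $P_{\mathcal{S}}$ so that $U_{\mathcal{S}}$ denotes $U_{\mathcal{S}} W$, the inequality in the statement follows, on an event of probability at least $1-n^{-1}$ (the intersection of the concentration event and the high-probability event of Lemma \ref{lemma:maxmincomm}).

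\textbf{Main obstacle.} The delicate step is Step 2: the eigengap is driven by the smallest eigenvalue of the subsampled quantity $\Psi_{\mathcal{S}}^\top B_0 \Psi_{\mathcal{S}}^\top$, and I must ensure this is $\Omega(\eta\rho_n)$ uniformly over random subsets $\mathcal{S}$ of size $\eta\asymp n$. Every ingredient is already in the model assumptions (distinct rows of $B_0$, $\lambda_{\min}(B_0)>0$, $\psi_i \geq \psi_0$, $n_k \geq \gamma n$) plus Lemma \ref{lemma:maxmincomm} for transferring the community-size condition to the subnetwork; the routine multiplicative accounting then yields the stated $(n\rho_n)^{-1/2}$ rate.
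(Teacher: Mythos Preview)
Your plan matches the paper's proof: concentration of $A_{\mathcal S}-P_{\mathcal S}$, a lower bound on $\lambda_K(P_{\mathcal S})$ of order $\eta\rho_n$, then Davis--Kahan. One tactical difference in Step~1: the paper applies \citet[Theorem~5.2]{lei2015} to the \emph{full} $A-P$ and then transfers to the principal submatrix via singular value interlacing, $\lVert A_{\mathcal S}-P_{\mathcal S}\rVert\le\lVert A-P\rVert$, which delivers the stated probability $1-n^{-1}$ directly. Your direct application to the $\eta\times\eta$ subnetwork gives probability $1-\eta^{-1}$, and your claimed inequality $1-\eta^{-1}\ge 1-n^{-1}$ is reversed (since $\eta\le n$ implies $\eta^{-1}\ge n^{-1}$). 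This is easily repaired---either adopt the interlacing step, or accept $1-(cn)^{-1}$ and absorb the constant---but worth fixing. For Step~2 the paper does not invoke Lemma~\ref{lemma:maxmincomm} and simply uses $\sum_{i\in\mathcal S}\psi_i^2\ge\psi_0^2\eta$; your version, which controls per-community sizes in $\mathcal S$, is slightly more careful and arrives at the same eigengap.
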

\begin{proof}
    From Assumption \ref{assump:DCBM},
    \begin{align}
        n \max_{i,j} P_{ij} \leq n\rho_n \max_{kk^\prime} B_{0, kk^\prime} (\max_{i} \Psi)^2 \leq b_0 n\rho_n,
    \end{align}
    for some constant $b_0 > 0$. Thus $b = b_0 n\rho_n \geq b_1 n \times n^{-1/3} = b_1 n^{2/3} \geq b_1 \log{n}$ for some constant $b_1 > 0$. Therefore, from \citet[Theorem~5.2]{lei2015}, \begin{align}
        P(\lVert A - P \rVert \leq C\sqrt{n\rho_n}) \geq 1 - \frac{1}{n},
    \end{align}
    where $C = C(C_b, b)$.

    Note that $\lVert A_{\mathcal{S}} - P_{\mathcal{S}}\rVert = \sigma_1(A_{\mathcal{S}} - P_{\mathcal{S}}) \leq \sigma_1(A-P) = \lVert A - P \rVert$, where $\sigma_1(\cdot)$ is the largest singular value of a matrix and the inequality follows from interlacing of singular vectors \citep[Theorem~1]{singularinterlacing_Thompson}. Then
    \begin{align}
        P\left(\lVert A_{\mathcal{S}} - P_{\mathcal{S}} \rVert \leq \frac{C}{\sqrt{c}}\sqrt{\eta \rho_n} \right) \geq P\left(\lVert A - P \rVert \leq C\sqrt{n \rho_n} \right) \geq 1 - \frac{1}{n}.\label{eqn:dc-submatrix}
    \end{align}

    Let $\Psi_{\mathcal{S}}$ be the $\eta \times K$ matrix of the degree parameters of the nodes in $\mathcal{S}$ such that the $i$-th row contains all $0$ except $\psi_{i}$ at the $g_i$-th position. 
    Let $\lambda_K(M)$ denote the $K$-th largest eigenvalue of any matrix $M$. Note that $\Psi^{\top}_{\mathcal{S}} \Psi_{\mathcal{S}} = diag\left(\sum\limits_{i \in \mathcal{S} \cap G_1} \psi_{i}^2, \ldots, \sum\limits_{i \in \mathcal{S} \cap G_K} \psi_{i}^2 \right)$. 
    If $B_{0} = \beta \Gamma \beta^\top$ is the eigen decomposition of $B_{0}$, then $\Psi_{\mathcal{S}} B_{0} \Psi_{\mathcal{S}}^\top = (\Psi_{\mathcal{S}} \beta) \Gamma (\Psi_{\mathcal{S}} \beta)^\top = (\Psi^\prime_{\mathcal{S}} \beta) \Gamma^\prime (\Psi^\prime_{\mathcal{S}} \beta)^\top$ is an eigen decomposition of $\Psi B_{0} \Psi^\top$, where $\Psi^\prime_{\mathcal{S}}$ and $\Gamma^\prime$ are correspondingly scaled. Then
    \begin{align}
        &\lambda_K(P_{\mathcal{S}}) = \rho_n \lambda_K(P_{0\mathcal{S}}) = \rho_n \lambda_K(\Psi_{\mathcal{S}} B_{0} \Psi_{\mathcal{S}}^{\top})\nonumber\\
        & \hspace{1.1cm}= \rho_n \lambda_K(B_{0}) \sum_{i \in \mathcal{S}} \psi_i^2 \geq \psi_0^2 \eta \rho_n \lambda_K(B_0) \geq \psi_0^2 c^\prime n \rho_n, \label{eqn:dc-lambda}
    \end{align}
    where the last inequality follows from Assumption \ref{assump:DCBM}.
    Next, using \citet[Lemma~7]{cv_comm} along with (\ref{eqn:dc-submatrix}) and (\ref{eqn:dc-lambda}), we have
    \begin{align}
        \lVert \hat{U}_{\mathcal{S}} - U_{\mathcal{S}} \rVert_F \leq \frac{2\sqrt{2}\lVert A_{\mathcal{S}} - P_{\mathcal{S}} \rVert}{\lambda_K(P_{\mathcal{S}})} \leq \frac{c^{\prime\prime}}{\sqrt{n \rho_n}} \ \ \ \ \mbox{ with probability } \geq 1 - \frac{1}{n}.  
    \end{align}
\end{proof}

\begin{lemma}\label{lemma:algebraic-wgt}
    For any $(a_i, p_i, w_i) \in \R^2 \times [0, \infty)$, $i \in [n]$,
    \begin{align}
        \sum\limits_{i = 1}^n (a_i - p_i)^2 w_i \geq \sum\limits_{i=1}^n (a_i - \overline{a}_w)^2 w_i - 2\sum\limits_{i=1}^n (a_i - \overline{a}_w)p_i w_i,
    \end{align}
    where $\overline{a}_w = \sum_{i=1}^n a_i w_i/\sum_{i=1}^n w_i$.
\end{lemma}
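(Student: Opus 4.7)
The plan is to prove the inequality by direct algebraic manipulation, reducing the difference of the two sides to an explicitly nonnegative quantity. The key observation is that the inequality is really an equality-plus-slack identity, where the slack term is a weighted sum of squares.

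First I would expand both sides and collect terms. On the left,
\begin{align*}
\sum_{i=1}^n (a_i - p_i)^2 w_i = \sum_i a_i^2 w_i - 2 \sum_i a_i p_i w_i + \sum_i p_i^2 w_i.
\end{align*}
On the right, using the defining identity $\bar{a}_w \sum_i w_i = \sum_i a_i w_i$ (which follows directly from the definition of $\bar{a}_w$), I would simplify
\begin{align*}
\sum_i (a_i - \bar{a}_w)^2 w_i - 2 \sum_i (a_i - \bar{a}_w) p_i w_i = \sum_i a_i^2 w_i - \bar{a}_w \sum_i a_i w_i - 2 \sum_i a_i p_i w_i + 2 \bar{a}_w \sum_i p_i w_i.
\end{align*}

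Subtracting the right side from the left side and using $\bar{a}_w \sum_i a_i w_i = \bar{a}_w^2 \sum_i w_i$, the mixed and linear terms in $a_i$ cancel, leaving
\begin{align*}
\text{LHS} - \text{RHS} = \sum_i p_i^2 w_i - 2 \bar{a}_w \sum_i p_i w_i + \bar{a}_w^2 \sum_i w_i = \sum_i (p_i - \bar{a}_w)^2 w_i.
\end{align*}
Since $w_i \geq 0$ for every $i$, this sum is nonnegative, which establishes the inequality.

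I do not expect any obstacles here — the statement is essentially the observation that among all constants $c$, the weighted mean $\bar{a}_w$ minimizes $\sum_i (a_i - c)^2 w_i$, and the slack in the claimed inequality is precisely $\sum_i (p_i - \bar{a}_w)^2 w_i$. The only thing to keep straight is the sign in the cross term $-2\sum_i (a_i - \bar{a}_w) p_i w_i$, which is what makes the $p_i^2 w_i$ terms survive and combine cleanly with the $\bar{a}_w$ terms into a perfect square.
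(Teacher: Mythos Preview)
Your proof is correct and essentially the same as the paper's: both identify the slack as $\sum_i (p_i - \bar{a}_w)^2 w_i \geq 0$. The paper arrives there by writing $(a_i - p_i) = (a_i - \bar{a}_w) + (\bar{a}_w - p_i)$, expanding the square, and using $\sum_i (a_i - \bar{a}_w) w_i = 0$, while you expand both sides directly and subtract; the algebra is the same.
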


\begin{proof}
    Consider
    \begin{align}
        \sum\limits_{i = 1}^n (a_i - p_i)^2 w_i &= \sum\limits_{i = 1}^n \left((a_i - \overline{a}_w) + (\overline{a}_w - p_i )\right)^2 w_i \nonumber\\
        &= \sum\limits_{i = 1}^n (a_i - \overline{a}_w)^2 w_i + \sum\limits_{i = 1}^n (\overline{a}_w - p_i)^2 w_i + 2\sum\limits_{i = 1}^n (a_i - \overline{a}_w)(\overline{a}_w - p_i) w_i \nonumber\\
        &\geq \sum\limits_{i = 1}^n (a_i - \overline{a}_w)^2 w_i + 2\overline{a}_w \sum\limits_{i = 1}^n (a_i - \overline{a}_w) w_i - 2\sum\limits_{i = 1}^n (a_i - \overline{a}_w) p_i w_i \nonumber\\
        &= \sum\limits_{i=1}^n (a_i - \overline{a}_w)^2 w_i - 2\sum\limits_{i=1}^n (a_i - \overline{a}_w)p_i w_i.
    \end{align}
\end{proof}

\paragraph*{Proof of Theorem \ref{theorem:dcbm_lossdiff} for DCBM}

\paragraph*{Case 1: ($\K < K$ for DCBM)} 
Recall that the test set is structured as $\S^c = \bigcup\limits_{1 \leq p < q \leq s} (S_p \times S_q)$, where the union is disjoint. For $\Tilde{K} < K$ and any specific $1 \leq p < q \leq s$, let $l_1, l_2, l_3 \in [K]$ and $k_1, k_2 \in [\K]$ be the indices from Lemma \ref{lemma:sbm-dcbm_php}. Without loss of generality, assume $l_1 = 1, l_2 = 2, l_3 = 3, k_1 = 1, k_2 = 2$ in Lemma \ref{lemma:sbm-dcbm_php} for some specific $1 \leq p < q \leq s$. Define

\begin{align*}
    &\Psi_{k,k^\prime,l,l^\prime}^{(pq)} = \sum\limits_{(i,j) \in T_{k,k^\prime,l,l^\prime}^{(pq)}} \psi_i^2\psi_j^2 \\
    &\lambda = \frac{\Psi_{1,2,1,3}^{(pq)}}{\Psi_{1,2,1,3}^{(pq)} + \Psi_{1,2,2,3}^{(pq)}} = \frac{\sum\limits_{(i,j) i\in T_{1,2,1,3}^{(pq)}} \psi_{i}^2 \psi_{j}^2 }{\sum\limits_{(i,j) i\in T_{1,2,1,3}^{(pq)}} \psi_{i}^2 \psi_{j}^2  + \sum\limits_{(i,j) i\in T_{1,2,2,3}^{(pq)}} \psi_{i}^2 \psi_{j}^2 } ,\\
    &\hat{p}_{k,k^\prime,l,l^\prime}^{(pq)} = \frac{\sum\limits_{(i,j) \in T_{k, k^\prime, l, l^\prime}^{(pq)}} A_{ij} \psi_{i} \psi_{j}}{\sum\limits_{(i,j) \in T_{k, k^\prime, l, l^\prime}^{(pq)}} \psi_{i}^2 \psi_{j}^2 },\\
    \mbox{and }\ \ & \hat{p} = \lambda \hat{p}_{1,2,1,3}^{(pq)} + (1-\lambda) \hat{p}_{1,2,2,3}^{(pq)} = \frac{\sum\limits_{(i,j) i\in T_{1,2,1,3}^{(pq)} \cup T_{1,2,2,3}^{(pq)}} A_{ij}\psi_{i} \psi_{j}}{\Psi_{1,2,1,3}^{(pq)} + \Psi_{1,2,2,3}^{(pq)}}.
\end{align*}
The superscript ${(pq)}$ is dropped from $\hat{p}_{k,k^\prime,l,l^\prime}^{(pq)}$ from the computations related to $L^{(pq)}(A, \hat{P})$ for notational simplicity. 
Now, consider the difference between the computed loss and the oracle loss within $S_p \times S_q$ as -
\begin{align}
    & L^{(pq)} \left(A, \hat{P}^{\prime(\K)} \right) - L^{(pq)}\left( A \right) \nonumber \\
    =& \sum\limits_{(i,j) \in S_p \times S_q} \left[ \left(A_{ij} - \hat{P}_{ij}^{\prime (\K)}\right)^2 - \left(A_{ij} - P_{ij} \right)^2 \right]\nonumber\\ 
    =& \sum\limits_{(k,k^\prime, l, l^\prime) \in [\K]^2 \times [K]^2} \sum\limits_{(i,j) \in T_{k,k^\prime, l, l^\prime}^{(pq)}} \left[ \left( A_{ij} - \hat{P}_{ij}^{\prime (\K)} \right)^2 - \left( A_{ij} - \psi_i \psi_j B_{ll^\prime} \right)^2 \right]\nonumber\\ 
    =& \sum\limits_{(i,j) \in T_{1,2,1,3}^{(pq)}} \left[ \left(\frac{A_{ij}}{\psi_i\psi_j} - \frac{\hat{P}_{ij}^{\prime (\K)}}{\psi_i\psi_j}\right)^2 \psi_i^2 \psi_j^2 - \left(\frac{A_{ij}}{\psi_i\psi_j} - B_{13}\right)^2 \psi_i^2 \psi_j^2 \right]\nonumber\\ \nonumber
    &+ \sum\limits_{(i,j) \in T_{1,2,2,3}^{(pq)}} \left[ \left(\frac{A_{ij}}{\psi_i\psi_j} - \frac{\hat{P}_{ij}^{\prime(\K)}}{\psi_i\psi_j}\right)^2 \psi_i^2 \psi_j^2 - \left(\frac{A_{ij}}{\psi_i\psi_j} - B_{23}\right)^2 \psi_i^2 \psi_j^2 \right] \\
    &+ \sum\limits_{(k,k^\prime, l, l^\prime) \notin \{ (1,2,1,3), (1,2,2,3) \} } \sum\limits_{(i,j) \in T_{k,k^\prime, l, l^\prime}^{(pq)}} \left[ \left(\frac{A_{ij}}{\psi_i\psi_j} - \frac{\hat{P}_{ij}^{\prime(\K)}}{\psi_i\psi_j}\right)^2 \psi_i^2 \psi_j^2 - \left(\frac{A_{ij}}{\psi_i\psi_j} - B_{ll^\prime}\right)^2 \psi_i^2 \psi_j^2 \right] \nonumber \\
    \geq& \sum\limits_{(i,j) \in T_{1,2,1,3}^{(pq)}} \left[ \left(\frac{A_{ij}}{\psi_i\psi_j} - \hat{p}\right)^2\psi_i^2\psi_j^2 - \left(\frac{A_{ij}}{\psi_i\psi_j} - B_{13}\right)^2\psi_i^2\psi_j^2 \right] \nonumber \\
    &+ \sum\limits_{(i,j) \in T_{1,2,2,3}^{(pq)}} \left[ \left(\frac{A_{ij}}{\psi_i\psi_j} - \hat{p}\right)^2\psi_i^2\psi_j^2 - \left(\frac{A_{ij}}{\psi_i\psi_j} - B_{23}\right)^2\psi_i^2\psi_j^2 \right]  \nonumber\\ \nonumber
    &+ \sum\limits_{(k,k^\prime, l, l^\prime) \notin \{ (1,2,1,3), (1,2,2,3) \} } \sum\limits_{(i,j) \in T_{k,k^\prime, l, l^\prime}^{(pq)}} \left[ \left(\frac{A_{ij}}{\psi_i\psi_j} - \hat{p}_{k,k^\prime,l,l^\prime}\right)^2\psi_i^2\psi_j^2 - \left(\frac{A_{ij}}{\psi_i\psi_j} - B_{ll^\prime}\right)^2\psi_i^2\psi_j^2 \right]\\
    &- \sum\limits_{(i,j) \in T_{1,2,1,3}^{(pq)} \cup T_{1,2,2,3}^{pq}} \left(\frac{A_{ij}}{\psi_i\psi_j} - \hat{p} \right) \hat{P}_{ij}^{\prime(\K)} \psi_i^2 \psi_j^2 \nonumber\\
    &- \sum\limits_{(k,k^\prime, l, l^\prime) \notin \{ (1,2,1,3), (1,2,2,3) \} } \sum\limits_{(i,j) \in T_{k,k^\prime, l, l^\prime}^{(pq)}} \left(\frac{A_{ij}}{\psi_i\psi_j} - \hat{p}_{k, k^\prime, l, l^\prime} \right) \hat{P}_{ij}^{\prime(\K)} \psi_i^2 \psi_j^2 \nonumber\\
    &\hspace{10cm}(\mbox{from Lemma \ref{lemma:algebraic-wgt}}) \nonumber\\
    =& I + II + III - E - F \ \mbox{(say)}.\label{eqn:BM_cand_less_K_loss_diff} 
\end{align}

Then,
\begin{align}
    I &= \sum\limits_{(i,j) \in T_{1,2,1,3}^{(pq)}} \left[ \left(\frac{A_{ij}}{\psi_i\psi_j} - \hat{p}\right)^2\psi_i^2\psi_j^2 - \left(\frac{A_{ij}}{\psi_i\psi_j} - B_{13}\right)^2\psi_i^2\psi_j^2 \right]\nonumber\\ \nonumber
    &= -2\hat{p} \sum\limits_{(i,j) \in T_{1,2,1,3}^{(pq)}} A_{ij}\psi_i\psi_j + \hat{p}^2 \Psi_{1,2,1,3}^{(pq)} + 2B_{13} \sum\limits_{(i,j) \in T_{1,2,1,3}^{(pq)}} A_{ij} \psi_i\psi_j -  B_{13}^2 \Psi_{1,2,1,3}^{(pq)}\\ \nonumber
    &= \Psi_{1,2,1,3}^{(pq)} \left( \hat{p}^2 - 2 \hat{p}\hat{p}_{1,2,1,3} + \hat{p}_{1,2,1,3}^2 - \hat{p}_{1,2,1,3}^2 + 2\hat{p}_{1,2,1,3} B_{13} - B_{13}^2 \right)\\ \nonumber
    &= \Psi_{1,2,1,3}^{(pq)} \left[ (\hat{p} - \hat{p}_{1,2,1,3})^2 - (\hat{p}_{1,2,1,3} - B_{13})^2 \right]\\ \nonumber
    &= \Psi_{1,2,1,3}^{(pq)} \left[ (1-\lambda)^2(\hat{p}_{1,2,1,3} - \hat{p}_{1,2,2,3})^2 - (\hat{p}_{1,2,1,3} - B_{13})^2 \right]\\ \nonumber
    &\geq \Psi_{1,2,1,3}^{(pq)} \Bigg[ \frac{(1-\lambda)^2}{2} (B_{13} - B_{23})^2 - (1-\lambda)^2 ((\hat{p}_{1,2,1,3} - B_{13}) - (\hat{p}_{1,2,2,3} - B_{23}))^2\\ \nonumber
    &\hspace{10cm} - (\hat{p}_{1,2,1,3} - B_{13})^2 \Bigg]\\ \nonumber
    &\hspace{8cm} \left(\text{as } (a+b)^2 \geq \frac{b^2}{2} - a^2 \right)\\ \nonumber
    &\geq \Psi_{1,2,1,3}^{(pq)} \Bigg[ \frac{(1-\lambda)^2}{2} (B_{13} - B_{23})^2 - 2(1-\lambda)^2 (\hat{p}_{1,2,1,3} - B_{13})^2\\
    &\hspace{5cm}  - 2(1-\lambda)^2 (\hat{p}_{1,2,2,3} - B_{23})^2 - (\hat{p}_{1,2,1,3} - B_{13})^2 \Bigg] \nonumber\\
    &\hspace{8cm} \left( \text{ as } (a-b)^2 \leq 2(a^2 + b^2) \right).\label{eqn:term_I_BM}
\end{align}

From Assumption \ref{assump:DCBM}, 
\begin{align*}
&\Psi_{1,2,1,3}^{(pq)} = \sum\limits_{(i,j) \in T_{1,2,1,3}^{(pq)}} \psi_i^2\psi_j^2 \geq \psi_{0}^4 \lvert T_{1,2,1,3}^{(pq)} \rvert \geq \Tilde{c}\psi_{0}^4 m^2,\\
\mbox{ and }\ \ &\Psi_{1,2,1,3}^{(pq)} \leq \lvert T_{1,2,1,3}^{(pq)} \rvert \leq \lvert S_p \times S_q \rvert = m^2. 
\end{align*}
Thus, $\Psi_{1,2,1,3}^{(pq)} \asymp m^2$. Similarly, $\Psi_{1,2,2,3}^{(pq)} \asymp m^2$. From Lemma \ref{lemma:sbm-dcbm_php},
\begin{align}
\frac{(1-\lambda)^2}{2} (B_{13} - B_{23})^2 &= \frac{ (\Psi_{1,2,2,3}^{(pq)})^2 }{2(\Psi_{1,2,1,3}^{(pq)} + \Psi_{1,2,2,3}^{(pq)} )^2} (B_{13} - B_{23})^2
\geq c_1 \rho_n^2. \label{eqn:bgap}
\end{align}
For $(k, k^\prime, l, l^\prime) \in  \{(1,2,1,3), (1,2,2,3)\}$, we have
\begin{align}
    \lvert \hat{p}_{k,k^\prime, l, l^\prime} - B_{ll^\prime} \rvert &= \left\lvert \frac{\sum\limits_{(i,j) \in T_{k, k^\prime, l, l^\prime}^{(pq)}} A_{ij} \psi_{i} \psi_{j}}{\sum\limits_{(i,j) \in T_{k, k^\prime, l, l^\prime}^{(pq)}} \psi_{i}^2 \psi_{j}^2 } - B_{ll^\prime} \right \rvert \nonumber\\
    & = \frac{\sum\limits_{(i,j) \in T_{k, k^\prime, l, l^\prime}^{(pq)}} \left\lvert A_{ij} - B_{ll^\prime}\psi_i \psi_j \right\rvert \psi_{i} \psi_{j}}{\sum\limits_{(i,j) \in T_{k, k^\prime, l, l^\prime}^{(pq)}} \psi_{i}^2 \psi_{j}^2 } \nonumber\\
    & \leq \frac{1}{\psi_0^4 \lvert T_{k, k^\prime, l, l^\prime}^{(pq)} \rvert} \sum\limits_{(i,j) \in T_{k, k^\prime, l, l^\prime}^{(pq)}} \left\lvert A_{ij} - B_{ll^\prime}\psi_i \psi_j \right\rvert \nonumber\\
    &= \frac{1}{\psi_0^4 \lvert T_{k, k^\prime, l, l^\prime}^{(pq)} \rvert} \sum\limits_{(i,j) \in T_{k, k^\prime, l, l^\prime}^{(pq)}} \left\lvert A_{ij} - E(A_{ij})\right\rvert  \nonumber \\
    & = O_{p}\left( \sqrt{\frac{\rho_n}{\lvert T_{k, k^\prime, l, l^\prime}^{(pq)} \rvert}}\right) \ \ \mbox{ (from Lemma \ref{lemma:sum-aij} )}. \label{eqn:pkkll_bound_dc}
\end{align}


Plugging in the inequalities in (\ref{eqn:bgap}) and (\ref{eqn:pkkll_bound_dc}) in (\ref{eqn:term_I_BM}), we get
\begin{align*}
    I \geq \Tilde{c} \psi_0^4 m^2\rho_n^2 - O_p(\rho_n).
\end{align*}
Similarly, 
\begin{align*}
    II \geq \Tilde{c} \psi_0^4 m^2\rho_n^2 - O_p(\rho_n).
\end{align*}

\begin{align}
    III &= \sum\limits_{(k,k^\prime, l, l^\prime) \notin \{ (1,2,1,3), (1,2,1,2) \} } \sum\limits_{(i,j) \in T_{k,k^\prime, l, l^\prime}^{(pq)}} \left[ \left(\frac{A_{ij}}{\psi_i\psi_j} - \hat{p}_{k,k^\prime,l,l^\prime}\right)^2\psi_i^2\psi_j^2 - \left(\frac{A_{ij}}{\psi_i\psi_j} - B_{ll^\prime}\right)^2\psi_i^2\psi_j^2 \right]\nonumber\\ 
    &= \sum\limits_{(k,k^\prime, l, l^\prime) \notin \{ (1,2,1,3), (1,2,1,2) \} } \sum\limits_{(i,j) \in T_{k,k^\prime, l, l^\prime}^{(pq)}} \left[ A_{ij}^2 - 2\hat{p}_{k,k^\prime,l,l^\prime} A_{ij} \psi_i \psi_j + \hat{p}_{k,k^\prime,l,l^\prime}^2 \psi_i^2\psi_j^2 \right. \nonumber\\
    &\hspace{8cm}\left. - A_{ij}^2 + 2B_{ll^\prime} A_{ij} \psi_i \psi_j - B_{ll^\prime}^2\psi_i^2\psi_j^2 \right] \nonumber\\
    &= \sum\limits_{(k,k^\prime, l, l^\prime) \notin \{ (1,2,1,3), (1,2,1,2) \} } \Psi_{k,k^\prime, l, l^\prime}^{(pq)} \left[ -2\hat{p}_{k,k^\prime,l,l^\prime}^2 + \hat{p}_{k,k^\prime,l,l^\prime}^2 + 2B_{ll^\prime} \hat{p}_{k,k^\prime,l,l^\prime} - B_{ll^\prime}^2 \right] \nonumber\\
    &= - \sum\limits_{(k,k^\prime, l, l^\prime) \notin \{ (1,2,1,3), (1,2,1,2) \} } \Psi_{k,k^\prime, l, l^\prime}^{(pq)}  \left(\hat{p}_{k, k^\prime, l, l^\prime} - B_{ll^\prime}\right)^2 \nonumber\\
    &= - \sum\limits_{(l,l^\prime) \notin \{(1,2), (1,3)\}} \sum\limits_{(k, k^\prime) \neq (1,2)} \Psi_{k,k^\prime, l, l^\prime}^{(pq)}  \left(\hat{p}_{k, k^\prime, l, l^\prime} - B_{ll^\prime}\right)^2, \nonumber\\
    &= - O_p(\rho_n),
\end{align}
where the last inequality follows from applying Lemma \ref{lemma:bern} on the inner sum as it contains averages of $\Omega(n^2)$ independent Bernoulli random variables.


Note that $E(A_{ij}) = B_{ll^\prime} \psi_i\psi_j$ for any $(i,j) \in T_{k,k^\prime,l,l^\prime}$. 
Also note that the quantity $E$ in \eqref{eqn:BM_cand_less_K_loss_diff} deals with the case where the node pairs are from $T_{1,2,1,3}^{(pq)} \cup T_{1,2,2,3}^{(pq)}$. Define the events $\Gamma_{k,k^\prime,l,l^\prime}^{(pq)} = \left\{ (i,j) \in T_{k,k^\prime,l,l^\prime}^{(pq)} \right\}$. We have that the events $\Gamma_{1,2,1,3}^{(pq)}$ and $\Gamma_{1,2,2,3}^{(pq)}$ are disjoint. Consider the following conditional expectation
\begin{align}
    E_{ij} &\coloneqq E\left[ \frac{A_{ij}}{\psi_i \psi_j} \left\lvert\right. \Gamma_{1,2,1,3}^{(pq)} \cup \Gamma_{1,2,2,3}^{(pq)} \right] \nonumber\\
    &= \frac{1}{\psi_i\psi_j} P\left( A_{ij} = 1 \left\lvert\right. \Gamma_{1,2,1,3}^{(pq)} \cup \Gamma_{1,2,2,3}^{(pq)} \right) \nonumber\\
    &= \frac{1}{\psi_i\psi_j} \left[ P\left( \{A_{ij} = 1\} \cap \Gamma_{1,2,1,3}^{(pq)} \left\lvert\right. \Gamma_{1,2,1,3}^{(pq)} \cup \Gamma_{1,2,2,3}^{(pq)} \right) + P\left( \{A_{ij} = 1\} \cap \Gamma_{1,2,2,3}^{(pq)} \left\lvert\right. \Gamma_{1,2,1,3}^{(pq)} \cup \Gamma_{1,2,2,3}^{(pq)} \right) \right] \nonumber\\
    &= \frac{1}{\psi_i\psi_j} \left[ P\left( \{A_{ij} = 1\} \left\lvert\right. \Gamma_{1,2,1,3}^{(pq)} \right) P\left( \Gamma_{1,2,1,3}^{(pq)} \left\lvert\right. \Gamma_{1,2,1,3}^{(pq)} \cup \Gamma_{1,2,2,3}^{(pq)} \right) \right. \nonumber\\
    & \hspace{2cm} + \left. P\left( \{A_{ij} = 1\} \left\lvert\right. \Gamma_{1,2,2,3}^{(pq)} \right) P\left( \Gamma_{1,2,2,3}^{(pq)} \left\lvert\right. \Gamma_{1,2,1,3}^{(pq)} \cup \Gamma_{1,2,2,3}^{(pq)} \right) \right] \nonumber\\
    &= B_{13} P\left( \Gamma_{1,2,2,3}^{(pq)} \left\lvert\right. \Gamma_{1,2,1,3}^{(pq)} \cup \Gamma_{1,2,1,3}^{(pq)} \right) + B_{23} P\left( \Gamma_{1,2,2,3}^{(pq)} \left\lvert\right. \Gamma_{1,2,1,3}^{(pq)} \cup \Gamma_{1,2,2,3}^{(pq)} \right) \nonumber\\
    &\approx \lambda B_{13} + (1 - \lambda) B_{23} = \mu \ \mbox{(say)}.
\end{align}

We also show that
\begin{align}
    E\left[\hat{p} \left\lvert\right. \Gamma_{1,2,1,3}^{(pq)} \cup \Gamma_{1,2,2,3}^{(pq)} \right] = \frac{ \sum\limits_{(i,j) \in T_{1,2,1,3}^{(pq)} \cup T_{1,2,2,3}^{(pq)}} E\left[\frac{A_{ij}}{\psi_i \psi_j} \left\lvert\right. \Gamma_{1,2,1,3}^{(pq)} \cup \Gamma_{1,2,2,3}^{(pq)} \right] \psi_i^2 \psi_j^2 }{\sum\limits_{(i,j) \in T_{1,2,1,3}^{(pq)} \cup T_{1,2,2,3}^{(pq)}} \psi_i^2 \psi_j^2} = \mu.
\end{align}

Then conditional under the event $\Gamma_{1,2,1,3}^{(pq)} \cup \Gamma_{1,2,2,3}^{(pq)}$
\begin{align}
    E &= \sum\limits_{(i,j) \in T_{1,2,1,3}^{(pq)} \cup T_{1,2,2,3}^{(pq)}} \left( \frac{A_{ij}}{\psi_i\psi_j} - \hat{p} \right) \hat{P}_{ij}^{\prime(\K)} \psi_i^2 \psi_j^2 \nonumber\\
    &\leq \sum\limits_{(i,j) \in T_{1,2,1,3}^{(pq)} \cup T_{1,2,2,3}^{(pq)}} \left\lvert \frac{A_{ij}}{\psi_i\psi_j} - \mu \right\rvert + \sum\limits_{(i,j) \in T_{1,2,1,3}^{(pq)} \cup T_{1,2,2,3}^{(pq)}} \left\lvert \hat{p} - \mu \right\rvert \nonumber\\
    &= \left\lvert T_{1,2,1,3}^{(pq)} \cup T_{1,2,2,3}^{(pq)} \right\rvert O_p\left( \sqrt{\frac{\mu}{\left\lvert T_{1,2,1,3}^{(pq)} \cup T_{1,2,2,3}^{(pq)} \right\rvert}} \right)  \ \mbox{(from Lemma \ref{lemma:bern})} \nonumber\\
    &= O_p\left( m \sqrt{\rho_n} \right).
\end{align}

Using a similar argument
\begin{align}
    F = O_p(m \sqrt{\rho_n}).
\end{align}

Combining all the cases, we have
\begin{align}
    L^{(pq)}(A, \hat{P}) - L^{(pq)}(A) &\geq I + II + III - E - F \geq \Tilde{c}\psi_0^4 m^2\rho_n^2 - O_p(\rho_n) - O_p(m\sqrt{\rho_n}) \nonumber\\
    &= \Omega_p(m^2\rho_n^2),
\end{align}
as the assumption $\rho_n = \omega\left(n^{-1/3}\right)$ 
implies that the first term dominates.

Then the difference between the computed loss and the oracle loss over the entire test set satisfies
\begin{align}
    L(A, \hat{P}) - L(A) = \sum\limits_{1 \leq p < q \leq s} \left[L^{(pq)}(A, \hat{P}) - L^{(pq)}(A)\right] = \Omega_p(s(s-1)m^2 \rho_n^2).
\end{align}



\paragraph*{Case 2.2: DCBM ($\K = K$)}

For any $q\in[s]$, the community normalized versions of the connectivity matrix and the degree heterogeneity parameters for the $q$-th subgraph are defined as
\begin{align}
    \psi^\prime_{q,i} = \frac{\psi_i}{\sqrt{\sum\limits_{j\in S_{0q} : g_j = g_i} \psi_j^2 }} \ \ \mbox{ and }\ \ B^{\prime}_{q, kk^\prime} = B_{k k ^\prime} \sqrt{\sum\limits_{i\in S_{0q} : g_i = k} \psi_i^2 \sum\limits_{j \in S_{0q} : g_j = k^\prime} \psi_j^2}.
\end{align}

From Assumption \ref{assump:DCBM}, for any $q \in [s]$, we have
\begin{align}
    &\psi^\prime_{q,i} = \frac{\psi_i}{\sqrt{\sum\limits_{j\in S_{0q} : g_j = g_i} \psi_j^2 }} \leq \frac{\psi_i}{\sqrt{(o+m)}\psi_0} \nonumber\\
    \mbox{and } \ \ & \psi^\prime_{q,i} = \frac{\psi_i}{\sqrt{\sum\limits_{j\in S_{0q} : g_j = g_i} \psi_j^2 }} \geq \frac{\psi_i}{\sqrt{o+m}}.\label{eqn:psi2}
\end{align}
Thus, $\psi_{p,i}^\prime \asymp (o+m)^{-1/2} \psi_i$ and $\psi_{q,j}^\prime \asymp (o+m)^{-1/2} \psi_i$.


Recall that $\psi^\prime_{q,i} = \lVert U_{q, i\cdot} \rVert$ and $\hat{\psi}^{\prime}_{i} = \lVert \hat{U}_{q, i\cdot} \rVert$ for $i \in S_{0q}$, $q \in [s]$. Define $\hat{\psi}^{\prime\prime}_q$ and $\psi^\prime_q$ as $(o+m) \times 1$ vectors, whose $i$th elements are $\hat{\psi}^{\prime}_{i}$ and $\psi^\prime_{q,i}$, respectively for $i \in S_{0q}$, $q \in [s]$. 
Then by triangle inequality and Cauchy-Schwarz inequality 
\begin{align}
    \lVert \hat{\psi}^{\prime\prime}_q - \psi^\prime_q \rVert_1 &= \sum\limits_{i \in S_{0q}} \left\lvert \lVert \hat{U}_{q, i\cdot} \rVert - \lVert U_{q, i\cdot} \rVert \right\rvert \leq \sum\limits_{i \in S_{0q}} \lVert \hat{U}_{q, i\cdot} - U_{q, i\cdot} \rVert \nonumber\\
    &\leq \sqrt{o+m} \lVert \hat{U}_{q} - U_{q} \rVert_F \nonumber\\
    &\leq C\sqrt{o+m}/\sqrt{n\rho_n} \ \mbox{(with high probability from Lemma \ref{lemma:dc-eigen-diff})} \nonumber\\
    &\leq C \rho_n^{-1/2} \ \mbox{ for all }\ q \in [s]. \label{eqn:dc-psidiff}
\end{align}
We also have $\min\limits_{i \in S_{0q}} \psi_{q,i}^\prime \geq \psi_0 (o+m)^{-1/2}$, 
$q \in [s]$ from Assumption \ref{assump:DCBM}. For $\alpha \in (0,1)$ and $q \in [s]$, define the following subset of nodes as
\begin{align*}
    \mathcal{S}_{\alpha q} \coloneqq \left\{ i \in S_{0q} : \lvert \hat{\psi}^\prime_{i} - \psi^\prime_{q, i} \rvert \leq (o+m)^{-{1}/{2}} (n^\alpha \rho_n)^{-{1}/{2}} \right\}.
\end{align*}
Then for all $i \in \mathcal{S}_{\alpha q}$ and $q \in [s]$ and $\rho_n = \omega(n^{-\alpha})$ for some $\alpha \in (0,1)$, from (\ref{eqn:dc-psidiff}), we have 
\begin{align}
    &\lvert \hat{\psi}^\prime_{i} - \psi^\prime_{q, i} \rvert = o\left((o+m)^{-1/2}\right) \ \mbox{ for all }\ i \in \mathcal{S}_{\alpha q}, \nonumber\\
    \mbox{and }\ &\frac{\lvert \mathcal{S}_{\alpha q}^c \rvert}{o+m} \leq \frac{\lVert \hat{\psi}^\prime_q - \psi^\prime_q \rVert_1}{(o+m)^{{1}/{2}} n^{-\alpha/2} \rho_n^{1/2} } \leq \frac{C \rho_n^{-1/2}}{(o+m)^{{1}/{2}} n^{-\alpha/2} \rho_n^{-1/2} } \leq C^\prime n^{-(1-\alpha)/2}. \label{eqn:math_s_size}
\end{align}
Setting $\alpha = 1/3$ and using the assumption $\rho_n = \omega(n^{-1/3})$ (Assumption \ref{assump:DCBM}), on all but a vanishing proportion of nodes $i \in S_{0q}$, $q \in [s]$,
\begin{align}
    \hat{\psi}^\prime_{i} = (1 + o(1)) \psi^\prime_{q, i} \asymp (1 + o(1))(o+m)^{-1/2} \psi_i. \label{eqn:psi_prime}    
\end{align}



Consider the following oracle estimator of the normalized community probability matrix defined as
\begin{align}
    B^{\prime *}_{q, k k^\prime} &\coloneqq \frac{\sum\limits_{(i,j) \in (S_{0q} \times S_{0q})} A_{ij} \I\left( {g}_{i} = k,\ {g}_{j} = k^\prime \right)}{\sum\limits_{(i,j) \in (S_{0q} \times S_{0q})} {\psi}^\prime_{q,i} {\psi}^\prime_{q,j} \I\left( {g}_{i} = k,\ {g}_{j} = k^\prime \right)} \nonumber\\
    &= \frac{O_p((o+m)B_{kk^\prime})}{\Omega_p((o+m)^{-1} (o+m))} = O_p((o+m)B_{kk^\prime}),\label{eqn:Bstar}
\end{align}
where the numerator follows from Lemma \ref{lemma:sum-aij}, and the denominator from (\ref{eqn:psi2}) and Lemma \ref{lemma:maxmincomm}. Now, consider the following ratio for any $k, k^\prime \in [K]$ and $q \in [s]$
\begin{align}
    \frac{\hat{B}^\prime_{q, kk^\prime}}{B^{\prime *}_{q, kk^\prime}} = \frac{\sum\limits_{(i,j) \in \hat{\U}_{kk^\prime}^{(q)}} A_{ij} }{\sum\limits_{(i,j) \in \hat{\U}_{kk^\prime}^{(q)}} \hat{\psi}^\prime_{i} \hat{\psi}^\prime_{j}} \times \frac{\sum\limits_{(i,j) \in {\U}_{kk^\prime}^{(q)}} {\psi}^\prime_{q,i} {\psi}^\prime_{q,j}}{\sum\limits_{(i,j) \in {\U}_{kk^\prime}^{(q)}} A_{ij} } = \frac{\hat{\Sigma}}{\hat{\Sigma}_\Psi} \times \frac{\Sigma_\Psi}{\Sigma}, \ \mbox{(say)}.
\end{align}

\begin{figure}[!ht]
    \centering
    
\tikzset{every picture/.style={line width=0.75pt}} 
\tikzstyle{every node}=[font=\large]

\begin{tikzpicture}[x=0.75pt,y=0.75pt,yscale=-1,xscale=1]

\draw  [fill={rgb, 255:red, 134; green, 184; blue, 242 }  ,fill opacity=0.25 ][line width=0.75]  (189.1,80.44) -- (413.41,80.44) -- (413.41,304.74) -- (189.1,304.74) -- cycle ;
\draw  [fill={rgb, 255:red, 248; green, 231; blue, 28 }  ,fill opacity=0.3 ] (274.55,165.89) -- (498.86,165.89) -- (498.86,390.19) -- (274.55,390.19) -- cycle ;
\draw  [fill={rgb, 255:red, 184; green, 233; blue, 134 }  ,fill opacity=0.3 ] (274.55,165.89) -- (413.41,165.89) -- (413.41,304.74) -- (274.55,304.74) -- cycle ;
\draw   (277.22,391.53) .. controls (277.22,396.2) and (279.55,398.53) .. (284.22,398.53) -- (377.37,398.53) .. controls (384.04,398.53) and (387.37,400.86) .. (387.37,405.53) .. controls (387.37,400.86) and (390.7,398.53) .. (397.37,398.53)(394.37,398.53) -- (490.52,398.53) .. controls (495.19,398.53) and (497.52,396.2) .. (497.52,391.53) ;
\draw   (404.06,77.77) .. controls (404.06,73.1) and (401.73,70.77) .. (397.06,70.77) -- (309.92,70.77) .. controls (303.25,70.77) and (299.92,68.44) .. (299.92,63.77) .. controls (299.92,68.44) and (296.59,70.77) .. (289.92,70.77)(292.92,70.77) -- (202.78,70.77) .. controls (198.11,70.77) and (195.78,73.1) .. (195.78,77.77) ;
\draw   (500.19,387.52) .. controls (504.86,387.52) and (507.19,385.19) .. (507.19,380.52) -- (507.19,288.71) .. controls (507.19,282.04) and (509.52,278.71) .. (514.19,278.71) .. controls (509.52,278.71) and (507.19,275.38) .. (507.19,268.71)(507.19,271.71) -- (507.19,176.89) .. controls (507.19,172.22) and (504.86,169.89) .. (500.19,169.89) ;
\draw   (417.41,159.21) .. controls (422.08,159.21) and (424.41,156.88) .. (424.41,152.21) -- (424.41,131.83) .. controls (424.41,125.16) and (426.74,121.83) .. (431.41,121.83) .. controls (426.74,121.83) and (424.41,118.5) .. (424.41,111.83)(424.41,114.83) -- (424.41,91.44) .. controls (424.41,86.77) and (422.08,84.44) .. (417.41,84.44) ;
\draw   (193.11,310.08) .. controls (193.11,314.75) and (195.44,317.08) .. (200.11,317.08) -- (221.16,317.08) .. controls (227.83,317.08) and (231.16,319.41) .. (231.16,324.08) .. controls (231.16,319.41) and (234.49,317.08) .. (241.16,317.08)(238.16,317.08) -- (262.21,317.08) .. controls (266.88,317.08) and (269.21,314.75) .. (269.21,310.08) ;
\draw   (187.77,84.44) .. controls (183.1,84.44) and (180.77,86.77) .. (180.77,91.44) -- (180.77,182.59) .. controls (180.77,189.26) and (178.44,192.59) .. (173.77,192.59) .. controls (178.44,192.59) and (180.77,195.92) .. (180.77,202.59)(180.77,199.59) -- (180.77,293.74) .. controls (180.77,298.41) and (183.1,300.74) .. (187.77,300.74) ;

\draw (354.38,413.82) node [anchor=north west][inner sep=0.75pt]  [font=\large]  {$\Bigl| G_{k}^{( q)}\Bigl| =n_{q,k}$};
\draw (510.77,254.11) node [anchor=north west][inner sep=0.75pt]  [font=\large]  {$ \begin{array}{l}
\Bigl| G_{k^{\prime }}^{( q)}\Bigl|\\
=n_{q,k^{\prime }}
\end{array}$};
\draw (259.42,25.1) node [anchor=north west][inner sep=0.75pt]  [font=\large]  {$\Bigl|\hat{G}_{k}^{( q)}\Bigl| \ =\hat{n}_{q,k}$};
\draw (110,172.84) node [anchor=north west][inner sep=0.75pt]  [font=\large]  {$ \begin{array}{l}
\Bigl|\hat{G}_{k^{\prime }}^{( q)}\Bigl|\\
=\hat{n}_{q,k^{\prime }}
\end{array}$};
\draw (436.55,98.89) node [anchor=north west][inner sep=0.75pt]  [font=\large]  {$\Bigl|\hat{G}_{k^{\prime }}^{( q)} \setminus G_{k^{\prime }}^{( q)}\Bigl| \leq c n^{\frac{1}{2}}\rho _{n}^{-\frac{1}{2}}$};
\draw (170,328.87) node [anchor=north west][inner sep=0.75pt]  [font=\large]  {$\Bigl|\hat{G}_{k^{\prime }}^{( q)} \setminus G_{k^{\prime }}^{( q)}\Bigl|$};
\draw (185.82,368.25) node [anchor=north west][inner sep=0.75pt]  [font=\large]  {$\leq c n^{\frac{1}{2}}\rho _{n}^{-\frac{1}{2}} \ $};
\draw (303.48,222.89) node [anchor=north west][inner sep=0.75pt]    {$\mathcal{U}_{kk^{\prime }}^{( q)} \cap \hat{\mathcal{U}}_{kk^{\prime }}^{( q)}$};
\draw (395.6,336.15) node [anchor=north west][inner sep=0.75pt]  [rotate=-325.05]  {$\mathcal{U}_{kk^{\prime }}^{( q)} \setminus \hat{\mathcal{U}}_{kk^{\prime }}^{( q)}$};
\draw (208.68,143.89) node [anchor=north west][inner sep=0.75pt]  [rotate=-325.05]  {$\hat{\mathcal{U}}_{kk^{\prime }}^{( q)} \setminus \mathcal{U}_{kk^{\prime }}^{( q)}$};

\end{tikzpicture}

    \caption{Intersecting sets of $\U_{kk^\prime}^{(q)}$ and $\hat{\U}_{kk^\prime}^{(q)}$ for DCBM}
    \label{fig:DCBM_cardinal}
\end{figure}

Using Theorem \ref{DCBM_bound} and Figure \ref{fig:DCBM_cardinal}, we have $\lvert \hat{\U}_{kk^\prime}^{(q)} \setminus \U_{kk^\prime}^{(q)} \rvert = O_p\left((o+m)n^{1/2}\rho_n^{-1/2}\right)$. 
We also have $\lvert \U_{kk^\prime}^{(q)} \rvert \asymp \lvert \hat{\U}_{kk^\prime}^{(q)} \rvert \asymp (o+m)^2$. Using these bounds and Lemma \ref{lemma:sum-aij}, we get
\begin{align}
    \frac{\hat{\Sigma}}{\Sigma} &= 1 + \frac{\hat{\Sigma} - \Sigma}{\Sigma} \leq 1 + \frac{\lvert \hat{\U}_{kk^\prime}^{(q)} \setminus \U_{kk^\prime}^{(q)} \rvert}{\lvert \U_{kk^\prime}^{(q)} \rvert B_{kk^\prime}}  \nonumber\\
    &\leq 1 + \frac{O_p\left((o+m)n^{1/2} \rho_n^{-1/2}\right)}{c^\prime(o+m)^2 B_{kk^\prime}} \nonumber\\
    &= 1 + O\left( \frac{n^{1/2} \rho_n^{-1/2}}{(o+m)\rho_n} \right) = 1 + O_p\left( n^{-1/2} \rho_n^{-3/2} \right). 
\end{align}
Then under the event of $\mathcal{S}_{\frac{1}{3}q}$,
\begin{align}
    \frac{\Sigma_\Psi}{\hat{\Sigma}_\Psi} = O\left( \frac{\lvert \U_{kk^\prime}^{(q)} \rvert (o+m)^{-1} }{\lvert \hat{\U}_{kk^\prime}^{(q)} \rvert(o+m)^{-1}} \right) = O(1).
\end{align}
Thus, we have
\begin{align}
    {\hat{B}^\prime_{q, kk^\prime}} = \left(1 + O_p\left(n^{-\frac{1}{2}} \rho_n^{-\frac{3}{2}}\right)\right){B^{\prime *}_{q, kk^\prime}}.\label{eqn:b-hat}
\end{align}
Using (\ref{eqn:Bstar}) and (\ref{eqn:b-hat}), we get
\begin{align}
    &\hat{B}^\prime_{q, kk^\prime} = \left(1 + O_p\left(n^{-\frac{1}{2}} \rho_n^{-\frac{3}{2}}\right)\right) (o+m)B_{kk^\prime} \nonumber\\
    \mbox{and }\ &\hat{B}^\prime_{kk^\prime} = \frac{1}{s}\sum\limits_{q \in [s]} \hat{B}^\prime_{q, kk^\prime} = \left(1 + O_p\left(n^{-\frac{1}{2}} \rho_n^{-\frac{3}{2}}\right)\right)(o+m) B_{kk^\prime} \nonumber\\
    \implies &\left\lvert \hat{B}^\prime_{kk^\prime} - (o+m)B_{kk^\prime} \right\rvert = O_p\left( n^{-\frac{1}{2}} \rho_n^{-\frac{3}{2}} (o+m) B_{kk^\prime} \right) = O_p\left( (o+m)n^{-\frac{1}{2}} \rho_n^{-\frac{3}{2}} \rho_n \right) \nonumber\\
    & \hspace{4cm} = O_p\left( (o+m)n^{-\frac{1}{2}} \rho_n^{-\frac{1}{2}} \right).  \label{eqn:psi4}
\end{align}

Combining (\ref{eqn:psi_prime}) and (\ref{eqn:psi4}), for $1 \leq p < q \leq s$ and $i \in S_p \cap S_{\frac13p}, j \in S_q \cap S_{\frac13q}$ such that $(i, j) \in T_{l, l^\prime, l, l^\prime}$, we have
\begin{align}
    \hat{P}_{ij}^{\prime(K)} = \hat{B}^\prime_{ll^\prime} \hat{\psi}^\prime_i \hat{\psi}^\prime_j = \left(1 + O_p\left(n^{-\frac{1}{2}} \rho_n^{-\frac{3}{2}}\right)\right) (1 + o(1)) B_{ll^\prime} \psi_i \psi_j = \left(1 + O_p\left(n^{-\frac{1}{2}} \rho_n^{-\frac{3}{2}}\right)\right) P_{ij}. \label{eqn:V-pdiff}
\end{align}

Let $\S^\prime_{pq} \coloneqq \left(S_p \cap \mathcal{S}_{\frac13p} \right) \times \left(S_q \cap \mathcal{S}_{\frac13q} \right)$. Then (\ref{eqn:V-pdiff}) holds uniformly over all $(i, j) \in \S^\prime_{pq} \cap T_{l, l^\prime, l, l^\prime}$, using union bound on (\ref{eqn:psi4}) over $K^2$ pairs of communities $(l, l^\prime) \in [K]^2$ and the fact that (\ref{eqn:psi_prime}) always holds for the node pairs $(i, j) \in \S^\prime_{pq}$. Thus,
\begin{align}
    \max\limits_{(i,j) \in \S^\prime_{pq} \cap T_{l,l^\prime,l, l^\prime}} \left\lvert \hat{P}_{ij}^{\prime(K)} - P_{ij} \right\rvert = O_p\left(n^{-\frac{1}{2}} \rho_n^{-\frac{3}{2}}\right)P_{ij} = O_p\left(n^{-\frac{1}{2}} \rho_n^{-\frac{1}{2}}\right) \ \mbox{for any }\ 1 \leq p < q \leq s. \label{eqn:V-pdiff-max}
\end{align}


From (\ref{eqn:math_s_size}), for any $q \in [s]$,
\begin{align}
    \left\lvert \mathcal{S}_{\frac13q} \right\rvert = \lvert S_{0q} \rvert - \left\lvert \mathcal{S}_{\frac13q}^c \right\rvert \geq (o+m) -  C^\prime (o+m)n^{-\frac{1}{3}} \geq (o+m) (1 - C^\prime n^{-\frac{1}{3}}) \geq C_1 n(1 - o(1)).
\end{align}
From Lemma \ref{lemma:maxmincomm}, $\lvert S_q \cap G_l \rvert \geq m\gamma/(1+\gamma)$. Then using a similar argument as in (\ref{eqn:hyper-tail}), for any $q \in [s]$, $\left\lvert S_q \cap \mathcal{S}_{\frac{1}{3}q} \cap G_l \right\rvert = \Omega_p(m)$ as $n \to \infty$. Thus, $c_1 m^2 \leq \lvert \S^\prime_{pq} \cap T_{ll^\prime} \rvert \leq \lvert S_p \times S_q \rvert = m^2 $ for some constant $c_1 > 0$ and sufficiently large $n$ for any $1 \leq p < q \leq s$. Then using Lemma \ref{lemma:sum-aij},
\begin{align}
    \sum\limits_{(i,j) \in \S^\prime_{pq} \cap T_{l,l^\prime,l,l^\prime}} A_{ij} \leq \sum\limits_{(i,j) \in \S^\prime_{pq} \cap T_{ll^\prime}} A_{ij} = O_p(m^2\rho_n). \label{eqn:V-aijsum}
\end{align}

For any $(k, k^\prime), (l, l^\prime) \in [K]^2$ such that $(k, k^\prime) \neq (l, l^\prime)$ and $1 \leq p < q \leq s$, using a similar argument as in Figure \ref{fig:DCBM_cardinal} but for the corresponding subsets of the test set and Theorem \ref{DCBM_bound}, we get
\begin{align}
    \lvert T_{k, k^\prime, l, l^\prime}^{(pq)} \rvert = \lvert \hat{T}_{k, k^\prime}^{(pq)} \cap T_{l, l^\prime}^{(pq)} \rvert \leq \lvert \hat{T}_{l, l^\prime}^{(pq)} \Delta T_{l, l^\prime}^{(pq)} \rvert = O_p\left( m n^{\frac{1}{2}} \rho_n^{-\frac{1}{2}} \right). \label{eqn:VI-tll-size}
\end{align}
Additionally, for any $(k, k^\prime) \neq (l, l^\prime)$, consider 
\begin{align}
    \left\lvert \hat{B}^\prime_{kk^\prime} - (o+m) B_{ll^\prime} \right\rvert &\leq \left\lvert \hat{B}^\prime_{kk^\prime} - (o+m)B_{kk^\prime} \right\rvert + (o+m) B_{kk^\prime} + (o+m)B_{ll^\prime} \nonumber\\ 
    &= O_p\left( (o+m) n^{-\frac{1}{2}} \rho_n^{-\frac{1}{2}} \right) + O\left((o+m)\rho_n\right)  = O_p\left((o+m) \rho_n \right),
\end{align}
as $\rho_n = \omega(n^{-1/3})$ and the first $O_p$ term follows from \eqref{eqn:psi4}. 
For any $(i, j) \in \mathcal{S}^\prime_{pq} \cap T_{k, k^\prime, l, l^\prime}^{(pq)}$ such that $(k, k^\prime) \neq (l, l^\prime)$ and $1 \leq p < q \leq s$,
\begin{align}
    \left\lvert \hat{P}_{ij}^{\prime(K)} - P_{ij} \right\rvert &= \left\lvert \hat{B}^\prime_{kk^\prime} \hat{\psi}^\prime_i \hat{\psi}^\prime_j - B_{ll^\prime} \psi_i \psi_j \right\rvert \nonumber\\
    &\leq \left\lvert \hat{B}^\prime_{kk^\prime} - (o+m)B_{ll^\prime} \right\rvert\hat{\psi}^\prime_i \hat{\psi}^\prime_j + B_{ll^\prime}\left\lvert (o+m)\hat{\psi}^\prime_i \hat{\psi}^\prime_j - \psi_i \psi_j \right\rvert \nonumber\\
    &= O_p\left( (o+m)\rho_n \right) (1 + o(1))(o+m)^{-1} + O(\rho_n) (o+m)(1+o(1)) (o+m)^{-1} \nonumber\\
    &= O_p(\rho_n).
\end{align}
The above bound is uniform over all $(i,j) \in \mathcal{S}^\prime_{pq} \cap T_{k, k^\prime, l, l^\prime}^{(pq)}$ for any $1 \leq p < q \leq s$ using uniform bound over all possible $(k, k^\prime) \neq (l, l^\prime)$ and using the fact that (\ref{eqn:psi_prime}) holds surely for $(i,j) \in \mathcal{S}^\prime_{pq}$. Thus,
\begin{align}
    \max\limits_{(i,j) \in \mathcal{S}^\prime_{pq} \cap T_{k, k^\prime, l, l^\prime}^{(pq)}} \left\lvert \hat{P}_{ij}^{\prime(K)} - P_{ij} \right\rvert = O_p(\rho_n). \label{eqn:VI-pijbound}
\end{align}

Also define $\S^{\prime\prime}_{pq} \coloneqq \left( S_p \setminus \mathcal{S}_{\frac13p} \right) \times \left( S_q \setminus \mathcal{S}_{\frac13q} \right)$. Then from (\ref{eqn:math_s_size}),
\begin{align}
    &\left\lvert S_p \setminus \mathcal{S}_{\frac13p} \right \rvert \leq \left\lvert \mathcal{S}_{\frac13p}^c \right \rvert \leq C^\prime (o+m)n^{-\frac{1}{3}} = O(n^{\frac{2}{3}})\ \mbox{for any}\ p \in [s],\nonumber\\
    \mbox{and}\ &\left\lvert \S^{\prime\prime}_{pq} \right\rvert = O\left(n^{\frac{4}{3}}\right).\label{eqn:VII-spp-size}
\end{align}

Note that $S_p \times S_q = \S^\prime_{pq} \cup \S^{\prime\prime}_{pq}$ for any $p, q \in [s]$. Now, consider the absolute difference between the computed loss and the oracle loss as
\begin{align}
    &\left\lvert L\left(A_{\S^c}, \hat{P}^{\prime(K)}_{\S^c}\right) - L\left(A_{\S^c}\right) \right\rvert\nonumber \\
    = &\left\lvert \sum\limits_{(i,j) \in \S^c} \left( \left(A_{ij} - \hat{P}^{\prime(K)}_{ij}\right)^2 - \left(A_{ij} - P_{ij}\right)^2 \right) \right\rvert \nonumber\\
    \leq& \sum\limits_{(i,j) \in \S^c} \left\lvert A_{ij}^2 -2A_{ij} \hat{P}^{\prime(K)}_{ij} + \hat{P}^{\prime(K)2}_{ij} - A_{ij}^2 + 2A_{ij}P_{ij} - P_{ij}^2 \right\rvert \nonumber\\
    =& \sum\limits_{(i,j) \in \S^c} \left\lvert -2A_{ij}(\hat{P}^{\prime(K)}_{ij} - P_{ij}) + (\hat{P}^{\prime(K)}_{ij} + P_{ij})(\hat{P}^{\prime(K)}_{ij} - P_{ij}) \right\rvert \nonumber\\
    \leq& \sum\limits_{1 \leq p < q \leq s} \sum\limits_{(i,j) \in S_p \times S_q} \left\lvert \hat{P}^{\prime(K)}_{ij} - P_{ij} \right\rvert \left( 2A_{ij} + 2P_{ij} + \left\lvert\hat{P}^{\prime(K)}_{ij} - P_{ij}\right\rvert \right) \nonumber\\
    \leq& \sum\limits_{1 \leq p < q \leq s} \sum\limits_{(i,j) \in \S^\prime_{pq}} \left\lvert \hat{P}^{\prime(K)}_{ij} - P_{ij} \right\rvert \left( 2A_{ij} + 2P_{ij} + \left\lvert\hat{P}^{\prime(K)}_{ij} - P_{ij}\right\rvert \right) \nonumber\\
    &\hspace{0.25cm} + \sum\limits_{1 \leq p < q \leq s} \sum\limits_{(i,j) \in \S^{\prime\prime}_{pq}} \max_{(i,j) \in \S^{\prime\prime}_{pq}}\left\lvert \hat{P}^{\prime(K)}_{ij} - P_{ij} \right\rvert \left( 2A_{ij} + 2P_{ij} + \max_{(i,j) \in \S^{\prime\prime}_{pq}}\left\lvert\hat{P}^{\prime(K)}_{ij} - P_{ij}\right\rvert \right) \nonumber\\
    =& \sum\limits_{1 \leq p < q \leq s} \sum\limits_{(l, l^\prime) \in [K]^2} \sum\limits_{(i,j) \in \S^{\prime}_{pq} \cap T_{ll^\prime}^{(pq)}} \max_{(i,j) \in \S^{\prime}_{pq} \cap T_{ll^\prime}^{(pq)} }\left\lvert \hat{P}^{\prime(K)}_{ij} - P_{ij} \right\rvert \nonumber\\
    & \hspace{6cm} \times \left( 2A_{ij} + 2P_{ij} + \max_{(i,j) \in \S^{\prime}_{pq} \cap T_{ll^\prime}^{(pq)}}\left\lvert\hat{P}^{\prime(K)}_{ij} - P_{ij}\right\rvert \right) \nonumber\\
    &\hspace{0.25cm} + \sum\limits_{1 \leq p < q \leq s} \mathop{\sum\limits_{(k, k^\prime)\neq (l, l^\prime)}}\limits_{\in [K]^2} \sum\limits_{(i,j) \in \S^{\prime}_{pq} \cap T_{k,k^\prime,l,l^\prime}^{(pq)}} \max_{(i,j) \in \S^{\prime}_{pq} \cap T_{k,k^\prime,l,l^\prime}^{(pq)} }\left\lvert \hat{P}^{\prime(K)}_{ij} - P_{ij} \right\rvert \nonumber \\
    & \hspace{7cm} \times \left( 2A_{ij} + 2P_{ij} +\max_{(i,j) \in \S^{\prime}_{pq} \cap T_{k,k^\prime,l,l^\prime}^{(pq)}}\left\lvert\hat{P}^{\prime(K)}_{ij} - P_{ij}\right\rvert \right) \nonumber\\
    &\hspace{0.25cm} + \sum\limits_{1 \leq p < q \leq s} \sum\limits_{(i,j) \in \S^{\prime\prime}_{pq}} \max_{(i,j) \in \S^{\prime\prime}_{pq}}\left\lvert \hat{P}^{\prime(K)}_{ij} - P_{ij} \right\rvert \left( 2A_{ij} + 2P_{ij} + \max_{(i,j) \in \S^{\prime\prime}_{pq}}\left\lvert\hat{P}^{\prime(K)}_{ij} - P_{ij}\right\rvert \right) \nonumber\\
    &= V + VI + VII \ \mbox{(say)}.\label{eqn:dc_lossdiff}
\end{align}

Combining (\ref{eqn:V-pdiff-max}) and (\ref{eqn:V-aijsum}), we get
\begin{align}
    V = \frac{s(s-1)}{2} O_p\left( n^{-\frac{1}{2}} \rho_n^{-\frac{1}{2}} m^2 \rho_n \right) = O_p\left( s(s-1)m^2 n^{-\frac{1}{2}} \rho_n^{\frac{1}{2}} \right).\label{eqn:V-DC}
\end{align}

Combining (\ref{eqn:VI-tll-size}), (\ref{eqn:VI-pijbound}) and the fact that $m \asymp n$, we get
\begin{align}
    VI &\leq \sum\limits_{1\leq p < q \leq s} \max\limits_{(k, k^\prime) \neq (l, l^\prime)}\left\lvert T_{k, k^\prime, l, l^\prime}^{(pq)} \right\rvert O_p(\rho_n) = O_p\left( s(s-1) mn^{\frac{1}{2}} \rho_n^{-\frac{1}{2}} \rho_n \right) \nonumber\\
    &= O_p\left( s(s-1)m^2 n^{-\frac{1}{2}} \rho_n^{\frac{1}{2}} \right).\label{eqn:VI-DC}
\end{align}

From (\ref{eqn:VII-spp-size}) and $\rho_n = \omega(n^{-1/3})$, we get
\begin{align}
    VII \leq \frac{s(s-1)}{2} \max\limits_{1 \leq p < q \leq s}\left\lvert \S^{\prime\prime}_{pq} \right\rvert = O\left(s(s-1) n^\frac{4}{3}\right) = o\left( s(s-1)m^2 n^{-\frac{1}{2}} \rho_n^{\frac{1}{2}} \right).\label{eqn:VII-DC}
\end{align}

Combining (\ref{eqn:V-DC}), (\ref{eqn:VI-DC}) and (\ref{eqn:VII-DC}), we get
\begin{equation}
    \left\lvert L\left(A_{\S^c}, \hat{P}^{\prime(K)}_{\S^c}\right) - L\left(A_{\S^c}\right) \right\rvert = O_p\left( s(s-1)m^2 n^{-\frac{1}{2}} \rho_n^{\frac{1}{2}} \right).
\end{equation}


\subsection{Proof of Theorem \ref{theorem:RDPG_main}}
We show that the computed loss is closely concentrated around the oracle loss for $\d = d$ in Case 1 of this proof and much larger than the oracle loss for $\d = d-1$ in Case 2 of this proof. Induction like argument follows for the cases $\d \leq d-2$. First, we state some results used in the proof. First, we state some of the results that will be used in the main proof. The following theorem is a slightly modified version of \citet[Theorem~26]{RDPG_survey}.
\begin{theorem}\label{theorem:RDPG_Athreya_bound}
    Let $A \sim RDPG(X)$ with $n\geq 1$ nodes be a sequence of random dot product graphs where $X$ is assumed to be of rank $d$ for all $n$ sufficiently large. Denote by $\hat{X}$ the adjacency spectral embedding of $A$ using Algorithm \ref{algo:ASE}. Then under the conditions \ref{assump:RDPG:p-define} and \ref{assump:RDPG:rho-cond} of Assumption \ref{assump:RDPG}, with probability tending to $1$ as $n \to \infty$, there exists a fixed constant $C>0$ and an orthogonal transformation $W \in \O_d$ such that 
    \begin{align}
        \max_{i \in [n]} \left\lVert \hat{X}_{i\cdot} - W X_{i\cdot} \right\rVert \leq \frac{C\sqrt{d} \log^2{n}}{\sqrt{\delta(P)}},
    \end{align}
    where $\delta(P) \coloneqq \max\limits_{i \in [n]} \sum\limits_{j \in [n]} P_{ij}$.
\end{theorem}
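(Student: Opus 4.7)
The plan is to recognize that this theorem is essentially a restatement of Theorem 26 of the RDPG survey \citet{RDPG_survey} specialized to the sparsity regime and eigenstructure of Assumption \ref{assump:RDPG}, so the proof reduces to verifying that each hypothesis of the cited result is implied by our assumptions, and then quoting the conclusion.

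First, I would translate our sparsity and uniformity assumptions into bounds on $\delta(P)$. Since $P = \rho_n P^0$ with $c_1 \leq P^0_{ij} \leq c_2$, we get $c_1 n \rho_n \leq \delta(P) \leq c_2 n \rho_n$, so $\delta(P) \asymp n \rho_n$. Condition \ref{assump:RDPG:rho-cond} then forces $\delta(P) \geq c_0 c_1 n^{1/2}\log^{2+a_0/2}(n)$, which in particular gives $\delta(P) = \omega(\log^{4c}n)$ for any fixed $c$, matching the sparsity hypothesis (of the form $\delta(P) = \omega(\text{polylog}(n))$) under which the $2$-to-$\infty$ concentration bound of \citet{RDPG_survey} holds.

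Next I would verify the eigen-structure hypotheses. Writing $P^0 = U\Lambda^0 U^\top$ with the ordering in Assumption \ref{assump:RDPG}(\ref{assump:RDPG:eigen}), we have $\lambda_k(P) = \rho_n \lambda_k^0$, so $n\rho_n\phi^{-1} \leq \lambda_d(P) \leq \lambda_1(P) \leq n\rho_n\phi$. This gives both the lower bound $\lambda_d(P) \asymp \delta(P)$ needed to control the Davis--Kahan type eigenvector perturbation and the bounded-condition-number requirement $\lambda_1(P)/\lambda_d(P) \leq \phi^2$. The bounded coherence requirement --- typically that the rows of the eigenvector matrix have $\ell_2$-norm of order $\sqrt{d/n}$ --- is exactly the inequality $\max_i \tfrac{n}{d}\lVert U_{i\cdot}\rVert^2 \leq a$ from our assumption.

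Finally, I would invoke Theorem~26 of \citet{RDPG_survey}. That theorem produces, with probability tending to one, an orthogonal $W \in \O_d$ such that
\[
\max_{i \in [n]} \lVert \hat{X}_{i\cdot} - W X_{i\cdot}\rVert \leq \frac{C\sqrt{d}\log^{2}n}{\sqrt{\delta(P)}},
\]
for a constant $C$ depending only on $c_1,c_2,\phi,a$. Since we have already verified the hypotheses, substituting $\delta(P) \asymp n\rho_n$ yields the stated bound, and the displayed form with $\sqrt{\delta(P)}$ in the denominator is retained for downstream use in the proofs of Theorem \ref{theorem:rdpg_lossdiff} and Theorem \ref{theorem:RDPG_main}. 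The only mild subtlety is ensuring that the constant $C$ can be chosen uniformly in $n$ along the sequence (since $\rho_n$, and hence $P$, depend on $n$); this is immediate because the constants $c_1, c_2, \phi, a$ in Assumption \ref{assump:RDPG} are fixed, so the dependence of $C$ on them does not introduce any $n$-dependence.
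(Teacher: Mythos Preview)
Your proposal is correct and follows essentially the same approach as the paper: both recognize the theorem as a direct specialization of \citet[Theorem~26]{RDPG_survey}, verify that Assumption~\ref{assump:RDPG} implies the two hypotheses $\delta(P) \geq \log^{4+a_0}(n)$ and $\lambda_d(P) \geq c_0\,\delta(P)$ of the cited result, and then quote its conclusion. A minor difference is that the paper bounds $\delta(P)$ via $n\rho_n \geq n\rho_n^2 \geq c_0^2\log^{4+a_0}(n)$ rather than your $n\rho_n \geq c_0 c_1 n^{1/2}\log^{2+a_0/2}(n)$, and the paper does not explicitly invoke the condition-number or coherence bounds you mention (it identifies only the two hypotheses above as needed); note also that both you and the paper use Condition~\ref{assump:RDPG:eigen} for the eigenvalue lower bound even though the theorem statement lists only Conditions~\ref{assump:RDPG:p-define} and~\ref{assump:RDPG:rho-cond}.
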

\begin{proof}
    Let $\lambda_d$ be the $d$\textsuperscript{th} largest 
    eigenvalue of $P$. Then \citet[Theorem~26]{RDPG_survey} requires the the existence of some constants $a_0 > 0$ and $c_0 > 0$ such that
    \begin{align}
        \delta(P) \geq \log^{4+a_0}{(n)} \ \mbox{ and }\ \lambda_d \geq c_0 \delta(P). \label{eqn:athreya_assump}
    \end{align}
    
    From Condition \ref{assump:RDPG:p-define} of Assumption \ref{assump:RDPG}, there exist a constant $c_1$ such that 
    \begin{align}
        \delta(P) = \rho_n \max_{i \in [n]} \sum_{j \in [n]} P^0_{ij} \geq c_1 n\rho_n \geq c_1 n \rho_n^2 \geq c_1 \log^{4+a_0}(n). \label{eqn:delta_bound}
    \end{align}
    Also, note that from Condition \ref{assump:RDPG:eigen} of Assumption \ref{assump:RDPG}, we have 
    \begin{align}
        \delta(P) \leq c_2 n\rho_n \ \ \mbox{and}\ \     \lambda_d = \rho_n \lambda^0_d \geq \phi^{-1} n\rho_n \geq \phi^{-1} c_2^{-1} \delta(P). \label{eqn:lambda_d-bound}
    \end{align}
\end{proof}

\begin{theorem}\label{theorem:subset-any-ortho}
    Let $\mathcal{S}$ be a random subset of nodes of size $\eta$ of $A \sim RDPG(X)$ with $n \geq 1$ nodes and sparsity parameter $\rho_n$. If $\hat{X}^{\mathcal{S}}$ is the adjacency spectral embedding of $A_{\mathcal{S}}$, then under Assumption \ref{assump:RDPG} and for $\eta$ such that $\eta/n$ is bounded away from $0$, there exist constants $C$, $C^\prime$ and an orthogonal matrix $W$ such that the following bound holds with probability tending to $1$ as $\eta \to \infty$
    \begin{align}
         \max_{i \in \mathcal{S}} \left\lVert \hat{X}^{\mathcal{S}}_{i\cdot} - W X_{i\cdot} \right\rVert \leq \frac{C\sqrt{d} \log^2{\eta}}{\sqrt{\eta \rho_n}}.
    \end{align}
\end{theorem}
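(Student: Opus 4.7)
The plan is to view the subnetwork $A_{\mathcal{S}}$ as itself a random dot product graph on $\eta$ nodes with latent positions given by the rows of $X_{\mathcal{S}\cdot}$, so that $A_{\mathcal{S}} \sim RDPG(\eta, d, X_{\mathcal{S}\cdot})$ with probability matrix $P_{\mathcal{S}} = X_{\mathcal{S}\cdot} X_{\mathcal{S}\cdot}^\top = \rho_n P^0_{\mathcal{S}}$. The output $\hat X^{\mathcal{S}}$ of \code{ASE} on $A_{\mathcal{S}}$ is then the adjacency spectral embedding of a bona fide RDPG of size $\eta$. If we can check that Assumption \ref{assump:RDPG} continues to hold (with constants independent of $n$ but possibly depending on the ratio $\eta/n$) for this restricted RDPG, then Theorem \ref{theorem:RDPG_Athreya_bound} applies verbatim to $A_{\mathcal{S}}$ and yields the bound $\max_{i \in \mathcal{S}} \lVert \hat X^{\mathcal{S}}_{i\cdot} - W X_{i\cdot}\rVert \leq C\sqrt{d}\log^2(\eta)/\sqrt{\delta(P_{\mathcal{S}})}$, together with $\delta(P_{\mathcal{S}}) = \Omega(\eta \rho_n)$, giving the advertised rate $C\sqrt{d}\log^2(\eta)/\sqrt{\eta\rho_n}$.

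First I would verify the easy pieces of Assumption \ref{assump:RDPG} for $A_{\mathcal{S}}$. Condition \ref{assump:RDPG:p-define} is immediate because $P^0_{\mathcal{S}}$ is a principal submatrix of $P^0$, so its entries still lie in $[c_1, c_2]$. Condition \ref{assump:RDPG:rho-cond} transfers from $n$ to $\eta$ because $\eta/n$ is bounded below by a positive constant; replacing the constant $c_0$ and $a_0$ by slightly smaller values if necessary gives $\rho_n \geq c_0^\prime \eta^{-1/2}\log^{2+a_0^\prime/2}(\eta)$ for all sufficiently large $n$. In particular, this ensures $\delta(P_{\mathcal{S}}) \geq c_1 \eta\rho_n \geq \log^{4+a_0}(\eta)$, so the hypothesis (\ref{eqn:athreya_assump}) holds inside the subnetwork.

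The main obstacle is Condition \ref{assump:RDPG:eigen}: we need that the $d$ nonzero eigenvalues of $P^0_{\mathcal{S}}$ are of order $\eta$, and that the $(2,\infty)$ norm of its leading eigenvector matrix satisfies a uniform leverage-score bound. Here I would exploit the factorization $P^0_{\mathcal{S}} = \rho_n^{-1} X_{\mathcal{S}\cdot} X_{\mathcal{S}\cdot}^\top$ and compare $X_{\mathcal{S}\cdot}^\top X_{\mathcal{S}\cdot}$ with $(\eta/n) X^\top X$ using a matrix Bernstein (or Chernoff) inequality for sums of independent rank-one matrices obtained by sampling rows of $X$. Since $X$ has rank $d$ with all $d$ singular values of order $\sqrt{n\rho_n}$ (by Assumption \ref{assump:RDPG} applied to the full $P$) and the row norms $\lVert X_{i\cdot}\rVert^2$ are uniformly bounded (leverage-score bound on $U$), the sampled Gram matrix concentrates around $(\eta/n)X^\top X$ up to a multiplicative factor $1+o(1)$ with probability tending to $1$ as $\eta \to \infty$. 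By Weyl's inequality this transfers to the singular values of $X_{\mathcal{S}\cdot}$, giving $\lambda_i(P^0_{\mathcal{S}}) \asymp \eta$ for $i \in [d]$ with a new constant $\phi^\prime$ depending only on $\phi$ and the ratio $\eta/n$. An analogous row-sampling argument on the matrix $U$ of eigenvectors of $P^0$ yields the required leverage bound for the eigenvectors of $P^0_{\mathcal{S}}$.

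Once these verifications are in place, the conclusion is immediate: apply Theorem \ref{theorem:RDPG_Athreya_bound} to the RDPG $(\eta, d, X_{\mathcal{S}\cdot})$, combine with $\delta(P_{\mathcal{S}}) \geq c_1 \eta\rho_n$, and absorb $\sqrt{c_1}$ into the constant $C$. The orthogonal matrix $W \in \O_d$ is the rotation supplied by that theorem. The rank-$d$ assumption on $X_{\mathcal{S}\cdot}$, needed for $\hat X^{\mathcal{S}}$ to have $d$ columns, is automatic from the non-degeneracy of the eigenvalues of $P^0_{\mathcal{S}}$ established above. I expect no further difficulty beyond the concentration step for random row-sampling, which is the one place where $\eta \to \infty$ (rather than $n \to \infty$) is genuinely used to control the probability of the good event.
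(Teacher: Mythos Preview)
Your proposal is correct and follows the same overall strategy as the paper: observe that $A_{\mathcal{S}}$ is itself an RDPG with latent positions $X_{\mathcal{S}\cdot}$, verify that the hypotheses needed for Theorem~\ref{theorem:RDPG_Athreya_bound} (namely the two conditions in~\eqref{eqn:athreya_assump}) continue to hold for the subnetwork, and then apply that theorem together with $\delta(P_{\mathcal{S}}) \geq c_1\eta\rho_n$.

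The one substantive difference is in how the eigenvalue lower bound $\lambda_d(P_{\mathcal{S}}) \geq c_0\,\delta(P_{\mathcal{S}})$ is obtained. You propose a self-contained matrix Bernstein/Chernoff argument for random row sampling to show $X_{\mathcal{S}\cdot}^\top X_{\mathcal{S}\cdot} \approx (\eta/n)\,X^\top X$; the paper instead imports a black-box eigenvalue bound for random principal submatrices from the sublinear-time literature \citep{sublinear-time}, which directly gives $\lambda_d(P_{\mathcal{S}}) \geq (\eta/n)\lambda_d(P) - \epsilon n\lVert P\rVert_\infty$ with high probability for suitable $\epsilon$. Both routes work; yours is more elementary and keeps the argument within standard concentration tools, while the paper's is shorter once the citation is granted. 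Note also that your proposed verification of the leverage-score part of Condition~\ref{assump:RDPG:eigen} for the subnetwork eigenvectors is not actually required: the paper only checks the two conditions in~\eqref{eqn:athreya_assump}, which are what \citet[Theorem~26]{RDPG_survey} needs, so you can drop that step.
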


\begin{proof}
    The theorem follows from the fact that the subnetwork with adjacency matrix $A_{\mathcal{S}}$ is still an RDPG and the estimated latent positions from adjacency spectral embedding on $A_{\mathcal{S}}$ satisfy the bound in Theorem \ref{theorem:RDPG_Athreya_bound}, provided the subnetwork satisfies the conditions in (\ref{eqn:athreya_assump}). Consider
    \begin{align}
        \delta(P_{\mathcal{S}}) = \max_{i \in \mathcal{S}} \sum_{j \in \mathcal{S}} P_{ij} \geq c_1 \eta \rho_n \geq c_1 \eta \rho_n^2 \geq c_1 c_0 \frac{\eta}{n} \log^{4+a_0}(n) \geq c^\prime \log^{4+a_0}(\eta),
    \end{align}
    as $\eta/n$ is bounded away from $0$ and for sufficiently large $\eta$. Let $\lambda_d(M)$ of a matrix $M$ denote the $d$-th largest eigenvalue of $M$. 
    From sublinear time algorithm result \citep[Theorem~1]{sublinear-time}, bounding eigenvalues of random principal submatrix, for any $\epsilon, \delta > 0 > 0$ and $\eta \geq \frac{c \log{(1/\epsilon \delta)} \log^3{n}}{\epsilon^3 \delta}$, the following holds with probability $\geq 1 - \delta$
    \begin{align}
        \lambda_d(P_{\mathcal{S}}) &\geq \frac{\eta}{n} \lambda_d(P) - \epsilon n\lVert P \rVert_{\infty}\\
        &\geq \frac{\eta}{n} \lambda_d(P) - c_2 \epsilon \eta \rho_n\\
        &\geq \phi^{-1} \eta \rho_n - c_2 \epsilon \eta \rho_n. \label{eqn:lambda_d}
    \end{align}
    We also have
    \begin{align}
        \delta(P_{\mathcal{S}}) = \max_{i \in \mathcal{S}} \sum_{j \in \mathcal{S}} P_{ij} \leq c_2 \eta \rho_n.\label{eqn:delta_p}
    \end{align}

    Take $\epsilon = \frac{\phi^{-1}}{2c_2}$. Then combining (\ref{eqn:lambda_d}) and (\ref{eqn:delta_p})
    \begin{align}
        \lambda_d(P_{\mathcal{S}}) \geq \frac{1}{2}\phi^{-1} c_2^{-1}\delta(P_{\mathcal{S}}).
    \end{align}
    
    Taking $\delta = n^{-1/2}$, the above holds with probability $ \geq 1 - \frac{1}{\sqrt{n}}$ for $\eta \geq \frac{c \sqrt{n} \log{(\sqrt{n}/\epsilon)} \log^3{n}}{\epsilon^3} \approx c^{''} \sqrt{n} \log^4{n}$, which is satisfied for all sufficiently large $n$ from theorem assumption.
    
    Thus, the assumptions of \citet[Theorem ~ 26]{RDPG_survey} are satisfied for adjacency spectral embedding on $A_{\mathcal{S}}$ and the bound in this theorem follows from Theorem \ref{theorem:RDPG_Athreya_bound}.
\end{proof}


\begin{theorem}\label{theorem:theo-stitch-mat}
    Let $A_{S_{01}}, \ldots, A_{S_{0s}}$ be the subadjacency matrices used in \code{NETCROP} with $s$ subgraphs and overlap size $o$. Assume that $\hat{X}_q$ is the estimated latent positions using adjacency spectral embedding on $A_{S_{0q}}$ for $q \in [s]$. Then under Assumption \ref{assump:RDPG}, there exist orthogonal matrices $W_1, \ldots, W_s\in \O_d$ and a constant $C > 0$ such that the following holds with probability going to $1$
    \begin{align}
        \max_{q \in [s]} \max_{i \in S_{0q}} \left\lVert W_q \hat{X}_{q,i\cdot} - X_{i \cdot} \right\rVert \leq \frac{C \sqrt{d} \log^2(o+m)}{\sqrt{(o+m) \rho_n}}.
    \end{align}
\end{theorem}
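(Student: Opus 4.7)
The plan is to reduce this uniform bound across all $s$ subnetworks to a direct application of Theorem \ref{theorem:subset-any-ortho} to each individual subnetwork, followed by a union bound. Since each subset $S_{0q}$ is a random subset of size $o+m$ drawn from the $n$ nodes, and by construction $o+m \asymp n$ (so the ratio $(o+m)/n$ is bounded away from $0$), the hypotheses of Theorem \ref{theorem:subset-any-ortho} are satisfied for each subnetwork under Assumption \ref{assump:RDPG}.

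First, I would apply Theorem \ref{theorem:subset-any-ortho} separately to each subadjacency matrix $A_{S_{0q}}$ for $q \in [s]$. For each $q$, this yields an orthogonal matrix $V_q \in \O_d$ and a constant $C$ (which we may take uniformly across $q$ since $s$ is fixed and the constants in Theorem \ref{theorem:subset-any-ortho} depend only on the RDPG parameters via Assumption \ref{assump:RDPG}, not on the particular subset) such that, with probability tending to $1$,
\begin{equation*}
    \max_{i \in S_{0q}} \left\lVert \hat{X}_{q,i\cdot} - V_q X_{i\cdot} \right\rVert \leq \frac{C \sqrt{d} \log^2(o+m)}{\sqrt{(o+m)\rho_n}}.
\end{equation*}
The small cosmetic difference between this inequality and the stated theorem is resolved by noting that, since $V_q$ is orthogonal, $\lVert \hat{X}_{q,i\cdot} - V_q X_{i\cdot} \rVert = \lVert V_q^\top \hat{X}_{q,i\cdot} - X_{i\cdot} \rVert$; thus, setting $W_q := V_q^\top \in \O_d$ gives exactly the form required in the theorem statement.

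Finally, since $s$ is a fixed constant with respect to $n$, a union bound over the $s$ subnetworks preserves the high-probability conclusion: if each individual bound holds with probability $1 - \epsilon_n$ where $\epsilon_n \to 0$, then the joint bound holds with probability at least $1 - s\epsilon_n \to 1$. Taking the maximum over $q \in [s]$ of the per-subnetwork bounds (which are identical in form) gives the desired uniform bound.

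I do not anticipate a genuine obstacle here, since this result is essentially a corollary of Theorem \ref{theorem:subset-any-ortho}. The only subtlety to verify carefully is that the constant $C$ in Theorem \ref{theorem:subset-any-ortho} can indeed be chosen uniformly across all $s$ subnetworks; this follows because the constants in Assumption \ref{assump:RDPG} (namely $c_1, c_2, \phi, a$, and the one governing $\rho_n$) are global properties of the true latent position matrix $X$ and do not degrade when restricted to the random subsets $S_{0q}$, and because the subset-size ratios $(o+m)/n$ are identical across $q$ by construction.
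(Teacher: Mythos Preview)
Your proposal is correct and follows essentially the same approach as the paper: apply Theorem \ref{theorem:subset-any-ortho} to each subnetwork, transpose the resulting orthogonal matrix to switch the side of the rotation, and use that $s$ is fixed to make the bound uniform. The only cosmetic difference is that the paper takes $C = \max\{C_1,\ldots,C_s\}$ at the end rather than arguing upfront that the constant is uniform, but the substance is identical.
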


\begin{proof}
    From Theorem \ref{theorem:subset-any-ortho} for the $q$th subnetwork of \code{NETCROP}, there exist $H_q \in \O_d$ and constant $C_q > 0$ such that 
    \begin{align}
        \max_{i \in S_{0q}} \left\lVert \hat{X}_{q,i\cdot} - H_q X_{i \cdot} \right\rVert \leq \frac{C_q \sqrt{d} \log^2(o+m)}{\sqrt{(o+m) \rho_n}}, \ q \in [s].
    \end{align}

    Set $W_1 = H_1^\top, \ldots, W_s = H_s^\top$. Then $W_q \in \O_d$ for any $q \in [s]$. This implies
    \begin{align}
        \max_{q \in [s]} \max_{i \in S_{0q}} \left\lVert W_q \hat{X}_{q,i\cdot} - X_{i \cdot} \right\rVert = \max_{q \in [s]} \max_{i \in S_{0q}} \left\lVert \hat{X}_{q,i\cdot} - H_q X_{i \cdot} \right\rVert \leq \frac{C \sqrt{d} \log^2(o+m)}{\sqrt{(o+m) \rho_n}},
    \end{align}
    where $C = max\{C_1, \ldots, C_s\}$. The above bound is uniform as $s$ is fixed.
\end{proof}

The following lemma states that the $d$-th signal in the eigen decomposition of $P$ is positive on a large subset of the test set. Define $P^{0(\d)} \coloneqq U_{\cdot[\d]} \Lambda^0_{[\d]} U_{\cdot[\d]}^\top$, the $\d$-truncated version of $P$ for any $\d < d$.
\begin{lemma}\label{lemma:rdpg_php}
    Let $A \sim RDPG(X)$ with $n$ nodes and latent space dimension $d$. Let $P^{0(\d)}$ be the $\d$ rank truncated version of $P^0$. Then there exist constants $\phi^\prime, \kappa >0$ such that the following holds with very high probability.
    \begin{align*}
        \left\lvert\left\{(i,j)\in \S^c:\left\lvert P_{ij}^0 - P_{ij}^{0(\d)} \right\rvert \geq \phi^\prime \right\}\right\rvert \geq  \kappa s(s-1)m^2,
    \end{align*}
    for any $\d < d$, where $S^c$ is the test set of \code{NETCROP}.
\end{lemma}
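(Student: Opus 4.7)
The plan is to separate the argument into a deterministic structural step about the truncation gap and a probabilistic step about the random partition. First, I would exploit Assumption \ref{assump:RDPG} to show that the Frobenius norm of $P^0 - P^{0(\tilde d)}$ is large while its entries are uniformly bounded. Writing $P^0 - P^{0(\tilde d)} = \sum_{k=\tilde d+1}^{d} \lambda^0_k U_{\cdot k} U_{\cdot k}^\top$, the eigenvalue lower bound $\lambda^0_d \geq n\phi^{-1}$ gives
\begin{equation*}
    \lVert P^0 - P^{0(\tilde d)} \rVert_F^2 \;\geq\; (\lambda^0_d)^2 \;\geq\; n^2/\phi^2,
\end{equation*}
while Cauchy--Schwarz together with $\lambda^0_1 \leq n\phi$ and $\max_i \lVert U_{i\cdot}\rVert^2 \leq ad/n$ bounds every entry by $|(P^0 - P^{0(\tilde d)})_{ij}| \leq B := ad\phi$.

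Next, I would turn these two bounds into a counting argument. Let $\mathcal{B} = \{(i,j) : |(P^0 - P^{0(\tilde d)})_{ij}| \geq \phi'\}$. Splitting the sum of squared entries according to whether $(i,j) \in \mathcal{B}$,
\begin{equation*}
    n^2/\phi^2 \;\leq\; |\mathcal{B}|\, B^2 + (n^2 - |\mathcal{B}|)(\phi')^2.
\end{equation*}
Choosing $\phi' = (\phi\sqrt{2})^{-1}$ yields $(\phi')^2 \leq 1/(2\phi^2)$, hence $|\mathcal{B}| \geq n^2/(2\phi^2 B^2) = \kappa_0 n^2$ for a fixed constant $\kappa_0 > 0$. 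This establishes that a constant fraction of all ordered node pairs in $[n]\times[n]$ already have the required gap, independent of the random partition.

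Finally, I would transfer this deterministic count to the randomly sampled test set $\S^c$. For any fixed ordered pair $(i,j)$ with $i\neq j$, by the symmetry of the uniform subsampling of $S_0$ and the uniform partitioning of the remaining nodes into $S_1,\ldots,S_s$,
\begin{equation*}
    P\bigl((i,j) \in \S^c\bigr) \;=\; \frac{s(s-1)m^2}{n(n-1)},
\end{equation*}
so $E\bigl[|\mathcal{B} \cap \S^c|\bigr] \geq \kappa_0 s(s-1)m^2 (1+o(1))$. For concentration, I would use McDiarmid's inequality on $Z := |\mathcal{B}\cap\S^c|$ viewed as a function of the node labels (overlap vs.\ non-overlap index); swapping the label of one node changes $Z$ by at most $O(n)$ indicators, so the bounded differences inequality gives $|Z - E[Z]| = O(n^{3/2}\sqrt{\log n})$ with probability at least $1 - n^{-c}$. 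Since $E[Z] = \Theta(n^2)$ and $s$ is fixed, this deviation is of lower order, so $Z \geq (\kappa_0/2) s(s-1) m^2$ with very high probability; taking $\kappa = \kappa_0/2$ finishes the proof.

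The main technical care is in the concentration step: the partition indicators are not independent, and one needs the right Lipschitz constant for McDiarmid. A minor alternative, if cleaner, is a second-moment (Chebyshev) bound using negative association of the hypergeometric-type sampling, which also gives concentration at rate $o(n^2)$ and suffices since the target is $\Theta(n^2)$.
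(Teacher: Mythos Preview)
Your proposal is correct, and the deterministic structural step (entrywise upper bound plus Frobenius lower bound forcing a constant fraction of large-gap entries) matches the paper's argument essentially verbatim. The difference lies in the probabilistic transfer to $\S^c$. The paper does not compute $E[|\mathcal{B}\cap\S^c|]$ directly and concentrate; instead it first extracts a node-level structure from $\mathcal{B}$ --- a set $T$ with $|T|\geq \kappa_1 n$ such that each $i\in T$ has at least $\kappa_2 n$ partners $j$ with large gap --- and then applies the same hypergeometric tail bound used earlier for blockmodels (the bound in \eqref{eqn:hyper-tail}) to show $|T\cap S_p|\gtrsim m$ and $|T_i\cap S_q|\gtrsim m$ for each pair of non-overlap parts, giving the product lower bound.

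Your exchangeability-plus-concentration route is more direct and arguably cleaner in spirit, but as you note, McDiarmid in its standard form needs independent coordinates, so you would really be invoking a bounded-differences inequality for a random permutation (or your negative-association Chebyshev alternative). The paper's route sidesteps this by reducing everything to per-part hypergeometric tails, which are elementary and already in hand from the SBM/DCBM proofs; the price is the extra pigeonhole step to pass from a large set of \emph{pairs} to a large set of \emph{nodes} with many good partners. One small remark: your probability $P((i,j)\in\S^c)$ is off by a factor of two under the paper's convention $\S^c=\bigcup_{p<q}S_p\times S_q$, since $|\S^c|=\binom{s}{2}m^2$; this does not affect the argument.
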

\begin{proof}
    By definition, for any $i, j \in [n]$, we have
\begin{align}
    \left\lvert P_{ij}^0 - P_{ij}^{0(\d)}\right\rvert &= \left\lvert \sum_{k = \d+1}^d \lambda_k^0 U_{ik}U_{jk} \right\rvert \nonumber\\
    &\leq \phi n \left\lvert \sum_{k = \d+1}^d U_{ik}U_{jk} \right\rvert \nonumber\\
    &\leq \phi n \left( \sum_{k=\d+1}^d U_{ik}^2 \sum_{k=\d+1}^d U_{jk}^2 \right)^{\frac{1}{2}} \nonumber\\
    &\leq \phi n  \lVert U_{i\cdot} \rVert \lvert U_{j\cdot} \rVert  \nonumber\\
    &\leq \phi n \frac{da}{n} = \phi d a. \label{eqn:rd-lowdiff}
\end{align}

From the same assumption, we also have 
\begin{align}
    \sum\limits_{i,j \in [n]} \left\lvert P_{ij}^0 - P_{ij}^{0(\d)} \right\rvert^2 = \left\lVert P^0 - P^{0(\d)} \right\rVert_F^2 = \sum\limits_{k = \d + 1}^d (\lambda_{\d}^0)^2 \geq \phi^{-2}(d - \d) n^2.\label{eqn:rd-highdiff}
\end{align}
Thus, for (\ref{eqn:rd-lowdiff}) and (\ref{eqn:rd-highdiff}) to simultaneously occur, there must exist constants $\phi^\prime, \kappa^\prime > 0$ and a subset of node pairs $\Delta_{\d} \subset [n] \times [n]$ such that
\begin{align}
    \left\lvert \Delta_{\d} \right\rvert \coloneqq \left\lvert \left\{ (i,j) : i\leq j, \left\lvert P_{ij}^0 - P_{ij}^{0(\d)} \right\rvert \geq \phi^\prime \right\} \right\rvert \geq \kappa^\prime n^2. \label{eqn:delta_set}
\end{align}
For each node $i \in [n]$, define $T_i \coloneqq \{j \in [n] : 
(i,j) \in \Delta_{\d} \}$. Then $\Delta_{\d} = \bigcup\limits_{i \in [n]} \{i\} \times T_i$. For (\ref{eqn:delta_set}) to hold, there must exist a subset of nodes $T \subset [n]$ satisfying $\lvert T \rvert \geq \kappa_1 n$ for some $\kappa_1 > 0$ such that $\lvert T_i \rvert \geq \kappa_2 n$ for all $i \in T$.

Using a similar hypergeometric tail bound as in (\ref{eqn:hyper-tail}), replacing $G_l$ by $T$ and $T_i$ and for any non-overlap part $S_p$, $p \in [s]$, we have
\begin{align}
    \lvert T \cap S_p \rvert \geq \frac{\kappa_1 m}{2} \ \ \mbox{and} \ \ \lvert T_i \cap S_p \rvert \geq \frac{\kappa_2 m}{2} \ \ \mbox{for all}\ \ i \in A,
\end{align}
both with probability going to $1$ as $m \to \infty$. For any $1 \leq p < q \leq s$, consider the size of the following subset of nodes in the test set
\begin{align}
    \left\lvert \bigcup\limits_{i \in T \cap S_p} \{i\} \times (T_i \cap S_q) \right\rvert &= \sum\limits_{i \in T \cap S_p} \left\lvert T_i \cap S_q \right\rvert\nonumber\\
    &\geq \lvert T \cap S_p \rvert \frac{\kappa_2 m}{2} \ \ \mbox{with high probability}\nonumber\\
    &\geq \frac{\kappa_1 \kappa_2 m^2}{4} \ \ \mbox{with high probability}.
\end{align}

As the test set $\S^c$ is given by the disjoint union $\S^c = \bigcup\limits_{1 \leq p < q \leq s} (S_p \times S_q)$, the subset of the test set in the lemma is given by
\begin{align}
    T_{\S^c} \coloneqq \bigcup\limits_{1 \leq p < q \leq s}\bigcup\limits_{i \in T \cap S_p} \{i\} \times (T_i \cap S_q).
\end{align}
It satisfies $\lvert T_{\S^c} \rvert \geq \binom{s}{2}\frac{\kappa_1 \kappa_2 m^2}{4}$ with probability going to $1$ as $m \to \infty$ and for every $(i,j) \in T_{\S^c}$, $\lvert P_{ij}^0 - P_{ij}^{0(\d)} \rvert \geq \phi^\prime$. 
\end{proof}



\paragraph*{Proof of Theorem \ref{theorem:rdpg_lossdiff}}

\paragraph*{Case 1: ($\d = d$)}
Before deriving the upper bound for the difference between the computed loss and the oracle loss, we bound the norm of the $i$th latent position, which will be used in multiple places in this proof.
    \begin{align}
    \lVert X_{i\cdot} \rVert = \left( \sum_{k=1}^d X_{ik}^2 \right)^\frac{1}{2} = \sqrt{\rho_n} \left( \sum_{k=1}^d \lambda_k^0 U_{ik}^2 \right)^\frac{1}{2} \leq \sqrt{\rho_n \lambda_1^0 \frac{d}{n}} \left(\max_{i} \frac{n}{d}\lVert U_{i\cdot} \rVert^2 \right)^\frac{1}{2} \leq  \sqrt{\phi a \rho_n d}. \label{eqn:norm-x_i}
\end{align}

Consider the absolute difference between the computed loss and the oracle loss
\begin{align}
    &\left\lvert L(A, \hat{P}^{(d)}) - L(A) \right\rvert
    \leq& \sum\limits_{(i,j) \in \S^c} \max_{(i,j) \in \S^c}\left\lvert \hat{P}^{(d)}_{ij} - P_{ij} \right\rvert \left( 2A_{ij} + 2P_{ij} + \max_{(i,j) \in \S^c}\left\lvert\hat{P}^{(d)}_{ij} - P_{ij}\right\rvert \right).\label{eqn:rdpg_lossdiff}
\end{align}



We bound $\left\lvert \hat{P}^{(d)}_{ij} - P_{ij} \right\rvert$ on the test set $\S^c$. Let $i \in S_p$ and $j \in S_q$ for some $1 \leq p < q \leq s$. Then from Theorem \ref{theorem:theo-stitch-mat}, there exist $W_p, W_q \in \O_d$ such that
\begin{align}
     \left\lvert \hat{P}^{(d)}_{ij} - P_{ij} \right\rvert &= \left\lvert \left(W_p\hat{X}_{p,i\cdot}\right)^\top \left(W_q\hat{X}_{q,j\cdot}\right) - X_{i \cdot}^\top X_{j \cdot} \right\rvert\nonumber\\
    &\leq \left\lvert \left(W_p\hat{X}_{p,i\cdot} - X_{i \cdot}\right)^\top \left(W_q\hat{X}_{q,j\cdot} -  X_{j \cdot}\right) \right\rvert \nonumber\\
    &\hspace{1cm} + 
    \left\lvert X_{i \cdot}^\top \left(W_q\hat{X}_{q,j\cdot} - X_{j \cdot}\right) \right\rvert + \left\lvert \left(W_p\hat{X}_{p,i\cdot} - X_{i \cdot}\right)^\top X_{j \cdot} \right\rvert\nonumber\\
    &\leq \left\lVert W_p \hat{X}_{p,i\cdot} - X_{i \cdot}\right\rVert \left\lVert W_q \hat{X}_{q,j\cdot} - X_{j \cdot} \right\rVert \nonumber \\
    &\hspace{1cm} + 
    \left\lVert X_{i\cdot} \right\rVert \left\lVert W_q \hat{X}_{q,j\cdot} - X_{j \cdot} \right\rVert + \left\lVert W_p \hat{X}_{p,i\cdot} - X_{i \cdot} \right\rVert \left\lVert X_{j\cdot} \right\rVert \nonumber \\
    &= O_p\left(\sqrt{\rho_n} \frac{\log^2{(o+m)}}{\sqrt{(o+m) \rho_n}} \right) \ \mbox{(From Theorem \ref{theorem:theo-stitch-mat} and \eqref{eqn:norm-x_i})} \nonumber\\
    &= O_p\left(\frac{\log^2{(o+m)}}{\sqrt{(o+m)}}\right).\label{eqn:rdpg-p-d}
\end{align}

Since $1 \leq p < q \leq s$ are arbitrary and the bound in Theorem \ref{theorem:theo-stitch-mat} is uniform over all subsets $S_{0q}$, we have
\begin{align}\label{eqn:rdpg_prob_bound}
    \max_{(i,j) \in \S^c} \left\lvert\hat{P}^{(d)}_{ij} - P_{ij}\right\rvert = O_p\left( \frac{\log^2(o+m)}{\sqrt{o+m}} \right).
\end{align}

Continuing from (\ref{eqn:rdpg_lossdiff}), we have
\begin{align}
    \left\lvert L(A, \hat{P}^{(d)}) - L(A) \right\rvert
    &= O_p \left( \lvert \S^c \rvert\rho_n \frac{\log^2(o+m)}{ \sqrt{o+m}} \right) = O_p\left(s(s-1)m^2 \rho_n \frac{\log^2(o+m)}{ \sqrt{o+m}} \right).
\end{align}

\paragraph*{Case 2: ($\d = d-1$)}
Recall that $P^{(\d)}$ is the rank $\d$ truncated version of $P$, defined as
\begin{align*}
    P^{(\d)} = U_{\cdot [\d]} \Lambda_{[\d]} U_{\cdot [\d]}^\top = X_{\cdot [\d]} X_{\cdot [\d]}^\top,
\end{align*}
where $P = U \Lambda U^\top$ is the eigen decomposition of $P$ and $\Lambda = \rho_n \Lambda^{0}$. Then we have
\begin{align}
    \left\lvert \hat{P}_{ij}^{(d-1)} - P_{ij} \right\rvert &= \left\lvert \hat{P}_{ij}^{(d-1)} - P_{ij}^{(d-1)} + P_{ij}^{(d-1)} - P_{ij} \right\rvert \nonumber\\
    &\geq \left\lvert P_{ij} - P_{ij}^{(d-1)} \right\rvert - \left\lvert \hat{P}_{ij}^{(d-1)} - P_{ij}^{(d-1)} \right\rvert \nonumber\\
    &= I - II, \ \mbox{(say)}. 
\end{align}

From Lemma \ref{lemma:rdpg_php}, there exists $\Delta_{d-1}$
\begin{align}
    I = \left\lvert P_{ij} - P_{ij}^{(d-1)} \right\rvert = \rho_n \left\lvert P_{ij}^0 - P_{ij}^{0(d-1)} \right\rvert \geq \phi^\prime \rho_n \ \mbox{ for all } (i,j) \in \Delta_{d-1} \cap S^c. \label{eq:rdpg-t1}
\end{align}

Next, we consider
\begin{align}
    \left\lvert \hat{P}_{ij}^{(d-1)} - P_{ij}^{(d-1)} \right\rvert &= \left\lvert \left(W_p \hat{X}_{p, i \cdot}\right)_{[d-1]}^\top \left(W_q \hat{X}_{q, j \cdot}\right)_{[d-1]} - X_{i [d-1]}^\top X_{j [d-1]}  \right\rvert \nonumber\\
    &\leq \left\lVert \left(W_p \hat{X}_{p, i \cdot}\right)_{[d-1]} - X_{i [d-1]} \right\rVert \left\lVert \left(W_q \hat{X}_{q, j \cdot}\right)_{[d-1]} - X_{j [d-1]} \right\rVert \nonumber\\
    & \hspace{1cm} + \left\lVert X_{i[d-1]} \right\rVert \left\lVert \left(W_q \hat{X}_{q, j \cdot}\right)_{[d-1]} - X_{j [d-1]} \right\rVert \nonumber\\
    &\hspace{1cm} + \left\lVert \left(W_p \hat{X}_{p, i \cdot}\right)_{[d-1]} - X_{i [d-1]} \right\rVert \left\lVert X_{j[d-1]} \right\rVert \nonumber\\
    &\leq \left\lVert W_p \hat{X}_{p, i \cdot} - X_{i \cdot} \right\rVert \left\lVert W_q \hat{X}_{q, j \cdot} - X_{j \cdot} \right\rVert \nonumber\\
    & \hspace{1cm} + \left\lVert X_{i \cdot} \right\rVert \left\lVert W_q \hat{X}_{q, j \cdot} - X_{j \cdot} \right\rVert + \left\lVert W_p \hat{X}_{p, i \cdot} - X_{i \cdot} \right\rVert \left\lVert X_{j \cdot} \right\rVert \nonumber\\
    &= O_p\left(\frac{\log^2{(o+m)}}{\sqrt{(o+m)}}\right). \ \mbox{(from \eqref{eqn:rdpg-p-d})} \label{eq:rdpg-t2}
\end{align}

Thus,
\begin{align}
    \min_{(i,j) \in \Delta_{d-1} \cap \S^c} \left\lvert \hat{P}^{(d-1)}_{ij} - P_{ij} \right\rvert = \Omega_p\left( \rho_n \right) - O_p\left( \frac{\log^2(o+m)}{\sqrt{o+m}} \right) = \Omega_p\left( \rho_n \right),
\end{align}
as $\rho_n = \Omega\left( \frac{\log^{2 + a_0/2}(n)}{\sqrt{n}} \right)$ and $o+m \asymp n$.

Using Bernstein's inequality, the following holds with very high probability, where the expectation is over the test entries
\begin{align}
    L(A, \hat{P}^{(d-1)}) - L(A) &\geq \frac{1}{2} E_{\S^c}\left[ L\left(A, \hat{P}^{(d-1)}\right) - L(A) \right] \nonumber\\
    &= \frac{1}{2} \sum_{(i,j) \in \S^c} E_{\S^c}\left[ \left(A_{ij} - \hat{P}_{ij}^{(d-1)}\right)^2 - (A_{ij} - P_{ij})^2 \right] \nonumber\\
    &= \frac{1}{2} \sum_{(i,j) \in \S^c} E_{\S^c}\left[ A_{ij}^2 - 2 A_{ij}\hat{P}_{ij}^{(d-1)} + \left(\hat{P}_{ij}^{(d-1)}\right)^2 - A_{ij}^2 + 2 A_{ij} P_{ij} - P_{ij}^2 \right] \nonumber\\
    &= \frac{1}{2} \sum_{(i,j) \in \S^c} \left[ -2 P_{ij} \hat{P}_{ij}^{(d-1)} + \left(\hat{P}_{ij}^{(d-1)}\right)^2 + 2 P_{ij}^2 - P_{ij}^2 \right] \nonumber\\
    &= \frac{1}{2} \sum\limits_{(i,j) \in \S^c} \left( \hat{P}^{(d-1)}_{ij} - P_{ij} \right)^2 \nonumber\\
    &\geq \frac{1}{2} \sum\limits_{(i,j) \in \Delta_{d-1} \cap \S^c} \left( \hat{P}^{(d-1)}_{ij} - P_{ij} \right)^2 \nonumber\\
    &= \Omega_p\left( s(s-1)m^2 \rho_n^2 \right).
\end{align}

\paragraph*{Case 3: $(\d \leq d-2)$}
For $\d \leq d-2$, a similar argument can be used as in the $\d = d-1$ case, the only difference being the bounds in \eqref{eq:rdpg-t1}, \eqref{eq:rdpg-t2} and the subsequent terms differing by a factor of $d$. Since $d$ is fixed in terms of $n$, the asymptotic bounds for this case are the same as $\d = d-1$ case.

\end{appendix}


\begin{appendix}
\setcounter{equation}{0}
\setcounter{algorithm}{0}
\setcounter{lemma}{0}
\setcounter{theorem}{0}

\def\thesection{C}
\def\theequation{C\arabic{equation}}
\def\thealgorithm{C\arabic{algorithm}}
\def\thelemma{C\arabic{lemma}}
\def\thetheorem{C\arabic{theorem}}
\def\thefigure{C\arabic{figure}}

\section{Additional Numerical Results for Small Networks}\label{supp:numerical:small}
In this section, we compare the performance of \code{NETCROP} on small networks with the competing algorithms \code{NCV} and \code{ECV}. We use the problem of selecting the number of communities and the presence of degree correction in blockmodels for this comparison. We generated $100$ networks from two true models --- SBM and DCBM, both with $n \in \{200, 300, 400, 500\}$ nodes and $K = 3$ communities. We followed a similar network generation mechanism as in the SBM and DCBM examples for large networks with the out-in ratio $\beta = 0.3$ and $\alpha$ chosen in a way so that the expected average degree is approximately $n/4$. We applied \code{NETCROP} with $p_{test} = 10 \%$ that resulted in the number of subgraphs $s = 3$ for all examples in this section and $o = 113, 168, 223,$ and $278$ for $n = 200, 300, 400,$ and $500$, respectively. We applied \code{NETCROP} with $R \in \{1, 3, 5\}$ repetitions. We also applied \code{NCV} with $3$ folds and \code{ECV} with $3$ folds and $10\%$ held-out set. We applied the stabilized versions of \code{NCV} and \code{ECV} with $R = 20$ repetitions, as recommended by their authors. We used a candidate set of $\{SBM, DCBM\} \times [5]$ for all applications. Figure \ref{fig:small-plot} contains model selection accuracy, logarithm of mean runtime in seconds and RAM usage in MebiByte (MiB), measured with R package peakRAM \citep{quinn2017peakRAM}, averaged across $100$ simulations, plotted against the network size for each variant of the algorithms. The top row of the figure corresponds to the SBM case and the bottom row corresponds to the DCBM case. All algorithms for these simulations were run serially with $1$ processor and $8$ gigabytes of RAM.

\begin{figure}[!ht]
    \centering
    \includegraphics[width=\linewidth]{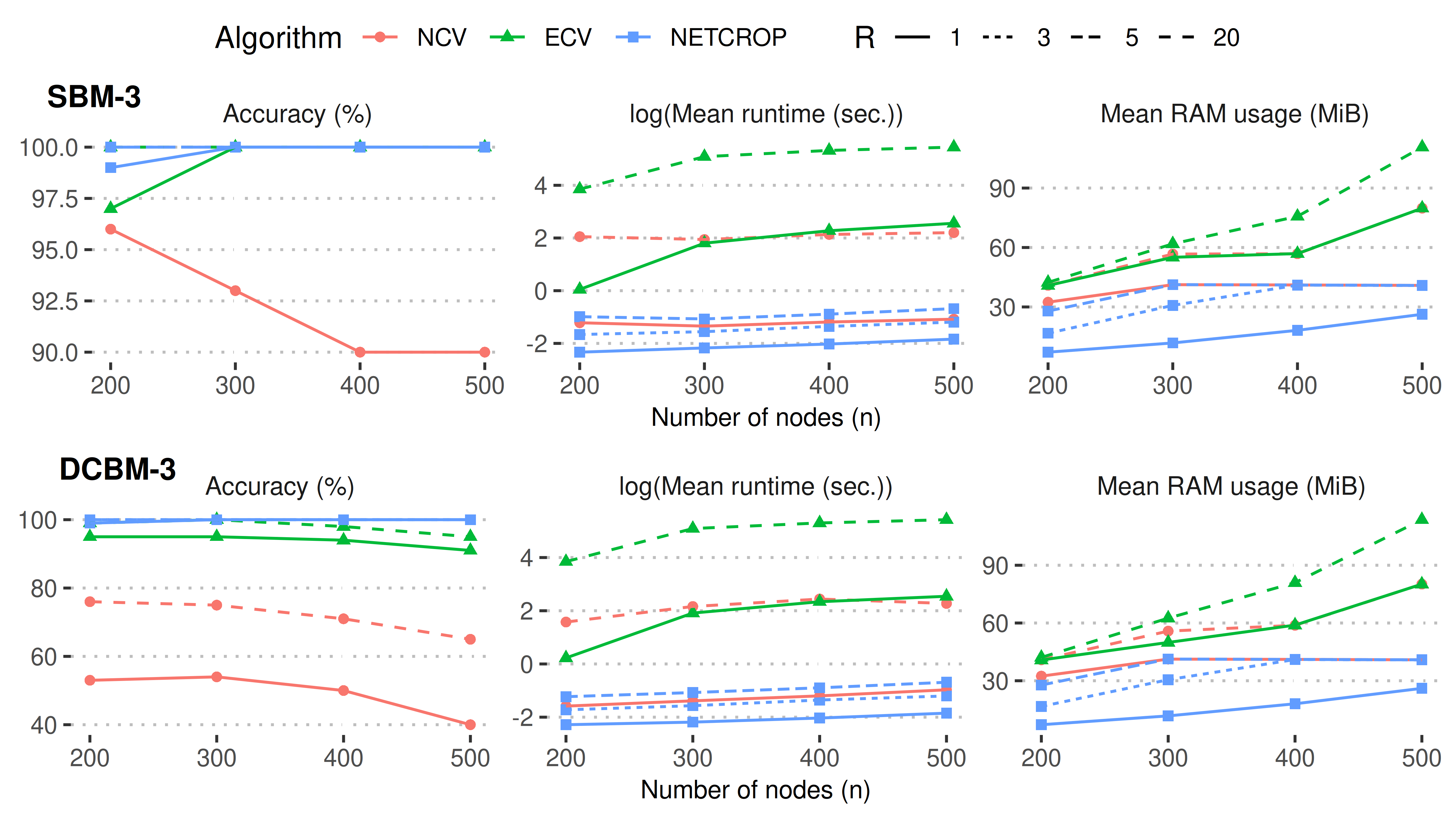}
    \caption{Accuracy, logarithm of mean runtime in seconds and mean RAM usage in MebiByte (MiB) of \code{NETCROP}, \code{NCV} and \code{ECV} against the network size for small networks. The networks in the top and bottom rows were generated from SBM and DCBM with $K = 3$ communities, respectively. \code{NETCROP} was applied with $R \in \{1, 3, 5\}$ repetitions, and \code{NCV} and \code{ECV} with $R = 1$ and $R = 20$ (stabilized).}
    \label{fig:small-plot}
\end{figure}

From Figure \ref{fig:small-plot}, we observe that \code{NETCROP} with $R \in \{1, 3, 5\}$ repetitions has near $100 \%$ accuracies in all cases for all network sizes. The accuracies of \code{ECV} and stabilized \code{ECV} are close to $100 \%$ for the SBM case for $n \geq 300$, while for DCBM it is slightly less than $100 \%$ with slight improvement with stabilization. 
For the SBM examples, \code{NCV} has accuracies in the $90 - 96\%$ range that goes up to $100\%$ with stabilization. However, for the DCBM cases, both \code{NCV} and its stabilized version had significantly lower accuracies (approximately $40 - 80 \%$).
\code{NETCROP} used significantly less computation resources in terms of runtime and RAM usage in comparison with \code{NCV} and \code{ECV}. Although \code{NCV} with $R = 1$ used similar resources as \code{NETCROP} with $R = 5$, \code{NCV} had poorer accuracies. \code{NCV} with stabiliaztion and both variants of \code{ECV} were several magnitudes slower while using more RAM than all variants of \code{NETCROP}. The gains in computation time and RAM usage for \code{NETCROP} are observed to increase with the network size. Overall, \code{NETCROP} outperformed its competing algorithms while being several times faster and more resourceful even for small networks.

\end{appendix}

\end{document}